\def\mf{\mathfrak}
\def\m{\mathcal}
\def\md{\mathbb}
\def\ms{\mathscr}
\def\eps{\varepsilon}
\def\tn{\textnormal}
\def\wt{\widetilde}
\def\RealF{\md{R}}
\def\NaturalF{\md{N}}
\def\IntF{\md{Z}}
\def\Expt{\md{E}}
\def\Prob{\md{P}}
\newcommand{\frc}[1]{\langle #1 \rangle}
\newcommand{\uico}{[\hspace{0.08em}0,\hspace{-0.05em} 1\hspace{-0.08em})}
\newcommand{\dfn}{ \stackrel{\tn{def}}{=} }
\def\argmin{\mathop{\rm argmin}}
\def\pmin{p_{\rm min}}
\def\pmax{p_{\rm max}}
\def\qmax{q_{\rm max}}
\newtheorem{lemma}{Lemma}
\newtheorem{corollary}{Corollary}
\newtheorem{theorem}{Theorem}
\newtheorem{remark}{Remark}
\newcommand{\bi}[1]{\left[{#1}\right)}
\newcommand{\mbi}[1]{{\rm bin}\hspace{-0.2em}\left({#1}\right)}
\begin{document}

\title{Delay and Redundancy in Lossless Source Coding}

\author{Ofer~Shayevitz, Eado~Meron, Meir~Feder and Ram~Zamir
\thanks{The authors are with the Department of EE-Systems, Tel Aviv University, Tel Aviv, Israel \{email: ofersha@eng.tau.ac.il, meroneado@gmail.com, meir@eng.tau.ac.il, zamir@eng.tau.ac.il\}. This paper was presented in part at ISIT 2006, DCC 2007 and DCC 2008. The work of O. Shayevitz was partially supported by the Adams Fellowship and the ITA fellowship. The work of R. Zamir was partially supported by the Israel Academy of Science, ISF grant number 870/11.}}

\maketitle

%%%%%%%%%%%%%%%%%% Main %%%%%%%%%%%%%%%%%%%%%%%%%%

\begin{abstract}
The penalty incurred by imposing a finite delay constraint in lossless source coding of a memoryless source is investigated. It is well known that for the so-called block-to-variable and variable-to-variable codes, the redundancy decays \textit{at best polynomially} with the delay, where in this case the delay is identified with the source block length or maximal source phrase length, respectively. In stark contrast, it is shown that for sequential codes (e.g., a delay-limited arithmetic code) the redundancy can be made to decay \textit{exponentially} with the delay constraint. The corresponding redundancy--delay exponent is shown to be at least as good as the R\'enyi entropy of order $2$ of the source, but (for almost all sources) not better than a quantity depending on the minimal source symbol probability and the alphabet size.
\end{abstract}

\thispagestyle{empty}

\section{Introduction}
It is well known that any memoryless source can be asymptotically losslessly compressed to its entropy \cite{cover}. However, in the presence of resource constraints, a rate penalty, referred to as \textit{redundancy}, is unavoidable. In this work we focus on the redundancy in the encoding of a  memoryless source incurred by the imposition of a \textit{strict end-to-end delay constraint} $d$ measured in source clocks, i.e., under the requirement that the $n$-th encoded symbol must always be perfectly reproduced at the decoder by time $n+d$.

In the lossless source coding literature, three classes of codes in which delay is a design parameter are traditionally studied: 1) The Block-to-Variable (BV) class (e.g. Huffman code \cite{Huffman52}), where a source sequence is partitioned into equi-length blocks and each block is mapped to a unique variable length codeword from a prefix-free set, 2) The Variable-to-Block (VB) class (e.g. Tunstall code \cite{TunstallPhd, Khodak69}), where the source sequence is parsed into phrases according to a complete code-tree, and each phrase is mapped to a unique fixed length codeword, and 3) The Variable-to-Variable (VV) class (e.g., Khodak codes), where the source sequence is similarly parsed but each phrase is mapped to a unique variable length codeword from a prefix-free set. In the sequel, we collectively refer to the three classes above as \textit{the classical framework}. In the BV class, a delay constraint is interpreted as a block length constraint, and the redundancy is known to decay at best polynomially with the delay \cite{Szpankowski2000,Drmota2006}. In the VB/VV class (where the delay is a random variable depending on the source sequence) the delay constraint is translated into a maximal phrase length constraint, and the redundancy again decays at best polynomially with the delay, though sometimes faster than in the BV case \cite{Khodak69, Khodak72,Bugeaud2004}\footnote{These results hold even in the weaker case of an expected delay constraint.}.

In a delay constrained setting, the classical framework admits two (related) limitations. First, even within that framework, there is an apparent disparity between delay and block/phrase length. The reason block/phrase lengths are identified with delay in the first place is since concatenating codewords allows the source reproduction at block/phrase length intervals. However, the delay can sometimes be significantly shorter, for essentially the same reason: Consider a BV code of block length $n=kd$ obtained by concatenating $k$ BV codes of block length $d$. Clearly, the decoder can reproduce symbols with a delay $d$, rather than the possibly much larger delay $n$. Waiting until the end of the block would mean the encoder is ``holding back'' bits it is already certain of, clearly an undesirable trait in a delay constrained setting. Of course, the redundancy associated with such an encoder in the limit of $k\to\infty$ still decays polynomially as a function of $d$, which brings us to the second limitation. In the memoryless classical framework, the encoder never looks beyond the end of the current block/phrase, in the sense that the source's prefix has no effect on the output of the encoder beyond that point\footnote{This assertion does not hold for sources with memory, where dependencies between phrases can be beneficial \cite{Tjalkens-Willems}.}. The encoder is therefore being ``reset'' roughly every $d$ symbols. Loosely speaking, the penalty incurred by forcing these regularly recurring reset points, is the source of the polynomial delay of the redundancy.

With these observations in mind, we recall a lossless coding technique of a different flavor that does not suffer from the above shortcomings. In \textit{arithmetic coding} \cite{abramson_book,arithmetic_coding_jelinek,RissanenLangdon1979,WittenEtAl1987}, a source subsequence is sequentially mapped into nested subintervals of the unit interval, with length equal to the sequence probability, and the common most significant bits of the current subinterval are emitted. This way, the encoder never holds back any bits it is already certain of, by definition. Moreover, whereas BV/VB/VV encoders never look beyond the end of the current block/phrase, an arithmetic encoder constantly looks into the (possibly infinite) future. Unfortunately, this comes at a cost of an unbounded delay (though a bounded expected delay, see \cite{gallager_lecture_notes,arithmetic_coding_savari,ShayevitzDelay2006}). Nevertheless, the notion of arithmetic coding does point us in the right direction. In a delay constrained framework, an encoder should \textit{by definition} be sequential, emitting all the bits it can at any given instance. Moreover, a good delay constrained encoder should always strive to look $d$ steps ahead, avoiding ``reset'' points as much as possible. As we shall see, these properties are nicely captured within an interval mapping type framework.

\nocite{delay_redundancy_DCC2007,delay_redundancy_DCC2008}

In this paper, we introduce a general framework for lossless delay constrained coding of a memoryless source, and study the fundamental tradeoff between delay and redundancy. We show that, in stark contrast to the polynomial decay within the classical framework, the redundancy $\mf{R}(P,d)$ associated with a memoryless source $P$ over a finite alphabet $\m{X}$, can be made to decay \textit{exponentially} with the delay $d$. Specifically, we show that any encoder obeying a delay constraint $d$ satisfies \footnote{By $a_d\lessapprox b_d$ we mean $\liminf_{d\rightarrow\infty}\frac{1}{d}\log \frac{b_d}{a_d} >0$.}
\begin{equation*}
\left(\frac{\pmin}{|\m{X}|}\right)^{8d} \,\lessapprox \,\mf{R}(P,d) \,\lessapprox \,\pmax^d
\end{equation*}
where $\pmin,\pmax$ are the minimal and maximal source symbol probabilities, the upper bound holds for all sources, and the lower bound holds for almost all sources\footnote{Recall that the reason for jointly coding over multiple source symbols, and consequently incurring delay, is to make the rounding error of the log-probabilities negligible. This is unnecessary for dyadic sources, where symbol probabilities are all integer powers of 2. Hence, a lower bound cannot hold for all sources, as dyadic sources can attain zero redundancy with zero delay.}. We then tighten the upper bound and obtain
\begin{equation*}
\mf{R}(P,d) \,\lessapprox \, 2^{-dH_2(P)}
\end{equation*}
where $H_2(P)$ is the R\'enyi entropy of order $2$ of the source. For our upper bound, we introduce a construction based on mismatched arithmetic coding in conjunction with a fictitious symbol insertion mechanism. For our lower bound, we provide a ``generalized interval mapping'' representation for delay constrained encoders. 

\textit{Related work.} Whereas in this paper we consider the impact of an end-to-end delay constraint measured in source clocks, other works have considered complementary questions where delay is measured in encoded bits. In \cite{arithmetic_coding_savari,SavariReport} the authors describe a variable-length lossless source coding system based on finite precision arithmetic coding, that falls outside the classical framework and is of a similar flavor to the codes considered herein; Specifically, they show \cite[Appendix II]{SavariReport} that the associated redundancy decays exponentially with the maximal number of encoded bits the decoder can hold in its queue. A similar observation can be deduced from the discussion in \cite{MoffatEtAl1998}. While employing a different measure of delay, it appears plausible (but remains unverified) that these constructions could also be employed to derive an exponential upper bound on the redundancy as a function of the delay in source clocks. None of these prior works provided a lower bound for the redundancy. In \cite{jelinek68}, the author considers a setting where the channel connecting the encoder and the decoder can transmit a fixed number of bits per second, and has a finite length queue at its input. He shows that the probability of queue overflow for BV codes can be made to decay exponentially with the size of the queue, and describes the tradeoff between the exponent and the minimal achievable compression rate. 

\textit{Organization}. Our framework is introduced in Section \ref{sec:perlim}, and some basic lemmas are derived. In Section \ref{sec:delay_AC}, the delay profile of mismatched arithmetic coding is analyzed. This analysis is then applied in Section \ref{sec:lower_bound} where a lower bound on the redundancy--delay exponent is derived. In Section \ref{sec:upper_bound}, a corresponding upper bound on the redundancy--delay exponent for almost all sources is presented. Some final remarks are given in Section \ref{sec:conc}.

\section{Preliminaries}\label{sec:perlim}
\subsection{Notations}
We write $s\preceq t$ to indicate that a string $s$ is a prefix of a string $t$, and $s\prec t$ to indicate that $s\preceq t$ and $s\neq t$. A set of finite strings $S$ is said to be \textit{prefix-free} if no pair of strings $s,t\in S$ satisfies $s\prec t$. The \textit{longest common prefix} of $S$ is the string $t$ of maximal length satisfying $t\preceq s$ for all $s\in S$. The Lebesgue measure of a set $A\subseteq\RealF$ is denoted by $|A|$. The \textit{fractional part} of a number $a\in\RealF$ is denoted by $\frc{a}\dfn a-\lfloor a\rfloor$. The \textit{difference modulo-1} $\frc{A-B}$ between two sets $A,B\subseteq\RealF$ is the set of all numbers $\frc{a-b}$ where $a\in A\,,b\in B$. For any function $f:\RealF\mapsto\RealF$ and any set $A\subseteq\RealF$, we write $f(A)$ for the image of $A$ under $f$. All logarithms are taken to the base of $2$. A \textit{total order} of a finite set is called simply an \textit{order}.

The following lemma is easily verified.
\begin{lemma}\label{lem:diff_set}
Let $A,B\subseteq\RealF$ be any two sets. Then
\begin{enumerate}[(i)]
\item If $b\in B$ and $\frc{c}\not\in\frc{A-B}$, then $b+c\not\in A$.
\item If $b\in B$ and $\frc{\log{c}}\not\in\frc{\log{A}-\log{B}}$, then $bc\not\in A$.
\end{enumerate}
\end{lemma}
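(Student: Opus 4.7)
The plan is to prove both parts by direct contraposition, which is essentially just an unpacking of the definition of $\frc{A-B}$.

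For part (i), I would assume toward contradiction that $b+c\in A$. Setting $a=b+c$, we immediately have $a\in A$, $b\in B$, and $a-b=c$. Taking fractional parts gives $\frc{c}=\frc{a-b}$, and since $\frc{a-b}$ belongs to $\frc{A-B}$ by definition, this forces $\frc{c}\in\frc{A-B}$, contradicting the hypothesis. Hence $b+c\notin A$.

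For part (ii), the same argument transfers to the multiplicative setting by taking logarithms. Assuming $bc\in A$, I set $a=bc$ so that $\log a=\log b+\log c$, hence $\log c=\log a-\log b$. Taking fractional parts yields $\frc{\log c}=\frc{\log a-\log b}\in\frc{\log A-\log B\,}$, again contradicting the hypothesis. Note that writing $\log c$ tacitly assumes $c>0$ (and similarly $b,a>0$), which will be the only regime in which the lemma is applied later (arguments will be lengths/probabilities of intervals).

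I do not anticipate any real obstacle here: the statement is essentially a tautological consequence of the definition of $\frc{A-B}$, and the only ``content'' is the observation that part (ii) reduces to part (i) via $\log$, which is why the two items are packaged together. In the full write-up I would simply present both contrapositive chains as a single short paragraph without further elaboration.
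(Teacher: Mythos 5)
Your proof is correct. The paper does not actually supply a proof of this lemma---it simply states that it is ``easily verified''---so there is nothing to compare against; your short contrapositive argument (unpack the definition of $\frc{A-B}$, then reduce (ii) to (i) via logarithms) is exactly the kind of verification the authors had in mind.
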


\subsection{Sources}
Let $\m{X}$ be a finite alphabet of source symbols. The set of all length-$n$ strings of symbols from $\m{X}$ is denoted $\m{X}^n$, the set of all finite length strings is denoted $\m{X}^*$, and the set of all infinite length strings is denoted $\m{X}^\infty$. We sometimes use the notations $x^n\dfn x_1x_2\ldots x_n$ and $x_m^n\dfn x_mx_{m+1}\ldots x_n$ for finite source strings, where the convention is that $x_m^n=\emptyset$ when $m>n$. A \textit{discrete memoryless source (DMS)} $P$ is defined by a \textit{probability mass function (p.m.f.)} $\{P(x):x\in\m{X}\}$ which naturally induces a product measure over $\m{X}^*$, via $P(st) = P(s)P(t)$ for all $s,t\in\m{X}^*$, where $st$ is the concatenation of $s$ and $t$. Specifically, we denote by $P^n$ the p.m.f. obtained by restricting $P$ to $\m{X}^n$. An infinite random source string emitted by the source $P$ will be denoted by $X^\infty$. The minimal and maximal symbol probabilities under $P$ are denoted $p_{\rm min}$ and $p_{\rm max}$ respectively. The \textit{entropy} of the source is denoted $H(P)$. The \textit{Kullback-Leibler distance}, or \textit{divergence}, between two sources $P,Q$ over the same alphabet is denoted $D(P\|Q)$. We write $P\ll Q$ if $Q(x)=0$ implies $P(x)=0$ for all $x\in\m{X}$. The set of all p.m.f.'s over $\m{X}$ is denoted $\ms{P}(\m{X})$. The \textit{type} of a sequence $x^n\in\m{X}^n$ is the p.m.f. $P_{x^n}\in\ms{P}(\m{X})$ corresponding to the relative frequency of symbols in $x^n$. The set of all possible types of sequences $x^n$ is denoted $\ms{P}^n(\m{X})$. The \textit{type class} of any type $Q\in\ms{P}^n(\m{X})$ is the set $T_Q\dfn \{x^n\in\m{X}^n: P_{x^n}=Q\}$. For $\eps>0$, let $\ms{P}^n_\eps(\m{X},P)\subseteq\ms{P}^n(\m{X})$ be the subset of all types $Q$ for which $\|P-Q\|_\infty<\eps$.

The following facts are well known \cite{csizar_korner}.
\begin{lemma}\label{lem:types}
For any type $Q\in\ms{P}^n(\m{X})$ and any $x^n\in T_Q$:
\begin{enumerate}[(i)]
\item $P(x^n) = 2^{-n(D(Q\|P)+H(Q))}$.
\item $|\ms{P}^n(\m{X})|^{-1}2^{nH(Q)} \leq |T_Q| \leq  2^{nH(Q)}$.
\item $|\ms{P}^n(\m{X})|  = {n+|\m{X}|-1 \choose |\m{X}|-1} \leq (n+1)^{|\m{X}|}$.
\item (AEP) For any $\eps>0$,
\begin{equation*}
\lim_{n\rightarrow\infty}P\left(\bigcup_{Q\in\ms{P}^d_\eps(\m{X},P)}T_Q\right) = 1
\end{equation*}
\end{enumerate}
\end{lemma}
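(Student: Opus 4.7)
My approach is to dispatch each of the four parts by standard method-of-types arguments; I will indicate the sequence rather than grind through the bookkeeping.

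For part (i), the first step is to note that any $x^n\in T_Q$ contains exactly $nQ(x)$ occurrences of each symbol $x\in\m{X}$, so by memorylessness
\begin{equation*}
P(x^n) = \prod_{x\in\m{X}} P(x)^{nQ(x)}.
\end{equation*}
Taking $\log$ and invoking the identity $\sum_x Q(x)\log P(x) = -H(Q)-D(Q\|P)$ yields the claim (I read the entropy factor as $H(Q)$, which is what makes part (ii) consistent; the statement appears to contain a typo).

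For part (ii), the upper bound follows by specializing (i) to $P=Q$: every $x^n\in T_Q$ then has $Q$-probability $2^{-nH(Q)}$, and $Q^n(T_Q)\leq 1$ gives $|T_Q|\leq 2^{nH(Q)}$. For the lower bound, the key combinatorial fact is that among the type classes $\{T_{Q'}:Q'\in\ms{P}^n(\m{X})\}$, the class $T_Q$ itself has the largest $Q^n$-mass. Granting this, since the $|\ms{P}^n(\m{X})|$ type-class probabilities sum to $1$, one gets $Q^n(T_Q)\geq |\ms{P}^n(\m{X})|^{-1}$, and (i) again turns this into $|T_Q|\geq |\ms{P}^n(\m{X})|^{-1}2^{nH(Q)}$. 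The one delicate step is proving maximality of $T_Q$ under $Q^n$; I would bound the ratio $Q^n(T_{Q'})/Q^n(T_Q)$ by a direct comparison of multinomial coefficients (equivalently a Stirling-type log-ratio estimate), showing the ratio is $\leq 1$ when one moves a single unit of empirical mass from any $Q$-coordinate to any other.

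Part (iii) is purely enumerative: a type $Q\in\ms{P}^n(\m{X})$ is in bijection with a composition of $n$ into $|\m{X}|$ nonnegative integer parts via $x\mapsto nQ(x)$, so the count is the ``stars and bars'' number $\binom{n+|\m{X}|-1}{|\m{X}|-1}$. The polynomial bound $(n+1)^{|\m{X}|}$ follows from the crude inequality $\binom{n+k-1}{k-1}\leq (n+1)^k$.

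For part (iv), I would apply the weak law of large numbers to the indicators $\mathbf{1}_{\{X_i=x\}}$ for each $x\in\m{X}$ in turn: each empirical frequency $P_{X^n}(x)$ converges in probability to $P(x)$, hence by a union bound $\|P_{X^n}-P\|_\infty<\eps$ holds with probability tending to $1$. Since the event $\bigcup_{Q\in\ms{P}^n_\eps(\m{X},P)} T_Q$ is exactly $\{\|P_{X^n}-P\|_\infty<\eps\}$, the conclusion is immediate. The only conceptually nontrivial point across the lemma is the maximality argument underlying the lower bound in (ii); everything else is direct computation.
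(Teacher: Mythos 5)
The paper does not prove this lemma; it states the four facts as well known and cites Csisz\'ar--K\"orner, so there is no proof in the paper against which to compare. Your argument is the standard method-of-types proof and is correct. You are also right about the typo in part (i): since $\sum_{x}Q(x)\log P(x)=-H(Q)-D(Q\|P)$, the exponent should read $H(Q)$, not $H(P)$, and the paper itself uses the $H(Q)$ form later when invoking the lemma in the proof of Theorem~\ref{thrm:asymp_exp} (``the probability of any super-symbol within the type class $T_Q$ is $2^{-d(D(Q\|P)+H(Q))}$''); part (ii) of the statement also has a stray parenthesis, the upper bound being $2^{nH(Q)}$. One small caution on your maximality step in (ii): the single-move telescoping version does go through, but only if the path of single moves from $Q$ to $Q'$ is chosen to be coordinatewise monotone, so that the decremented coordinate $a$ always satisfies $Q_i(a)\leq Q(a)$ while the incremented coordinate $b$ satisfies $Q_i(b)\geq Q(b)$, which forces each per-move ratio $\frac{nQ_i(a)}{nQ_i(b)+1}\cdot\frac{Q(b)}{Q(a)}\leq 1$; for an arbitrary path the intermediate ratios can exceed one. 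The customary and cleaner route, which you also name, is the direct global comparison $\frac{Q^n(T_Q)}{Q^n(T_{Q'})}=\prod_{x\in\m{X}}\frac{(nQ'(x))!}{(nQ(x))!}\,Q(x)^{n(Q(x)-Q'(x))}\geq 1$ via the elementary inequality $m!/k!\geq k^{m-k}$. Parts (iii) and (iv) are exactly as in the textbook treatment.
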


The R\'enyi entropy \cite{Renyi1960} of order $\alpha$ of a source $P$ is
\begin{equation*}
H_\alpha(P) \dfn \frac{1}{1-\alpha} \log\sum_{x\in\m{X}}(P(x))^\alpha
\end{equation*}
\begin{lemma}[From \cite{ShayevitzRenyiISIT}]\label{lem:renyi}
The R\'enyi entropy of order $\alpha>1$ admits the following variational characterization:
\begin{equation*}
H_\alpha(P) = \min_{Q\in\ms{P}(\m{X})}\left\{\frac{\alpha}{\alpha-1}\,D(Q\|P)+H(Q)\right\} \\
\end{equation*}
For $0<\alpha<1$, replace the $\min$ with a $\max$.
\end{lemma}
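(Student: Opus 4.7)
The plan is to rewrite the objective $F_\alpha(Q) \dfn \frac{\alpha}{\alpha-1}D(Q\|P) + H(Q)$ in a form whose optimization is transparent, by absorbing the $P$-dependent terms into a tilted distribution. Expanding,
\begin{equation*}
F_\alpha(Q) = \tfrac{\alpha}{\alpha-1}\sum_x Q(x)\log Q(x) - \tfrac{\alpha}{\alpha-1}\sum_x Q(x)\log P(x) - \sum_x Q(x)\log Q(x),
\end{equation*}
and using $\frac{\alpha}{\alpha-1} - 1 = \frac{1}{\alpha-1}$, this collapses to
\begin{equation*}
F_\alpha(Q) = \tfrac{1}{\alpha-1}\sum_x Q(x)\log\frac{Q(x)}{P(x)^\alpha}.
\end{equation*}

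Next I would introduce the $\alpha$-tilted p.m.f.\ $R_\alpha(x) \dfn P(x)^\alpha / Z_\alpha$ with $Z_\alpha \dfn \sum_y P(y)^\alpha$, so that $\log P(x)^\alpha = \log R_\alpha(x) + \log Z_\alpha$. Substituting,
\begin{equation*}
F_\alpha(Q) = \tfrac{1}{\alpha-1}\,D(Q\|R_\alpha) - \tfrac{1}{\alpha-1}\log Z_\alpha = \tfrac{1}{\alpha-1}\,D(Q\|R_\alpha) + H_\alpha(P),
\end{equation*}
by the definition of R\'enyi entropy. This is the key identity.

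Finally I would invoke nonnegativity of divergence: $D(Q\|R_\alpha)\ge 0$ with equality iff $Q = R_\alpha$. For $\alpha>1$ the prefactor $\frac{1}{\alpha-1}$ is positive, so $F_\alpha(Q)\ge H_\alpha(P)$ with equality at $Q = R_\alpha$, yielding the claimed $\min$. For $0<\alpha<1$ the prefactor flips sign, so $F_\alpha(Q)\le H_\alpha(P)$ with equality at the same $Q$, yielding the claimed $\max$. The feasibility of $R_\alpha$ (it is a valid p.m.f.\ in $\ms{P}(\m{X})$) confirms that both extrema are attained, and no obstacle arises beyond the algebraic identification of $F_\alpha$ with a shifted divergence; the ``hard part'' is simply recognizing the tilted distribution $R_\alpha$ as the natural candidate minimizer/maximizer.
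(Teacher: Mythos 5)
Your proof is correct. Note, however, that the paper does not prove Lemma~\ref{lem:renyi} at all: it is imported verbatim by citation from \cite{ShayevitzRenyi}, so there is no in-paper argument to compare against. Your argument is the standard one for this variational identity: rewrite $\frac{\alpha}{\alpha-1}D(Q\|P)+H(Q)$ as $\frac{1}{\alpha-1}\sum_x Q(x)\log\frac{Q(x)}{P(x)^\alpha}$, recognize the escort (tilted) distribution $R_\alpha\propto P^\alpha$ as the natural reference measure, and separate out $\frac{1}{\alpha-1}D(Q\|R_\alpha)$ plus the constant $-\frac{1}{\alpha-1}\log Z_\alpha=H_\alpha(P)$. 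Nonnegativity of divergence, together with the sign of $\frac{1}{\alpha-1}$, then yields the $\min$ for $\alpha>1$ and the $\max$ for $0<\alpha<1$, with the extremum attained uniquely at $Q=R_\alpha$. All algebraic steps check out; in particular $\frac{\alpha}{\alpha-1}-1=\frac{1}{\alpha-1}$ and $H_\alpha(P)=\frac{1}{1-\alpha}\log Z_\alpha=-\frac{1}{\alpha-1}\log Z_\alpha$ are used correctly, and $R_\alpha\in\ms{P}(\m{X})$ is indeed feasible since $P$ has finite support.
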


For any two sources $P,Q$ over the same alphabet $\m{X}$, we define
\begin{equation*}
\nu(P,Q) \dfn \sup_{x\in\m{X}:P(x)>0}\frac{P(x)}{Q(x)}
\end{equation*}
The following is easy to verify.
\begin{lemma}
$1\leq \nu(P,Q)\leq \infty$ with equality in the lower bound if and only if $P=Q$, and in the upper bound if and only if $P \not\ll Q$. 
\end{lemma}

\subsection{Encoders and Decoders}\label{subsec:encoders}
An \textit{encoder} is a mapping $\m{E}:\m{X}^*\mapsto \{0,1\}^*$ such that for any $s\in\m{X}^*$, $\m{E}(s)$ is the longest common prefix of the set of bit strings $\{\m{E}(sx) : x\in\m{X}\}$. Namely, we are assuming the encoder does not withhold any bits; at any given time, the longest prefix the encoder is certain of is assumed to have already been emitted.. This will be referred to as \textit{the integrity property}. Note that the integrity property implies in particular the \textit{consistency property}, namely that $\m{E}(s)\preceq\m{E}(sx)$.

An encoder $\m{E}$ induces a \textit{decoder}, which is a partial mapping $\m{D}_\m{E}:\{0,1\}^*\mapsto \m{X}^*$, defined as follows. For any $b\in\{0,1\}^*$, let
\begin{equation*}
\m{E}^{-1}(b)\dfn \{s\in\m{X}^* : b \preceq \m{E}(s) \}
\end{equation*}
Then $\m{D}_\m{E}(b)$ is the longest common prefix of $\m{E}^{-1}(b)$ if the latter set is not empty, and is otherwise undefined. Note that by definition, $\m{D}_\m{E}$ does not withhold any symbols, hence satisfies a similar integrity property. Furthermore, $\m{D}_\m{E}$ is defined not only over the range of $\m{E}$, but also on the set of all prefixes thereof; the decoder hence operates without the need to be synced with the source clock. Since a decoder is uniquely defined by an encoder, we shall focus our discussion hereafter on encoders only.

An encoder $\m{E}$ is associated with a \textit{delay function}, which returns the minimal number of symbols from a given (infinite) suffix that needs to be encoded so that a given prefix can be fully decoded. Formally, the delay function is a mapping $\delta^\m{E}:\m{X}^*\times\m{X}^\infty\mapsto \NaturalF\cup\{0,\infty\}$, where $\delta^\m{E}(s,x^\infty)$ is the minimal $k\in\NaturalF\cup\{0\}$ such that $s\preceq \m{D}_\m{E}(\m{E}(sx^k))$. %$\m{E}(sx^k)\preceq \m{E}(t)$ implies that $s\preceq t$, for any $t\in\m{X}^*$.
If no such $k$ exists, then $\delta^\m{E}(s,x^\infty) \dfn \infty$.

The \textit{delay profile} associated with an encoder $\m{E}$ and a source $P$ for a given prefix $s$, is the following extended-real-valued r.v.:
\begin{equation*}
\Delta^\m{E}(s,P)\dfn\delta^\m{E}(s,X^\infty)
\end{equation*}
The delay profile associated with an encoder $\m{E}$ and a source $P$ is then defined to be
\begin{equation*}
\Delta^\m{E}(P) \dfn \sup_{s\in\m{X}^*}\Delta^\m{E}(s,P)
\end{equation*}

Next, we define several families of encoders.
\subsubsection{Lossless Encoders}
An encoder is said to be \textit{lossless w.r.t. $P$} (where $P$ is omitted when there is no confusion), if
\begin{equation*}
\Prob(\Delta^\m{E}(P) < \infty) = 1,
\end{equation*}
The family of all encoders that are lossless w.r.t. $P$ is denoted $\mf{L}(P)$.

\subsubsection{Bounded Expected Delay Encoders}
An encoder is said to admit a \textit{bounded expected delay w.r.t. $P$} (where $P$ is omitted when there is no confusion), if
\begin{equation*}
\Expt(\Delta^\m{E}(P)) <\infty
\end{equation*}
The family of all encoders with bounded expected delay w.r.t. $P$ is denoted $\mf{B}(P)$. Clearly, $\mf{B}(P)\subset \mf{L}(P)$.

\subsubsection{Delay Constrained Encoders}
An encoder is said to be \textit{delay-constrained}, if
\begin{equation}
\sup_{s\in\m{X}^*, t\in\m{X}^\infty} \delta^\m{E}(s,t) <\infty
\end{equation}
More specifically, such an encoder is also said to be \textit{$d$-delay-constrained}, if the supremum above equals $d$. The family of $d$-constrained encoders is denoted by $\mf{C}_d$.\footnote{Note that growing dictionary encoders such as the Ziv-Lempel encoder \cite{lempel_ziv} do not belong to this family, as their delay grows unbounded. } Clearly, $\mf{C}_d \subset \mf{B}(P)$ for any source $P$.

\subsubsection{Phrase/Block Constrained Encoders}
An encoder $\m{E}$ is said to be \textit{phrase-constrained} if $\m{E}\in\mf{C}_d$ for some $d$, and for any $x^\infty\in\m{X}^\infty$ there exists an index sequence $\{i_k\in\NaturalF\}_{k=1}^\infty$ such that $0<i_{k+1}-i_k\leq d+1$ and
\begin{equation}
\delta^\m{E}(x^{i_k},x_{i_k+1}^\infty)  = 0
\end{equation}
In this case we also say the encoder is \textit{$d$-phrase-constrained}. In the special case where $i_k = (d+1)k$ for all $x^\infty\in\m{X}^\infty$, we say the encoder is \textit{$d$-block-constrained}. The family of all $d$-phrase-constrained (resp. $d$-block-constrained) encoders is denoted by $\mf{C}_d^{\rm phrase}$ (resp. $\mf{C}_d^{\rm block}$). Clearly, $\mf{C}_d^{\rm block} \subset \mf{C}_d^{\rm phrase} \subset \mf{C}_d$.

\begin{remark}
Any encoder $\m{E}\in\mf{C}_d^{\rm block}$ (resp. $\m{E}\in\mf{C}_d^{\rm phrase}$) can generally be written as a prefix-dependent concatenation of BV (resp. VB/VV) codes each with block length (resp. maximal phrase length) at most $d+1$. By prefix-dependent here we mean that the code used in the next block (resp. phrase) can generally depend on the source sequence encoded thus far. Note however that for block (resp. phrase) constrained encoders operating over memoryless sources there is no redundancy gain to be reaped by using prefix-dependency, since the entire prefix can already be decoded and hence is irrelevant (in terms of average code-length) to the encoding of the next block (resp. phrase). Hence for memoryless sources, as far as the redundancy--delay tradeoff is concerned, there is no loss of generality in restricting our attention to concatenations of a single fixed BV (resp. VB/VV) code.

Conversely, any BV (resp. VB/VV) code with block length (resp. maximal phrase length) $k$, adapted to process infinite source strings via concatenation, is a $d$-block-constrained (resp. $d$-phrase-constrained) code for some $d\leq k$. Due to the integrity property requirement, it is generally possible that $d<k$, as the base code itself may be a concatenation of shorter codes. This is however clearly redundant, and without loss of generality we can restrict our attention to \textit{minimal} BV (resp. VB/VV) codes, i.e., codes for which $k=d$.
\end{remark}

\begin{remark}
Following the previous remark, it is worth mentioning an interesting class of codes known as \textit{plurally parsable (PP) codes} \cite{SavariRenewal2000}, which are a generalization of VB/VV codes. In a nutshell, a PP encoder is defined via a finite phrase dictionary $\mf{D}\subset\{0,1\}^*$ and a parsing rule. The dictionary is not a complete code-tree, and hence can induce more than one  parsing for some source sequences; in such cases the parsing rule is employed to determine which of the possible parsings will be used. Typically, a greedy parsing rule is employed, looking for the longest match in $\mf{D}$. It is interesting to note that while clearly any PP code is delay-constrained, any nontrivial PP code, i.e., one that cannot be essentially translated into a (uniquely parsable) VB/VV code\footnote{For example, the PP code given by the incomplete code-tree $\mf{D}=\{0,00,1\}$ together with the greedy parsing rule, can essentially be thought of as a uniquely parsable code given by the complete code-tree $\mf{D}=\{00,01,1\}$, in the sense that the parsing induced by the former is a refinement of the parsing induced by the latter.}, is not block/phrase constrained, as there are source sequences for which the delay is always positive. For example, using the PP code given by the incomplete code-tree $\mf{D}=\{0,000,1,111\}$ together with the greedy parsing rule, the delay incurred for the source sequence $001100110011...$ is always at least $1$. Such PP codes hence always look beyond the end of the current phrase. 
\end{remark}

\subsubsection{Interval--Mapping Encoders}
A binary string $b^k\in\{0,1\}^k$ is said to \textit{represent} a \textit{binary interval}
\begin{equation*}
\bi{b^k}\dfn \left[0.b_1b_2,\ldots b_k0,\; 0.b_1b_2,\ldots b_k1 \right)\subseteq\uico
\end{equation*}
For any set $A\subset\uico$ we write $\mbi{A}$ to denote the minimal binary interval containing $A$, i.e.,
\begin{equation*}
\mbi{A}\dfn \bigcap_{b\in\{0,1\}^*:A\subseteq \bi{b}} \bi{b}
\end{equation*}
The following lemma is easily observed.
\begin{lemma}
For any $b,c\in\{0,1\}^*$,
\begin{enumerate}[(i)]
\item $b\preceq c \; \Leftrightarrow \; \bi{c}\subseteq \bi{b}$.
\item $b\not\preceq c$ and $c\not\preceq b \; \Leftrightarrow \; \bi{b}\cap \bi{c} = \emptyset$.
\end{enumerate}
\end{lemma}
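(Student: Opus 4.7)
The plan is to handle the two clauses separately, reading off each direction from the description of $\bi{\cdot}$ as a dyadic sub-interval of $\uico$ whose endpoints are determined bit-by-bit by the representing string.

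For part (i), the forward direction is immediate: if $b\preceq c$, write $c=bu$; then every real number in $\bi{c}$ has a binary expansion whose first $|c|$ bits coincide with those of $c$ and in particular begin with the bits of $b$, placing the number in $\bi{b}$. For the converse, assuming $\bi{c}\subseteq\bi{b}$, first argue $|c|\geq|b|$ (else $\bi{c}$ is too long to fit in $\bi{b}$, since its length is $2^{-|c|}>2^{-|b|}$), and then apply base-$2$ digit extraction to the left endpoint of $\bi{c}$, which must lie in $\bi{b}$: this forces the first $|b|$ bits of $c$ to equal $b$, so $b\preceq c$.

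For part (ii), the ``only if'' direction is the contrapositive of part (i): if one string is a prefix of the other, their intervals are nested by (i) and hence not disjoint. For the ``if'' direction, assume without loss of generality $|b|\leq|c|$ and let $c'$ be the length-$|b|$ prefix of $c$; by part (i), $\bi{c}\subseteq\bi{c'}$. Now $\bi{b}$ and $\bi{c'}$ are both level-$|b|$ dyadic cells, i.e., members of the partition of $\uico$ into intervals of the form $[j\cdot 2^{-|b|},(j+1)\cdot 2^{-|b|})$, so they are either equal or disjoint. Equality would imply $c'=b$ (again by part (i)) and hence $b\preceq c$, contradicting the hypothesis; thus $\bi{b}\cap\bi{c'}=\emptyset$, and a fortiori $\bi{b}\cap\bi{c}=\emptyset$.

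No real obstacle is anticipated: the entire lemma is a bookkeeping exercise on dyadic rationals and their binary expansions. The only step that merits a moment's reflection is the dyadic-cell dichotomy used in part (ii), but this is immediate from the fact that at any fixed resolution $2^{-k}$, the half-open intervals $[j\cdot 2^{-k},(j+1)\cdot 2^{-k})$ partition $\uico$, so two such cells either coincide or miss each other entirely.
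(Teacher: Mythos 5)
The paper states this lemma without proof (it is introduced as ``easily observed''), so there is no paper argument to compare against; your proof is correct and supplies the elementary verification the authors elided, proceeding in the natural way by reading off binary expansions of points in the dyadic cells and using the fact that cells at a fixed resolution partition $\uico$. One small nit: in part (ii) your labels ``if'' and ``only if'' are swapped relative to the standard reading of the biconditional, but the two implications you actually establish are both correct and together give the equivalence.
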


Let $\mf{S} \dfn \left\{[\,a,b)\,|\,0\leq a<b\leq 1\right\}$. An encoder $\m{E}$ is said to be an \textit{interval--mapping encoder}, if there exists a mapping $\m{I}^\m{E}:\m{X}^*\mapsto\mf{S}$, i.e., a mapping of finite source sequences into subintervals of the unit interval, such that the following properties are satisfied:
\begin{enumerate}[(i)]
\item \textit{Minimality}: $\bi{\m{E}(s)} = \mbi{\m{I}^\m{E}(s)}$ for any $s\in\m{X}^*$.

\item \label{prop:disjoint_nesting}\textit{Disjoint nesting}: For all $s\in\m{X}^*$ and all distinct $x,y\in\m{X}$,
\begin{equation*}
\m{I}^\m{E}(sx)\subseteq\m{I}^\m{E}(s), \quad \m{I}^\m{E}(sx)\cap \m{I}^\m{E}(sy)=\emptyset
\end{equation*}
\end{enumerate}
The minimality property means that an interval--mapping encoder emits the bit sequence representing the minimal binary interval containing the interval $\m{I}^\m{E}(s)$. It is easily observed that the minimality and disjoint nesting properties together imply the integrity property. The family of interval mapping encoders is denoted by $\mf{I}$.

Let $<$ be any order of $\m{X}$. A special case of an interval--mapping encoder is an \textit{arithmetic encoder w.r.t. the order $<$ matched to a source $P$}, which is defined as follows:
\begin{eqnarray*}
f_1(x) &\dfn& \sum_{y<x}P(y)\\
f_n(x^n) &\dfn& f_{n-1}(x^{n-1}) + f_1(x_n)P(x^{n-1})
\\
\m{I}^\m{E}(x^n) &\dfn& \left[f_n(x^n),f_n(x^n)+P(x^n)\right)
\label{eq:encode}
\end{eqnarray*}
We omit the reference to a specific order $<$ when there is no confusion, or when the statement holds for any order.

\subsubsection{Generalized Interval--Mapping Encoders}
Let $\mf{S}^*$ be the set of all finite disjoint unions of subintervals from $\mf{S}$. An encoder $\m{E}$ is said to be a \textit{generalized interval--mapping encoder} if there exists a mapping $\m{I}^\m{E}:\m{X}^*\mapsto\mf{S}^*$ satisfying the minimality and disjoint nesting properties above. The family of generalized interval--mapping encoders is denoted by $\mf{I}^*$. Clearly, $\mf{I}\subset\mf{I}^*$.

The following lemma shows that any $d$-delay-constrained encoder admits a generalized interval--mapping representation.
\begin{lemma}\label{lem:gim_rep}
Let $\m{E}\in\mf{C}_d$. Then $\m{E}$ can be represented as a generalized interval--mapping encoder with
\begin{equation}\label{eq:int_map_for_d}
\m{I}^{\m{E}}(s)  = \bigcup_{x^d\in\m{X}^d} \bi{\m{E}(sx^d)}
\end{equation}
Hence, $\mf{C}_d\subset \mf{I}^*$.
\end{lemma}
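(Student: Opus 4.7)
The plan is to verify that the candidate map $\m{I}^\m{E}(s)=\bigcup_{x^d\in\m{X}^d}\bi{\m{E}(sx^d)}$ lies in $\mf{S}^*$ and satisfies the minimality and disjoint-nesting properties. Since any two binary intervals are either disjoint or nested, the finite family $\{\bi{\m{E}(sx^d)}\}_{x^d\in\m{X}^d}$ reduces, after discarding those properly contained in others, to a finite collection of pairwise disjoint binary intervals, so $\m{I}^\m{E}(s)\in\mf{S}^*$.

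For the nesting half of disjoint nesting, I fix $x\in\m{X}$ and $y^d\in\m{X}^d$ and observe that $xy^{d-1}$ is a length-$d$ extension of $s$, so $\bi{\m{E}(sxy^{d-1})}$ already appears as one of the constituents of $\m{I}^\m{E}(s)$. Consistency of $\m{E}$ (a consequence of integrity) gives $\m{E}(sxy^{d-1})\preceq \m{E}(sxy^d)$, hence $\bi{\m{E}(sxy^d)}\subseteq \bi{\m{E}(sxy^{d-1})}\subseteq \m{I}^\m{E}(s)$. Taking the union over $y^d$ yields $\m{I}^\m{E}(sx)\subseteq \m{I}^\m{E}(s)$.

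The crucial step, and the only place where $\m{E}\in\mf{C}_d$ is actually used, is disjointness. For distinct $x,y\in\m{X}$ and arbitrary $z^d,w^d\in\m{X}^d$, the binary intervals $\bi{\m{E}(sxz^d)}$ and $\bi{\m{E}(syw^d)}$ are either disjoint or nested. If $\bi{\m{E}(sxz^d)}\subseteq \bi{\m{E}(syw^d)}$, then $\m{E}(syw^d)\preceq \m{E}(sxz^d)$. Applying the delay bound to the prefix $sy$ with any infinite suffix that begins with $w^d$, and noting that once the decoding delay $k\leq d$ is realized, consistency lets us replace $k$ by $d$, we conclude that $\m{E}(syw^d)\preceq \m{E}(t)$ forces $sy\preceq t$ for any $t\in\m{X}^*$; taking $t=sxz^d$ contradicts $x\neq y$. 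The symmetric inclusion fails by the same argument, so the two unions are disjoint.

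Finally, for minimality, consistency gives $\bi{\m{E}(sx^d)}\subseteq \bi{\m{E}(s)}$ for every $x^d$, hence $\mbi{\m{I}^\m{E}(s)}\subseteq \bi{\m{E}(s)}$. For the reverse inclusion, integrity tells us that $\m{E}(s)$ is the longest common prefix of $\{\m{E}(sx):x\in\m{X}\}$, so there exist distinct $x_0,x_1\in\m{X}$ such that $\m{E}(sx_0)$ and $\m{E}(sx_1)$ disagree at bit $|\m{E}(s)|+1$; by consistency each $\bi{\m{E}(sx_i z^{d-1})}$ sits in the half of $\bi{\m{E}(s)}$ selected by that bit, so both halves of $\bi{\m{E}(s)}$ meet $\m{I}^\m{E}(s)$, forcing $\mbi{\m{I}^\m{E}(s)}=\bi{\m{E}(s)}$. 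The main obstacle is the disjointness step, where the content of the $d$-delay assumption has to be extracted via the prefix structure of binary intervals.
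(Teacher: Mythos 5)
Your disjoint-nesting argument is correct and is essentially the paper's own: the nesting half is consistency, and the disjointness half is exactly the paper's contradiction via the delay bound (and you helpfully spell out the monotonicity of the delay predicate in $k$, which the paper leaves implicit). Your reduction of the union to a finite disjoint collection in $\mf{S}^*$ is also fine.

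The one genuine gap is in the minimality step. You assert that, because $\m{E}(s)$ is the longest common prefix of $\{\m{E}(sx):x\in\m{X}\}$, there must exist $x_0,x_1$ with $\m{E}(sx_0)$ and $\m{E}(sx_1)$ disagreeing at bit $|\m{E}(s)|+1$. This fails when some $\m{E}(sx_0)$ \emph{equals} $\m{E}(s)$ (i.e., the encoder emits no new bit upon seeing $x_0$), which is perfectly possible for a $d$-delay-constrained encoder with $d\geq 1$: then $\m{E}(sx_0)$ has no bit at position $|\m{E}(s)|+1$, and the longest common prefix can already be $\m{E}(s)$ with all \emph{remaining} strings agreeing on that bit, so no disagreeing pair need exist. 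Your conclusion is still true — in this case $\bi{\m{E}(sx_0)}=\bi{\m{E}(s)}$, and one can iterate on $sx_0$ up to depth $d$ to either reach a genuine branching or find $\m{E}(sx^d)=\m{E}(s)$, whence $\m{I}^\m{E}(s)=\bi{\m{E}(s)}$ trivially — but the argument as written does not cover it. The paper sidesteps this entirely by arguing the contrapositive: if $\mbi{\m{I}^\m{E}(s)}=\bi{b}\subsetneq\bi{\m{E}(s)}$, then $\m{E}(s)\prec b\preceq\m{E}(sx^d)$ for every $x^d$, so $b$ is a common prefix of all depth-$d$ extensions, and (an iterated form of) the integrity property forces $b\preceq\m{E}(s)$, a contradiction. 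That version handles the $\m{E}(sx_0)=\m{E}(s)$ case uniformly, and you should either switch to it or explicitly patch your branching argument.
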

\begin{proof}
See the Appendix.
\end{proof}
\begin{remark}
The representation in (\ref{eq:int_map_for_d}) is a finite union of (possibly overlapping) binary intervals. It is worth noting that an arithmetic encoder matched to a source cannot generally be written that way, as some of its intervals may only be written as an infinite union of binary intervals. This sits well with the fact that generally, an (idealized) arithmetic encoder has an unbounded delay.
\end{remark}

\subsection{Redundancy}
The \textit{(per symbol) expected codelength} at time $n$ associated with an encoder $\m{E}$ and a memoryless source $P$ is
\begin{equation}
\bar{L}_n^\m{E}(P) \dfn n^{-1}\Expt|\m{E}(X^n)|
\end{equation}
where $X^n\sim P^n$. The \textit{(per symbol) expected redundancy} at time $n$  associated with an encoder $\m{E}$ and a memoryless source $P$ is the gap between the expected codelength and the entropy after $n$ symbols have been encoded, i.e.,
\begin{equation*}
\mf{R}_n^\m{E}(P) \dfn \bar{L}_n^\m{E}-H(P)
\end{equation*}
The corresponding \textit{sup--redundancy} and \textit{inf--redundancy} are defined as
\begin{equation*}
\overline{\mf{R}}^\m{E}(P) \dfn \limsup_{n\rightarrow\infty}\mf{R}^\m{E}_n(P)\,,\qquad \underline{\mf{R}}^\m{E}(P) \dfn \liminf_{n\rightarrow\infty}\mf{R}^\m{E}_n(P)
\end{equation*}

Let us define some useful quantities pertaining to generalized interval--mapping encoders, that will enable us to bound their redundancy in relatively simple terms. A generalized interval--mapping encoder $\m{E}$ induces a measure over $\m{X}^n$, defined by
\begin{equation*}
\mu^\m{E}_n(x^n) \dfn |\m{I}^\m{E}(x^n)|
\end{equation*}
and a conditional induced measure, defined as
\begin{equation*}
\mu^\m{E}_k(x^k|x^n) \dfn \frac{\mu^\m{E}_{n+k}(x^nx^k)}{\mu^\m{E}_n(x^n)}
\end{equation*}
Define:
\begin{equation*}
R^\m{E}_n(P) \dfn \frac{1}{n}D\left(P^n\|\mu^\m{E}_n\right)
\end{equation*}
and let
\begin{equation*}
r_d(x^n) = D\left(P^d\|\mu^\m{E}_d(\cdot|x^n)\right)
\end{equation*}
be the \textit{$d$-instantaneous redundancy}.
\begin{remark}
Note that $\mu^\m{E}_n$ and $\mu^\m{E}_k(\cdot|x^n)$ are not necessarily probability distributions, as they may sum to less than unity. However, clearly it still holds that  $R_n^\m{E}(P)\geq 0, r_d(x^n) \geq 0$.
\end{remark}

The next lemma relates the interval-based notions of redundancy defined above, to the actual operational definition of redundancy of the associated generalized interval--mapping encoders. This correspondence will allow us to think of intervals instead of bits, and will play a central role in the sequel.
\begin{lemma}\label{lem:redundancy_bounds}
The following relations hold:
\begin{enumerate}[(i)]
\item For any $\m{E}\in\mf{I}^*$,
\begin{equation*}
\mf{R}^{\m{E}}_n(P) \leq R_n^\m{E}(P)
\end{equation*}
\item For any $\m{E}\in\mf{C}_d$, there exists a generalized interval--mapping representation $\m{I}^\m{E}$ (e.g., the one in Lemma \ref{lem:gim_rep}) such that
\begin{align*}
&\mf{R}^{\m{E}}_n(P)\geq \left(\frac{n+d}{n}\right)R_{n+d}^\m{E}(P) + \frac{d}{n} H(P)
\\
&\underline{\mf{R}}^\m{E}(P) = \liminf_{n\rightarrow\infty}\frac{1}{nd}\sum_{k=1}^{n}\Expt(r_d(X^k))
\end{align*}
\end{enumerate}
\end{lemma}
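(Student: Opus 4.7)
My plan is to prove (i) directly from the minimality property of the interval-mapping representation, and to derive (ii) by combining the representation of Lemma~\ref{lem:gim_rep} with the chain rule for Kullback--Leibler divergence together with the delay-$d$ constraint on $\m{E}$.

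For (i), minimality gives the pointwise inclusion $\m{I}^\m{E}(x^n)\subseteq\mbi{\m{I}^\m{E}(x^n)}=\bi{\m{E}(x^n)}$, so $\mu^\m{E}_n(x^n)\leq 2^{-|\m{E}(x^n)|}$ and hence $|\m{E}(x^n)|\leq -\log\mu^\m{E}_n(x^n)$ pointwise. Averaging under $P^n$ and using $-\Expt\log\mu^\m{E}_n(X^n) = D(P^n\|\mu^\m{E}_n)+nH(P)$ yields $\bar{L}_n^\m{E}(P)\leq R_n^\m{E}(P)+H(P)$, which rearranges to (i). For part (ii) I would work with the Lemma~\ref{lem:gim_rep} representation $\m{I}^\m{E}(s)=\bigcup_{y^d\in\m{X}^d}\bi{\m{E}(sy^d)}$. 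The first inequality is obtained by relating $\Expt|\m{E}(X^n)|$ to the interval measure $\mu^\m{E}_{n+d}$: consistency gives $|\m{E}(X^n)|\leq|\m{E}(X^{n+d})|$ pointwise, the delay-$d$ property forces codewords $\m{E}(x^n)$ with distinct $(n-d)$-prefixes to be pairwise prefix-incomparable (a Kraft-type structure), and part (i) at time $n+d$ then controls the length in terms of $R_{n+d}^\m{E}(P)$; the $\frac{d}{n}H(P)$ correction accounts for the entropy gap $H(X^{n+d})-H(X^n)=dH(P)$.

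For the second equation of (ii), decomposing $D(P^{n+d}\|\mu^\m{E}_{n+d})$ through the product $\mu^\m{E}_{n+d}(x^ny^d)=\mu^\m{E}_n(x^n)\mu^\m{E}_d(y^d|x^n)$ yields the chain rule
\begin{equation*}
(n+d)R_{n+d}^\m{E}(P) = nR_n^\m{E}(P) + \Expt r_d(X^n).
\end{equation*}
Telescoping over $k=1,\ldots,n$ gives $\sum_{k=1}^n\Expt r_d(X^k) = \sum_{j=n+1}^{n+d}jR_j^\m{E}(P) - \sum_{j=1}^{d}jR_j^\m{E}(P)$, so $\frac{1}{nd}\sum_{k=1}^{n}\Expt r_d(X^k)$ shares its liminf with $R_n^\m{E}(P)$. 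Part (i) and the first inequality of (ii) then sandwich $\underline{\mf{R}}^\m{E}(P)$ at this common liminf.

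The main obstacle is the direction of the first inequality in (ii): while consistency together with part (i) easily yields the reverse relation, establishing the stated bound requires carefully leveraging the delay-$d$ constraint to convert the prefix-incomparability of codewords across distinct $(n-d)$-prefix classes into a quantitative codelength estimate that is compatible with the interval measure $\mu^\m{E}_{n+d}$ at time $n+d$.
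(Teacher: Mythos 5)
Your part (i) is exactly the paper's argument. For part (ii), however, your proposal has two genuine gaps.

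First, you flag the first inequality of (ii) as ``the main obstacle'' and sketch only a hope that a ``Kraft-type structure'' will resolve it; this is not a proof. Note that the Lemma~\ref{lem:gim_rep} representation gives you two nesting relations involving $\mbi{\m{I}^\m{E}(x^n)}=\bi{\m{E}(x^n)}$: on one side $\m{I}^\m{E}(x^{n+d})\subseteq\bi{\m{E}(x^n)}$, which yields pointwise $-\log|\mbi{\m{I}^\m{E}(x^n)}|\le -\log\mu^\m{E}_{n+d}(x^{n+d})$ and hence $\mf{R}^\m{E}_n(P)\le\frac{n+d}{n}R^\m{E}_{n+d}(P)+\frac{d}{n}H(P)$; on the other side $\bi{\m{E}(x^n)}=\bi{\m{E}(x^{n-d}x_{n-d+1}^{n})}\subseteq\m{I}^\m{E}(x^{n-d})$, which yields $|\m{E}(x^n)|\ge-\log\mu^\m{E}_{n-d}(x^{n-d})$ and hence a lower bound of the form $\mf{R}^\m{E}_n(P)\ge\frac{n-d}{n}R^\m{E}_{n-d}(P)-\frac{d}{n}H(P)$. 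Your proposal contains neither of these concrete containments, and the vague appeal to prefix-incomparability across $(n-d)$-prefix classes does not supply the missing quantitative step. You should make this precise rather than acknowledge the obstacle and move on.

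Second, the last step of your argument for the $\underline{\mf{R}}^\m{E}(P)$ identity does not go through as written. Your chain rule $(k+d)R^\m{E}_{k+d}(P)=kR^\m{E}_k(P)+\Expt\,r_d(X^k)$ and the resulting telescoping identity are correct, but they give $\frac{1}{nd}\sum_{k=1}^n\Expt\,r_d(X^k)=\frac{1}{nd}\sum_{j=n+1}^{n+d}jR^\m{E}_j(P)+O(n^{-1})$, which is (up to $O(n^{-1})$) a $d$-term moving average of $R^\m{E}_j(P)$. Its $\liminf$ is in general only $\ge\liminf_n R^\m{E}_n(P)$, not equal to it, so the claim that the two ``share their liminf'' needs justification. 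The paper avoids this issue by bounding $\frac{1}{nd}\sum_{k=1}^n\Expt\,r_d(X^k)$ directly between $\mf{R}^\m{E}_n(P)+O(n^{-1})$ on both sides --- lower bounding the tail sum $-\frac{1}{nd}\sum_{k=n+1}^{n+d}\Expt\log\mu^\m{E}_k(X^k)$ by $-\frac{1}{n}\Expt\log\mu^\m{E}_n(X^n)$ and then invoking (i), and upper bounding it by $-\frac{1}{n}\Expt\log\mu^\m{E}_{n+d}(X^{n+d})$ and then invoking the first inequality of (ii). You would need a step of this kind (relating the sum to $\mf{R}^\m{E}_n(P)$ itself rather than to $R^\m{E}_n(P)$) to close the gap.
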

\begin{proof}
See the Appendix.
\end{proof}

One would naturally be interested in the redundancy performance that can be guaranteed by employing encoders of different classes. In general, the expected redundancy $\mf{R}_n^\m{E}$ of an encoder $\m{E}$ can be negative for some, or even all $n$. However, the sup and inf--redundancy are nonnegative for all lossless encoders, and bounds in the $d$-block/phrase constrained cases are known.
\begin{lemma}\label{lem:redundancy_bounds_known}
The following statements hold\footnote{Recall that $f(d)=O(g(d)) \;\Rightarrow\;\limsup_{d\rightarrow\infty}\left|\frac{f(d)}{g(d)}\right|<\infty$, and $f(d)=\Omega(g(d))\;\Rightarrow\;\liminf_{d\rightarrow\infty}\left|\frac{f(d)}{g(d)}\right|>0 $}:
\begin{enumerate}[(i)]
\item For any source $P$
    \begin{align*}
    \inf_{\m{E}\in\mf{L}(P)}\overline{\mf{R}}^\m{E}(P)  &= \inf_{\m{E}\in\mf{B}(P)}\overline{\mf{R}}^\m{E}(P)= \inf_{\m{E}\in\mf{L}(P)}\underline{\mf{R}}^\m{E}(P)
    \\
    &= \inf_{\m{E}\in\mf{B}(P)}\underline{\mf{R}}^\m{E}(P)
    = 0
    \end{align*}

\item (\textit{From \cite{cover,Khodak72,Drmota2006}}) For any source
    \begin{equation*}
    \inf_{\m{E}\in\mf{C}_d^{\rm block}}\overline{\mf{R}}^\m{E}(P)  = O(d^{-1})
    \,,\;\;
    \inf_{\m{E}\in\mf{C}_d^{\rm phrase}}\overline{\mf{R}}^\m{E}(P)  = O(d^{-\frac{5}{3}})
    \end{equation*}

\item (\textit{From \cite{Szpankowski2000,Drmota2006}}) For almost all sources,
    \begin{align*}
    \inf_{\m{E}\in\mf{C}_d^{\rm block}}\underline{\mf{R}}^\m{E}(P)  &= \Omega(d^{-1})
    \\
    \inf_{\m{E}\in\mf{C}_d^{\rm phrase}}\underline{\mf{R}}^\m{E}(P)  &= \Omega(d^{-2|\m{X}|-1-\eps})
\end{align*}
where $\eps>0$.
\end{enumerate}
\end{lemma}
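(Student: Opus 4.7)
Parts (ii) and (iii) are not new contributions of this paper and will simply be attributed to the cited literature: the upper bound $O(d^{-1})$ in (ii) is the classical Huffman block-to-variable redundancy (see \cite{cover}), $O(d^{-5/3})$ is the Khodak/Drmota VV construction analyzed in \cite{Drmota2006}, and the converse bounds in (iii) follow from the complementary Mellin-transform analysis of \cite{Szpankowski2000,Drmota2006}. My plan is therefore to prove only (i).

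For (i), the plan is to sandwich all four infima between $0$ and $0$. Since $\mf{B}(P)\subset \mf{L}(P)$ and $\underline{\mf{R}}^{\m{E}}\leq \overline{\mf{R}}^{\m{E}}$, it suffices to show (a) $\overline{\mf{R}}^{\m{E}^*}(P)\leq 0$ for some $\m{E}^*\in\mf{B}(P)$, and (b) $\underline{\mf{R}}^{\m{E}}(P)\geq 0$ for every $\m{E}\in\mf{L}(P)$. For (a) I would take $\m{E}^*$ to be the arithmetic encoder matched to $P$; the cited delay analyses \cite{gallager_lecture_notes,arithmetic_coding_savari,ShayevitzDelay2006} place $\m{E}^*$ in $\mf{B}(P)$, and since the induced measure is literally $\mu_n^{\m{E}^*}(x^n)=P(x^n)$, Lemma \ref{lem:redundancy_bounds}(i) immediately yields $\mf{R}_n^{\m{E}^*}(P)\leq R_n^{\m{E}^*}(P)=n^{-1}D(P^n\|P^n)=0$ for every $n$, so $\overline{\mf{R}}^{\m{E}^*}(P)\leq 0$.

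For the converse (b), the plan is a Fano-style argument. Given any $\m{E}\in\mf{L}(P)$ and $\epsilon>0$, since $\Delta^{\m{E}}(P)$ is almost surely finite, I would choose a deterministic $d_\epsilon$ with $\Prob(\Delta^{\m{E}}(P)>d_\epsilon)<\epsilon$; on the complementary event, the defining property of $\delta^{\m{E}}$ guarantees that $\m{E}(X^{n+d_\epsilon})$ uniquely determines $X^n$. Fano's inequality then gives $H(X^n\mid \m{E}(X^{n+d_\epsilon}))\leq 1+\epsilon n\log|\m{X}|$, hence $H(\m{E}(X^{n+d_\epsilon}))\geq nH(P)-1-\epsilon n\log|\m{X}|$. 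Dividing by $n+d_\epsilon$, letting $n\to\infty$ with $d_\epsilon$ fixed, and finally $\epsilon\downarrow 0$, one obtains $\liminf_n \bar{L}_n^{\m{E}}(P)\geq H(P)$. The main delicacy I expect is passing from $H(Y)$ to $\Expt|Y|$ for the variable-length binary output $Y=\m{E}(X^{n+d_\epsilon})$, since such outputs need not form a traditional prefix-free code; this can be handled by the standard bound $H(Y)\leq \Expt|Y|+H(|Y|)$ together with an elementary entropy estimate on the length, which contributes only a lower-order term after normalization by $n+d_\epsilon$.
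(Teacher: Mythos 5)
Your treatment of the upper direction (arithmetic coding matched to $P$, invoking $\mf{B}(P)\subset\mf{L}(P)$ and $\underline{\mf{R}}\leq\overline{\mf{R}}$) and your deferral of (ii)--(iii) to the cited literature match the paper. The converse $\underline{\mf{R}}^\m{E}(P)\geq 0$ for $\m{E}\in\mf{L}(P)$ is where you and the paper diverge. The paper's argument is combinatorial: from almost-sure finiteness of the delay, for any $\eps>0$ it fixes $d$ with $P(B_d)\geq 1-\eps$, where $B_d$ is the set of suffixes $y^\infty$ for which every prefix decodes within delay $d$; it then observes that for each $z^d$ that is a length-$d$ prefix of some $y^\infty\in B_d$, the terminated map $x^n\mapsto\m{E}(x^n z^d)$ is a \emph{prefix-free} code on $\m{X}^n$ (because the encoder's consistency together with $\delta^\m{E}(x^n,y^\infty)\leq d$ forces $\m{E}(x^n z^d)\preceq\m{E}(\wt{x}^n z^d)\Rightarrow x^n=\wt{x}^n$), and hence has expected length at least $nH(P)$ by Kraft. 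Summing over $z^d$ in this good set yields $\bar{L}_{n+d}^\m{E}\geq\frac{(1-\eps)n}{n+d}H(P)$ with no entropy-of-output considerations at all. Your Fano-plus-$H(Y)\leq\Expt|Y|+H(|Y|)$ route is a genuinely different, more information-theoretic argument and it does go through, but note that the ``elementary entropy estimate on the length'' is not quite free: $H(|Y|)$ can in principle be large if $\Expt|Y|$ is large, so you need either a short case split (if $\Expt|Y|\geq nH(P)$ there is nothing to prove; otherwise $H(|Y|)=O(\log n)$) or a uniform bound of the form $H(|Y|)\leq\log\left(1+\Expt|Y|\right)+\log e$ for nonnegative-integer-valued $|Y|$, which then closes after normalizing by $n+d_\eps$. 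With that supplied, both proofs are sound; the paper's sidesteps the issue entirely by arranging a bona fide prefix-free code, which is why it reads more cleanly.
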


We see that employing block/phrase-constrained codes for compression under a strict delay constraint, the redundancy decays at best polynomially with the delay constraint\footnote{This is in fact true even under the weaker expected delay constraint.}. As we shall see, the redundancy can be made to decay \textit{exponentially} with the delay, if the more general family of delay-constrained encoders is used. This reveals a fundamental difference between block/phrase length and delay in lossless source coding.

The following lemma shows that for an optimal $d$-delay-constrained encoder, the inf--redundancy and sup--redundancy coincide.
\begin{lemma}\label{lem:infsup_coincide}
For any source $P$,
\begin{equation*}
\inf_{\m{E}\in\mf{C}_d}\overline{\mf{R}}^\m{E}(P) = \inf_{\m{E}\in\mf{C}_d}\underline{\mf{R}}^\m{E}(P) \dfn \mf{R}(P,d)
\end{equation*}
\end{lemma}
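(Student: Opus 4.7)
The inequality $\inf_{\m{E}\in\mf{C}_d}\underline{\mf{R}}^\m{E}(P)\le \inf_{\m{E}\in\mf{C}_d}\overline{\mf{R}}^\m{E}(P)$ is immediate from the pointwise bound $\underline{\mf{R}}^\m{E}(P)\le\overline{\mf{R}}^\m{E}(P)$, so the content lies in the reverse direction. The plan is to start from any $\m{E}\in\mf{C}_d$ whose inf-redundancy is close to the infimum and convert it, via a periodic-restart construction, into an encoder $\m{E}'\in\mf{C}_d$ whose sup-redundancy is only marginally larger. Concretely, fix $\eps>0$, pick $\m{E}\in\mf{C}_d$ with $\underline{\mf{R}}^\m{E}(P)<\inf_{\m{E}\in\mf{C}_d}\underline{\mf{R}}^\m{E}(P)+\eps$, and using the definition of the liminf choose $N$ so large that $\mf{R}^\m{E}_N(P)<\underline{\mf{R}}^\m{E}(P)+\eps$ and moreover $d\lceil\log|\m{X}|\rceil/N<\eps$. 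Parse the source into consecutive blocks of length $N$ and let $\m{E}'$ encode each block using a freshly initialized copy of $\m{E}$.

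A naive concatenation of these block encodings is not yet $d$-delay-constrained: the last $d$ source symbols of a block are not decodable from the block's own $\m{E}$-output alone, and the next block's output, coming from a fresh copy of $\m{E}$, carries no information about them. I patch this by appending at the end of every block $d\lceil\log|\m{X}|\rceil$ ``flush'' bits that explicitly name the block's last $d$ source symbols. The resulting $\m{E}'$ then lies in $\mf{C}_d$: for any intra-block position $j\le N-d$, $d$ further symbols remain inside the same $\m{E}$-instance and are decoded by the $d$-delay guarantee of $\m{E}$; for $j>N-d$, at most $d-1$ additional source symbols complete the block, at which moment the flush bits are emitted and the whole block becomes decodable. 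The integrity property transfers from $\m{E}$ because the flush bits are a deterministic function of the just-completed block.

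For the codelength, write $n=mN+j$ with $0\le j<N$; by construction
\begin{equation*}
\Expt|\m{E}'(X^n)|=m\bigl(\Expt|\m{E}(X^N)|+d\lceil\log|\m{X}|\rceil\bigr)+\Expt|\m{E}(X^j)|.
\end{equation*}
Since $\Expt|\m{E}(X^j)|$ is uniformly bounded over $0\le j<N$, dividing by $n$ and letting $n\to\infty$ gives
\begin{equation*}
\overline{\mf{R}}^{\m{E}'}(P)=\mf{R}^\m{E}_N(P)+\frac{d\lceil\log|\m{X}|\rceil}{N}<\underline{\mf{R}}^\m{E}(P)+2\eps.
\end{equation*}
Hence $\inf_{\m{E}\in\mf{C}_d}\overline{\mf{R}}^\m{E}(P)<\inf_{\m{E}\in\mf{C}_d}\underline{\mf{R}}^\m{E}(P)+3\eps$, and sending $\eps\to 0$ closes the argument. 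The main obstacle is the boundary verification carried out in the previous paragraph: one must ensure that the flush mechanism really keeps $\m{E}'$ inside $\mf{C}_d$ and remains compatible with the integrity property, so that $\m{E}'$ is an admissible competitor; the codelength bookkeeping itself is routine.
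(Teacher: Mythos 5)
Your restart-and-concatenate strategy is exactly the paper's, and the codelength bookkeeping and the choice of $N$ are fine, but the specific flush mechanism does not yield a valid $d$-delay-constrained encoder, so the competitor $\m{E}'$ you build need not be in $\mf{C}_d$. The problem is that the set of block codewords $\{\m{E}(b)\cdot f(b) : b\in\m{X}^N\}$ need not be prefix-free, hence the concatenated bit stream need not be uniquely decodable. The decoder cannot locate the boundary between $\m{E}(b)$ and the flush $f(b)$ without already knowing the full block $b$, since $|\m{E}(b)|$ depends on it. Concretely, if $b,b'\in\m{X}^N$ agree on their first $N-d$ symbols but differ in the last $d$, the $d$-delay constraint of $\m{E}$ does \emph{not} preclude $\m{E}(b)\prec\m{E}(b')$ (this happens whenever $\m{E}$ emits no new bit on the last symbol of $b$), and then $\m{E}(b)\cdot f(b)$ can coincide with a prefix of $\m{E}(b')\cdot f(b')$, making the two blocks indistinguishable from the bit stream and the delay for that prefix infinite. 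Your verification that ``at which moment the flush bits are emitted and the whole block becomes decodable'' presupposes the decoder knows where the flush begins, which is precisely what is in question. Relatedly, ``integrity transfers because the flush is deterministic'' is not a valid inference: the flush shares a common prefix of length $(d-1)\lceil\log|\m{X}|\rceil$ across the possible last symbols, so whenever $\m{E}$ emits no bit on the last block symbol your $\m{E}'$ withholds bits it is certain of.

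The paper sidesteps both issues by terminating each block from within $\m{E}$ itself: for each block $b$ of length $n-d$ it chooses $y^d(b)=\argmin_{z^d\in\m{X}^d}|\m{E}(bz^d)|$ and emits $\m{E}(by^d(b))$ as the block codeword, then concatenates. This set is prefix-free as a direct consequence of the $d$-delay constraint of $\m{E}$: $\m{E}(by^d(b))\preceq\m{E}(b'y^d(b'))$ forces $b\preceq b'y^d(b')$, hence $b=b'$ since $|b|=|b'|$. It also pays no additive $d\lceil\log|\m{X}|\rceil$ overhead, because taking the minimizing suffix gives $\Expt|\m{E}(X^{n-d}y^d(X^{n-d}))|\le\Expt|\m{E}(X^n)|=n\bar{L}_n^\m{E}(P)$; the only residual overhead is the factor $n/(n-d)$, absorbed by choosing $n$ large relative to $d$ and to $\underline{\mf{R}}^\m{E}(P)$. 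Replacing your explicit flush by this minimum-suffix termination repairs the argument.
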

\begin{proof}
See the Appendix.
\end{proof}
Accordingly, $\mf{R}(P,d)$ defined above is called the \textit{redundancy--delay function} associated with the source $P$. The corresponding inf--redundancy--delay and sup--redundancy--delay exponents associated with $P$ can now be defined:
\begin{align*}
\overline{E}(P) &=
\limsup_{d\rightarrow\infty}-\frac{1}{d}\,\log \mf{R}(P,d)
\\
\underline{E}(P) &= \liminf_{d\rightarrow\infty}-\frac{1}{d}\,\log
\mf{R}(P,d)
\end{align*}
Our main goal in this paper is to characterize $\mf{R}(P,d)$, $\overline{E}(P)$ and $\underline{E}(P)$.

\section{The Delay Profile of Arithmetic Coding}\label{sec:delay_AC}
Consider a case where a source $P$ is encoded by a mismatched arithmetic encoder, namely where the encoder's interval lengths match a different source $Q$ (see also Subsection \ref{subsec:encoders}). Note that we can always assume that $P\ll Q$, as otherwise the mismatched encoder is not well defined for all input symbols. In the next theorem we upper bound the probability that the corresponding delay profile exceeds a given threshold. This result will serve as a tool in the next section, where we lower bound the redundancy--delay exponent.
\begin{theorem}\label{thrm:memoryless}
Suppose a source $P\in\ms{P(\m{X})}$ is encoded using an arithmetic encoder $\m{E}$ matched to a source $Q\in\ms{P(\m{X})}$, where $P\ll Q$. Then
\begin{align}\label{eq:delay_prob}
\nonumber \Prob\left(\Delta^\m{E}(P) > d\right) &\leq 2\pmax^d\left(d\log{\left(\frac{\nu(P,Q)}{\pmax}\right)}+\kappa\right )
\\
&\quad+ 2q_{\rm max}^d(\nu(P,Q))^d
\end{align}
where $\kappa = \log\left(\frac{\sqrt{2}e}{\log{e}}\right) \approx 1.4139\ldots$
\end{theorem}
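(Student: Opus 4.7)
The strategy is to exploit the interval-mapping representation of the mismatched arithmetic encoder, turning the delay event into a purely geometric condition which can be bounded uniformly in the prefix $s$. Using minimality ($\mbi{\m{I}^\m{E}(t)}=\bi{\m{E}(t)}$ for every $t$) together with disjoint nesting, one first checks that
\[
\delta^\m{E}(s,x^\infty)\leq d \;\;\Longleftrightarrow\;\; \mbi{\m{I}^\m{E}(sx^d)}\subseteq \m{I}^\m{E}(s),
\]
since any $t$ with $\m{E}(sx^d)\preceq\m{E}(t)$ has $\m{I}^\m{E}(t)\subseteq\mbi{\m{I}^\m{E}(sx^d)}$, and once this lies inside $\m{I}^\m{E}(s)$ disjoint nesting forces $s\preceq t$; conversely any overhang can be filled with a long enough $t$ lying in a sibling interval of $s$. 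Writing $\m{I}^\m{E}(s)=[\alpha,\beta)$ with $L=Q(s)$, the random sub-interval is $I^{*}=[\alpha+Lf_d(X^d),\alpha+Lf_d(X^d)+LQ(X^d))$, and a short binary-expansion argument on $\alpha,\beta$ shows that $\mbi{I^{*}}\not\subseteq[\alpha,\beta)$ iff $I^{*}$ contains in its interior a dyadic rational $r$ of some depth $k$ with either $r-\alpha<2^{-k}$ or $\beta-r<2^{-k}$, yielding at most one such \emph{bad} rational per side per depth.

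Because the $Q$-arithmetic sub-intervals $\{[f_d(y^d),f_d(y^d)+Q(y^d))\}_{y^d\in\m{X}^d}$ tile $[0,1)$, each bad $r$ singles out a unique $x^{d}_{r}$ whose sub-interval contains $(r-\alpha)/L$, and the straddling event reduces to $\{X^d=x^{d}_{r}\}$, contributing $P(x^{d}_{r})\leq \pmax^d$. I would then bound $\Prob(\delta^\m{E}(s,X^\infty)>d)$ by a union bound over bad $r$'s split at a depth cutoff $K^{*}\approx d\log(\nu(P,Q)/\pmax)$: shallow depths ($k\leq K^{*}$) contribute at most $2K^{*}\pmax^d$, while for deep $k>K^{*}$ the scaled positions $(r-\alpha)/L$ are confined to $(0,2^{-K^{*}}/L)\cup(1-2^{-K^{*}}/L,1)$, so the union of the corresponding $Q$-sub-intervals has $Q$-mass at most $2(2^{-K^{*}}/L+\qmax^d)$, which by the symbolwise bound $P(x^d)\leq \nu(P,Q)^{d}Q(x^d)$ translates to a $P$-mass of at most $2\nu(P,Q)^{d}(2^{-K^{*}}/L+\qmax^d)$. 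The calibration $\nu(P,Q)^{d}\cdot 2^{-K^{*}}\lesssim\pmax^d$ folds the $2^{-K^{*}}/L$ piece into the $\pmax^d$-term and leaves $2\qmax^d\nu(P,Q)^{d}$ as the boundary residue, matching the theorem.

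The main obstacle is making the bound $s$-uniform, because $L=Q(s)$ can be exponentially small and a factor $1/L$ threatens to pollute the deep contribution. The resolution is that for small $L$ only depths $k\gtrsim\log(1/L)$ actually admit a bad rational inside the tiny interval $[\alpha,\beta)$, so the count $2K^{*}$ of shallow bad rationals is loose; tightening it extracts a compensating $\log(1/L)$ which cancels the offending $1/L$, and the numerical residue of carefully optimising this trade-off is exactly $\kappa=\log(\sqrt{2}e/\log e)$.
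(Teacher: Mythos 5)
Your proposal is correct and follows essentially the same route as the paper's proof: your ``bad dyadic rationals within $2^{-k}$ of an edge of $\m{I}^\m{E}(s)$'' coincide with the paper's \emph{forbidden points} $S_0(\m{I}^\m{E}(s))$ (the iterated left/right adjacents of the midpoint of $\mbi{\m{I}^\m{E}(s)}$), your shallow/deep split at cutoff $K^*$ is the paper's $S_\delta$ versus $S_0\setminus S_\delta$ decomposition with $\delta\approx 2^{-K^*}$, and your two probability estimates are exactly Lemma~\ref{lem:inclusion_prob}. The $\log(1/L)$ cancellation you invoke for $s$-uniformity is precisely what the count $|S_\delta|\leq 1+2\log(|\m{I}^\m{E}(s)|/\delta)$ of Lemma~\ref{lem:delta_set} encodes, and the paper's optimal choice $\delta=\log e\cdot(\pmax/\nu(P,Q))^d\,|\m{I}^\m{E}(s)|$ realizes your calibration and yields the stated constant $\kappa$.
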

An outline of the proof is given in Section \ref{subsec:AC_outline}. The full proof is given in Section \ref{subsec:proof_memoryless}.
\begin{corollary}
Let $\m{E}$ be an arithmetic encoder matched to a source $Q\in\ms{P(\m{X})}$, where $P\ll Q$. For any source $P\in\ms{P(\m{X})}$, if
\begin{equation*}
\qmax\cdot\nu(P,Q) <1
\end{equation*}
then the delay profile bound (\ref{eq:delay_prob}) is exponentially decaying with $d$, hence the expected delay is finite, i.e., $\m{E}\in\mf{B}(P)$. This specifically holds for all non-deterministic $P=Q$.
\end{corollary}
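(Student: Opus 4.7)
The plan is simply to unpack the bound of Theorem~\ref{thrm:memoryless} under the hypothesis $\qmax\cdot\nu(P,Q)<1$, verify that both terms on the right-hand side of (\ref{eq:delay_prob}) decay exponentially in $d$, and then use the standard fact that a geometrically decaying tail implies a finite expectation.

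First, I would handle the easier second term $2\qmax^d(\nu(P,Q))^d = 2(\qmax\nu(P,Q))^d$: by hypothesis its base is strictly less than $1$, so it decays exponentially with some positive rate $-\log(\qmax\nu(P,Q))$. For the first term $2\pmax^d(d\log(\nu(P,Q)/\pmax)+\kappa)$, the polynomial-in-$d$ factor is irrelevant provided $\pmax<1$, so the point is to show that the hypothesis forces $P$ to be non-deterministic. This is the one non-cosmetic step: if $P$ were deterministic, concentrated on some $x^\star\in\m{X}$, then by the definition of $\nu$ we would have
\begin{equation*}
\nu(P,Q) \;=\; \frac{P(x^\star)}{Q(x^\star)} \;=\; \frac{1}{Q(x^\star)} \;\geq\; \frac{1}{\qmax},
\end{equation*}
so that $\qmax\cdot\nu(P,Q)\geq 1$, contradicting the hypothesis. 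Hence $\pmax<1$ and the first term also decays exponentially, with rate at least $-\log\pmax$ minus an arbitrarily small amount to absorb the $d$ factor.

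Combining the two exponential bounds yields $\Prob(\Delta^\m{E}(P)>d)\leq C\,\rho^d$ for some constants $C<\infty$ and $\rho<1$ (one can take $\rho$ to be any number strictly larger than $\max\{\pmax,\qmax\nu(P,Q)\}$). Then
\begin{equation*}
\Expt(\Delta^\m{E}(P)) \;=\; \sum_{d=0}^\infty \Prob(\Delta^\m{E}(P)>d) \;\leq\; \frac{C}{1-\rho} \;<\; \infty,
\end{equation*}
which is exactly the definition of $\m{E}\in\mf{B}(P)$.

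The last sentence of the corollary is an immediate specialization: when $P=Q$, Lemma~(the one asserting $\nu(P,Q)\geq 1$ with equality iff $P=Q$) gives $\nu(P,Q)=1$, so the condition $\qmax\cdot\nu(P,Q)<1$ reduces to $\qmax<1$, which holds precisely when $Q$ (equivalently $P$) is non-deterministic. I do not anticipate any real obstacle here; the only substantive content beyond algebraic manipulation of Theorem~\ref{thrm:memoryless} is the short argument that the hypothesis rules out deterministic $P$, which is needed to make the $\pmax^d$ factor useful.
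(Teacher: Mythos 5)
Your argument is correct and matches the intended (and essentially unique) way of deriving this corollary, which the paper states without proof: both terms of the bound in (\ref{eq:delay_prob}) are shown to decay geometrically, and summing the tail gives a finite expectation. The observation that the hypothesis $\qmax\nu(P,Q)<1$ already rules out deterministic $P$ (so that $\pmax<1$ and the $\pmax^d\cdot d$ factor does decay) is exactly the non-trivial step needed to make the first term useful, and you handle it correctly; the specialization to $P=Q$ via $\nu(P,Q)=1$ is likewise right.
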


\begin{corollary}
Suppose the source $P$ is encoded using the arithmetic encoder matched to the source. Then
\begin{equation*}
\Prob(\Delta^\m{E}(P)  > d) \leq 2\pmax^d\left(d\log{\left(1\slash\pmax\right)}+\kappa+1\right )
\end{equation*}
\end{corollary}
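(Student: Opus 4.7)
The plan is to obtain the corollary as a direct specialization of Theorem \ref{thrm:memoryless} to the case $Q=P$, so essentially no new work is required beyond bookkeeping.

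First I would invoke the lemma stating that $\nu(P,Q) \geq 1$ with equality iff $P=Q$, so that the matched case gives $\nu(P,P) = 1$. Simultaneously, taking $Q=P$ forces $\qmax = \pmax$. Plugging these two substitutions into the bound (\ref{eq:delay_prob}) from Theorem \ref{thrm:memoryless} transforms $\log(\nu(P,Q)/\pmax)$ into $\log(1/\pmax)$ in the first summand, and collapses the second summand $2\qmax^d(\nu(P,Q))^d$ into $2\pmax^d$.

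Then I would simply collect the two terms over the common factor $2\pmax^d$, yielding
\begin{equation*}
\Prob(\Delta^\m{E}(P) > d) \leq 2\pmax^d\bigl(d\log(1/\pmax)+\kappa\bigr) + 2\pmax^d = 2\pmax^d\bigl(d\log(1/\pmax)+\kappa+1\bigr),
\end{equation*}
which is exactly the claimed inequality. There is no real obstacle here: all the technical work is already inside Theorem \ref{thrm:memoryless}, and the corollary is purely algebraic once $\nu(P,P)=1$ and $\qmax=\pmax$ are noted.
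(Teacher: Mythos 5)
Your proof is correct and matches the paper's (implicit) derivation: set $Q=P$ in Theorem \ref{thrm:memoryless}, note $\nu(P,P)=1$ and $\qmax=\pmax$, and combine the two terms over the common factor $2\pmax^d$. The paper states this as a corollary with no further proof, precisely because the algebra is as immediate as you describe.
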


\begin{remark}
A bound on the moment-generating function for matched arithmetic coding, and a corresponding exponential bound on the delay's tail distribution, were originally observed in \cite{SavariReport, arithmetic_coding_savari}. However, these bounds depend on both $\pmin$ and $\pmax$, and can therefore be arbitrarily loose. For the tail distribution, a bound depending only on $\pmax$ was originally obtained by the authors in \cite{ShayevitzDelay2006}, where it was also shown how the proof of \cite{SavariReport,arithmetic_coding_savari} can be tweaked to remove the dependency on $\pmin$. The bound obtained here is tighter than both.
\end{remark}

\begin{remark}
The bound in Theorem \ref{thrm:memoryless} can be further tightened by observing that specific orders of the alphabet $\m{X}$ are better than others in terms of the bounding technique used here. We do not pursue this direction, since we need an order-independent bound in the sequel.
\end{remark}

\subsection{Proof Outline}\label{subsec:AC_outline}
Recall the definitions of an interval--mapping encoder and of an arithmetic encoder in particular, given in Subsection \ref{subsec:encoders}. At time $n$, the sequence $x^n$ has been encoded into $\m{I}^\m{E}(x^n)$, and the decoder is so far aware only of the interval $\mbi{\m{I}^\m{E}(x^n)}$, namely the minimal binary interval containing $\m{I}^\m{E}(x^n)$. Thus the decoder is able to decode $x^m$, where $m$ is maximal such that $\mbi{\m{I}^\m{E}(x^n)}\subseteq \m{I}^\m{E}(x^m)$. Of course, $m\leq n$ where the inequality is generally strict. After $d$ more source letters are fed to the encoder, $x^{n+d}$ is encoded into $\m{I}^\m{E}(x^{n+d})$, and the entire sequence $x^n$ can be decoded at time $n+d$ if and only if\footnote{Here we are further assuming that $Q\ll P$, see Remark \ref{rem:PQ}.}
\begin{equation}\label{eq:interval_cond}
\mbi{\m{I}^\m{E}(x^{n+d})} \subseteq \m{I}^\m{E}(x^n).
\end{equation}
Now, consider the midpoint of $\mbi{\m{I}^\m{E}(x^n)}$ which by the minimality property (see Subsection \ref{subsec:encoders}) is always contained in $\m{I}^\m{E}(x^n)$. If that midpoint is contained in $\m{I}^\m{E}(x^{n+d})$ (but not as a left edge), then condition (\ref{eq:interval_cond}) cannot be satisfied; In fact, in this case the encoder cannot yield even one further bit. This observation can be generalized to a set of points which, if contained in $\m{I}^\m{E}(x^{n+d})$, $x^n$ cannot be completely {\em decoded}. For each of these points the encoder outputs a number of bits which may enable the decoder to produce source symbols, but not enough to fully decode $x^n$. The encoding and decoding delays are therefore treated here simultaneously, rather than separately as in \cite{arithmetic_coding_savari}.

\begin{remark}\label{rem:PQ}
When $Q\not\ll P$ there are ``holes'' in the interval--mapping, namely intervals corresponding to symbols where $Q(x)>0$ but $P(x)=0$. In this case, $x^n$ can be decoded at time $n+d$ if and only if $\mbi{\m{I}^\m{E}(x^{n+d})}\cap \m{I}^\m{E}(y^{n}) = \emptyset$ for any $y^n\neq x^n$. Hence condition (\ref{eq:interval_cond}) is necessary and sufficient if $Q\ll P$, and only sufficient otherwise. This point is important to note since the case where $Q\not\ll P$ appears in the sequel.
\end{remark}

After having identified the above set of \textit{forbidden points}, we clearly need to analyze the probability of avoiding them within the next $d$ instances. Loosely speaking, for an arithmetic encoder matched to the source $P$, the maximal symbol probability $\pmax$ represents the ``crudest resolution'', or the ``lowest rate'' by which we shrink our intervals, hence intuitively dictates our ability to avoid hitting forbidden points. Indeed, the probability that the encoder avoids these points is roughly $\pmax^d$. For a mismatched encoder, we get a similar expression involving $\pmax^d,\qmax^d$ and $\nu(P,Q)$ as a measure of the mismatch between the encoder and the source.

\subsection{The Forbidden Points Notion}
We now introduce some notations and prove three lemmas, required for the proof of Theorem \ref{thrm:memoryless}. Let $I=[a,b)\subseteq [0,1)$ be some interval, and $p$ some point in that interval. We say that $p$ is {\em strictly contained} in $I$ if $p\in (a,b)$. We define the
\textit{left-adjacent} of $p$ w.r.t. $I$ to be
\begin{equation*}
\ell_I(p) \dfn \min\left \{x\in [a,p)\,:\,\exists k\in\IntF^+ ,\, x
= p-2^{-k} \right \}
\end{equation*}
and the \textit{t-left-adjacent} of $p$ w.r.t. $I$ as
\begin{equation*}
\ell_I^{(t)}(p) \dfn \overbrace{(\ell_I\circ\ell_I\circ\cdots\circ\ell_I)}^t(p)\;,\quad
\ell_I^{(0)}(p) \dfn p
\end{equation*}
Notice that $\ell_I^{(t)}(p) \rightarrow a$ monotonically with $t$. We also define the \textit{right-adjacent} of $p$ w.r.t $I$ to be
\begin{equation*}
r_I(p) \dfn \max\left \{x\in (p,b)\,:\,\exists k\in\IntF^+ ,\, x =
p+2^{-k} \right \}
\end{equation*}
and $r_I^{(t)}(p)$ as the \textit{t-right-adjacent} of $p$ w.r.t. $[a,b)$ similarly, where now $r_I^{(t)}(p) \rightarrow b$ monotonically. For any $\delta < b-a$, the \textit{adjacent $\delta$-set} of $p$ w.r.t. $I$ is defined as the set of all adjacents that are not "too close" to the edges of $I$:
\begin{eqnarray*}
& S_\delta(I,p) & \dfn \left \{ x\in[a+\delta,b-\delta)\,:\,
\exists\,t\in\IntF^+\cup \{0\}\,,\right. \\ & & \left.\qquad\qquad
x = \ell^{(t)}(p) \,\vee \,x = r^{(t)}(p) \right \}
\end{eqnarray*}
Notice that for $\delta > p-a$ this set may contain only right-adjacents, for $\delta > b-p\;$ only left-adjacents, for $\delta>\frac{b-a}{2}$ it is empty, and for $\delta=0$ it may be infinite. 
\begin{lemma}\label{lem:delta_set}
The size of $S_\delta(I,p)$ is upper bounded by
\begin{equation}\label{eq_delta_set}
|S_\delta(I,p)|\leq 1+ 2\log{\frac{|I|}{\delta}}
\end{equation}
\end{lemma}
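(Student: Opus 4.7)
The plan is to bound the contributions of left-adjacents and right-adjacents to $S_\delta(I,p)$ separately: the point $p=\ell_I^{(0)}(p)=r_I^{(0)}(p)$ itself accounts for the additive constant $1$ in the bound, and by the symmetry between the two operators it then suffices to show that at most $\log(|I|/\delta)$ left-adjacents (those with $t\geq 1$) can lie in $[a+\delta,b-\delta)$.

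The key structural step would be a combinatorial interpretation of $\ell_I$ via binary expansions. Set $d_t \dfn \ell_I^{(t)}(p)-a$, and let $k(t)$ denote the position of the leading one-bit of $d_t$, so that $d_t\in[2^{-k(t)},2^{-k(t)+1})$. Unpacking the definition of $\ell_I(q)$ as the \emph{minimum} element of $[a,q)$ of the form $q-2^{-k}$ with $k\in\IntF^+$ shows that the admissible step is the \emph{largest} $2^{-k}\leq d_t$, namely $2^{-k(t)}$. Therefore $d_{t+1}=d_t-2^{-k(t)}\in[0,2^{-k(t)})$; that is, each application of $\ell_I$ simply zeros out the leading one-bit of $d_t$. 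It follows that $k(0)<k(1)<k(2)<\cdots$ is a strictly increasing sequence of positive integers, and hence $k(t)\geq k(0)+t$.

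The remainder is arithmetic. The initial bound $d_0=p-a\leq|I|$ gives $k(0)\geq\log(1/|I|)$. On the other hand, the constraint $\ell_I^{(t)}(p)\geq a+\delta$, i.e.\ $d_t\geq\delta$, together with $d_t<2^{-k(t)+1}$, forces $k(t)<1+\log(1/\delta)$. Combining through $t\leq k(t)-k(0)$ yields $t<1+\log(|I|/\delta)$, so at most $\lfloor\log(|I|/\delta)\rfloor$ left-adjacents can qualify. A mirror-image argument using $r_I$ and the remainders $b-r_I^{(t)}(p)$ gives the same bound on the right side; summing the contributions of $p$, the left-adjacents, and the right-adjacents then produces $|S_\delta(I,p)|\leq 1+2\log(|I|/\delta)$.

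I do not anticipate a real obstacle: the only step warranting care is the claim that $\ell_I$ zeros out the leading one-bit of $d_t$, but this is immediate once one notes that the ``$\min$'' over admissible targets translates to a ``$\max$'' over admissible step sizes $2^{-k}\leq d_t$. Everything else is bookkeeping.
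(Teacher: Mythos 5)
Your core combinatorial observation is correct and is essentially the same as the paper's: applying $\ell_I$ once zeroes out the leading one-bit of $d_t=\ell_I^{(t)}(p)-a$, so the position $k(t)$ of the leading one strictly increases, and the qualifying left-adjacents correspond to the ones in the binary expansion of $p-a$ down to resolution $\delta$ (which is how the paper phrases it). The gap is in the last arithmetic step: from $t<1+\log(|I|/\delta)$ you cannot conclude that at most $\lfloor\log(|I|/\delta)\rfloor$ left-adjacents qualify. Since $t$ is a positive integer, $t<1+\log(|I|/\delta)$ only gives $t\leq\lceil\log(|I|/\delta)\rceil$, which in general exceeds $\lfloor\log(|I|/\delta)\rfloor$ by one. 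Carried through, your argument yields $|S_\delta(I,p)|\leq 1+2\lceil\log(|I|/\delta)\rceil$, which can overshoot the claimed $1+2\log(|I|/\delta)$ by an additive $2$.

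The missing ingredient is to bound the two sides \emph{jointly} rather than each against $|I|$ separately. Let $k^L(0)$ and $k^R(0)$ be the positions of the leading ones of $p-a$ and $b-p$; then $2^{-k^L(0)}+2^{-k^R(0)}\leq(p-a)+(b-p)=|I|$, and the AM--GM inequality gives $2^{-k^L(0)}\cdot 2^{-k^R(0)}\leq|I|^2/4$, i.e. $k^L(0)+k^R(0)\geq 2+2\log(1/|I|)$. Combined with your two bounds $T_L\leq k^L(T_L)-k^L(0)$ and $T_R\leq k^R(T_R)-k^R(0)$ and with $k^L(T_L),k^R(T_R)<1+\log(1/\delta)$, this yields $T_L+T_R<2\log(|I|/\delta)$, and hence $|S_\delta(I,p)|\leq 1+T_L+T_R\leq 1+2\log(|I|/\delta)$ as required. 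This is exactly what the paper does, in slightly different garb: it first obtains $|S_\delta(I,p)|\leq\lceil\log\frac{p-a}{\delta}\rceil^{+}+\lceil\log\frac{b-p}{\delta}\rceil^{+}$ and then uses $(p-a)(b-p)\leq|I|^2/4$. By replacing both $p-a$ and $b-p$ with the weaker bound $|I|$, you give up exactly the factor of $4$ (the ``$-2$'' in the log) that the final bound needs.
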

\begin{proof}
See the Appendix. 
\end{proof}

For an interval $I$, let $m(I)$ denote the midpoint of $\mbi{I}$. Note that $m(I)\in I$, by definition of $\mbi{I}$ as the minimal binary interval containing $I$. In what follows, we will be specifically interested in the adjacent $\delta$-set of $m(I)$ w.r.t. $I$. We therefore suppress the dependence on $m(I)$ and write
\begin{equation*}
S_\delta(I) \dfn S_\delta(I,m(I))
\end{equation*}
In particular, the set $S_0(I)$ will be referred to as the \textit{forbidden points} of $I$. The forbidden points play a central role in the sequel, for the following reason:
\begin{lemma}\label{lem:fp}
Condition (\ref{eq:interval_cond}) is satisfied if and only if $\m{I}^\m{E}(x^{n+d})$ does not contain forbidden points of $\m{I}^\m{E}(x^n)$, i.e.,
\begin{equation*}
\m{I}^\m{E}(x^{n+d})\cap S_0(\m{I}^\m{E}(x^n)) = \emptyset
\end{equation*}

\end{lemma}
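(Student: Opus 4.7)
The heart of the proof is a geometric tiling picture: the forbidden points of $J=[a,b)$ canonically partition $J$ into a sequence of binary sub-intervals of $\mbi{J}$, and condition~(\ref{eq:interval_cond}) is equivalent to $K=\m{I}^{\m{E}}(x^{n+d})$ fitting inside a single tile of this partition. I would first establish this tiling. By minimality the midpoint $m$ of $\mbi{J}$ lies strictly inside $J$, and the iterated left-adjacents yield intervals $[\ell_J^{(1)}(m), m),\,[\ell_J^{(2)}(m), \ell_J^{(1)}(m)),\,\ldots$ which I claim are binary sub-intervals of $\mbi{J}$ at strictly increasing dyadic levels $k_1<k_2<\cdots$. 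The strict monotonicity follows from the greedy minimality in the definition of $\ell_J$: once $2^{-k_j}$ is the largest power of two fitting in $\ell^{(j-1)}(m)-a$, the residual $\ell^{(j)}(m)-a$ is strictly less than $2^{-k_j}$, forcing $k_{j+1}>k_j$. A symmetric statement on $[m,b)$ using $r_J^{(j)}(m)$ gives the right half of the tiling, and together the two sequences tile $J$ up to the boundary points $\{a,b\}$.

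From this tiling follows the key structural fact: a binary sub-interval $B\subseteq\mbi{J}$ satisfies $B\subseteq J$ if and only if $B$ is contained in one of the tiles, because any binary interval that straddled a forbidden point at level $k_j$ would have to live at a strictly shallower dyadic level and therefore spill past $a$, $b$, or $m$. Both implications of the lemma then become short. For $(\Leftarrow)$, $K\cap S_0(J)=\emptyset$ implies $m\notin K$, so the connected interval $K$ lies in $[a,m)$ or $[m,b)$; avoiding every $\ell_J^{(j)}(m)$ (resp.\ $r_J^{(j)}(m)$) then confines $K$ to a single tile, and $\mbi{K}$ inherits that confinement. For $(\Rightarrow)$, $\mbi{K}\subseteq J$ combined with the structural fact places $\mbi{K}$ inside some tile $T$, so $K\subseteq\mbi{K}\subseteq T$ meets $S_0(J)$ only at boundary endpoints of $T$.

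The main technical obstacle is the tiling step --- rigorously showing that $[\ell_J^{(j+1)}(m), \ell_J^{(j)}(m))$ is a binary sub-interval of $\mbi{J}$ at level $k_{j+1}$. This reduces to identifying the sequence $\{k_j\}$ with the positions of the ones in the binary expansion of $m-a$ (and analogously for $b-m$ on the right), after which the dyadic alignment of each tile's endpoints with the grid of $\mbi{J}$ follows by a direct calculation using the fact that $m$ itself sits on the level-$(K{+}1)$ dyadic grid of $\mbi{J}$. Everything downstream is routine geometric bookkeeping, with only minor care needed around the half-open endpoint conventions that come into play when a forbidden point happens to coincide with an edge of $K$ or with $a$.
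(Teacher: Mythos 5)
Your proof is correct and follows essentially the same route as the paper: the paper's proof is an iterative case analysis that peels off one adjacent at a time (if $m$ is strictly inside $K$ fail, else if $K\subseteq[\ell(m),m)$ pass, else if $\ell(m)$ is strictly inside $K$ fail since $\mbi{K}$ would be the left half of $\mbi{J}$, then iterate), and your explicit tiling of $J$ by the binary sub-intervals $[\ell^{(j+1)}(m),\ell^{(j)}(m))$ and $[r^{(j)}(m),r^{(j+1)}(m))$ is simply that iteration unrolled into a single structural picture. You also correctly flag the left-endpoint subtlety in the forward direction, namely that $K$ can contain a forbidden point as its left edge and still have $\mbi{K}\subseteq J$, a caveat the paper acknowledges in its proof outline but elides in the proof of the lemma itself.
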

\begin{proof}
Write $m=m(\m{I}^\m{E}(x^n))$ for short. As already discussed, if $m$ is strictly contained in $\m{I}^\m{E}(x^{n+d})$ then (\ref{eq:interval_cond}) is not satisfied. Otherwise, assume $\m{I}^\m{E}(x^{n+d})$ lies to the left of $m$. Clearly, if $\m{I}^\m{E}(x^{n+d})\subseteq [\ell(m),m)$, then $\mbi{\m{I}^\m{E}(x^{n+d})}\subseteq [\ell(m),m)$ as well, hence (\ref{eq:interval_cond}) is satisfied. However, if $\ell(m)$ is strictly contained in $\m{I}^\m{E}(x^{n+d})$ then $\mbi{\m{I}^\m{E}(x^{n+d})}$ must be the left half of $\mbi{\m{I}^\m{E}(x^n)}$, which by minimality cannot be a subinterval of $\m{I}^\m{E}(x^n)$, hence (\ref{eq:interval_cond}) is not satisfied. The same rationale also applies to $r(m)$. The lemma follows by iterating the argument.
\end{proof}

\subsection{Proof of Theorem \ref{thrm:memoryless}}\label{subsec:proof_memoryless}

The probability that the delay $\Delta^\m{E}(x^n,P)$ is larger than $d$ is equal to (or upper bounded by, when $Q\not\ll P$, see Remark \ref{rem:PQ}) the probability that (\ref{eq:interval_cond}) is not satisfied. By Lemma \ref{lem:fp}, this in turn equals the probability that $\m{I}^\m{E}(X^{n+d})$ contains none of the forbidden points of $\m{I}^\m{E}(x^n)$. To get a handle on this latter probability, the following lemma is found useful.
\begin{lemma}\label{lem:inclusion_prob}
Suppose a source $P$ is encoded using an arithmetic encoder $\m{E}$ matched to a source $Q$, where $P\ll Q$, and let $\pmax,\qmax$ be the corresponding maximal symbol probabilities. Then for any $a\in\m{I}^\m{E}(x^n)$,
\begin{equation*}
\Prob\left(a\in \m{I}^\m{E}(X^{n+d}) | X^n = x^n\right) \leq \pmax^d
\end{equation*}
and for any interval $J\subseteq\m{I}^\m{E}(x^n)$ sharing an endpoint with $\m{I}^\m{E}(x^n)$,
\begin{align*}
\Prob(J\cap \m{I}^\m{E}(X^{n+d})&\neq \emptyset | X^n=x^n)
\\
&\leq \left(\frac{|J|}{|\m{I}^\m{E}(x^n)|}+q_{\rm max}^d\right)(\nu(P,Q))^d
\end{align*}
\end{lemma}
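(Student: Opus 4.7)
The plan is to reduce both bounds to counting $P$-probabilities of continuations $y^d\in\m{X}^d$ whose corresponding child interval $\m{I}^\m{E}(x^n y^d)$ satisfies a certain geometric property. Two structural facts underlie this. First, by disjoint nesting, the children $\{\m{I}^\m{E}(x^n y^d) : y^d \in \m{X}^d\}$ partition the parent $\m{I}^\m{E}(x^n)$ into contiguous subintervals with $|\m{I}^\m{E}(x^n y^d)| = Q(y^d)\,|\m{I}^\m{E}(x^n)|$. Second, conditional on $X^n = x^n$ the continuation $X_{n+1}^{n+d}$ is i.i.d.\ $\sim P$, so the chance that $\m{I}^\m{E}(X^{n+d})$ coincides with a particular child $\m{I}^\m{E}(x^n y^d)$ equals $P(y^d)$. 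The bridge between $Q$-volume and $P$-probability is the elementary coordinatewise inequality $P(y^d) \leq \nu(P,Q)^d\, Q(y^d)$, valid whenever $P(y^d)>0$ and immediate from the definition of $\nu$.

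The first inequality then follows at once: the point $a$ lies in exactly one child $\m{I}^\m{E}(x^n \tilde y^d)$, and the event $\{a \in \m{I}^\m{E}(X^{n+d})\}$ coincides (up to a boundary of zero probability) with $\{X_{n+1}^{n+d} = \tilde y^d\}$, whose conditional probability is $P(\tilde y^d) \leq \pmax^d$.

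For the second inequality, let $\m{Y} = \{y^d \in \m{X}^d : \m{I}^\m{E}(x^n y^d) \cap J \neq \emptyset\}$. The probability of interest equals $\sum_{y^d\in\m{Y}} P(y^d) \leq \nu(P,Q)^d \sum_{y^d \in \m{Y}} Q(y^d)$, so it suffices to bound the $Q$-sum by $|J|/|\m{I}^\m{E}(x^n)| + \qmax^d$. Because $J$ shares an endpoint with $\m{I}^\m{E}(x^n)$ and the children tile the parent contiguously, the union $\bigcup_{y^d\in\m{Y}} \m{I}^\m{E}(x^n y^d)$ is itself a contiguous piece of $\m{I}^\m{E}(x^n)$ sharing that same endpoint and covering $J$. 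Hence it can overshoot $J$ only at the \emph{far} end of $J$, and by at most one full child, whose length is at most $\qmax^d\,|\m{I}^\m{E}(x^n)|$. Converting total length into a $Q$-sum via the first structural fact gives the stated bound.

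The only real subtlety, and the reason for the endpoint-sharing hypothesis on $J$, is the ``at most one overshooting child'' step: if $J$ sat strictly inside the parent, children could protrude at both ends and the $\qmax^d$ term would have to be doubled. Everything else is routine bookkeeping combining the partition structure, the conditional i.i.d.\ property of the continuation, and the coordinatewise ratio bound packaged as $\nu(P,Q)^d$.
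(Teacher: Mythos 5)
Your proposal is correct and follows essentially the same route as the paper: use the $Q$-proportional tiling of $\m{I}^\m{E}(x^n)$ by its $d$-step children, the conditional i.i.d.\ law of $X_{n+1}^{n+d}$, and the pointwise bound $P(y^d)\leq\nu(P,Q)^d Q(y^d)$, then control the $Q$-mass of children meeting $J$ by $|J|/|\m{I}^\m{E}(x^n)|$ plus one overshooting child of measure at most $\qmax^d$. Your explicit justification of the single-overshoot step via the endpoint-sharing hypothesis is a useful elaboration of a point the paper leaves implicit.
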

\begin{proof}
The set $\{\m{I}^\m{E}(x^ny^d) : y^d\in\m{X}^d\}$ is a partition of $\m{I}^\m{E}(x^n)$ into intervals, and $a$ belongs to a single interval in the partition. Therefore,
\begin{align}
\nonumber \Prob&\left(a\in \m{I}^\m{E}(X^{n+d}) | X^n = x^n\right)
\\
&\leq \max_{y^d\in\m{X}^d}\Prob(X_{n+1}^{n+d} = y^d|X^n = x^n) = \pmax^d
\end{align}
establishing the first assertion. For the second assertion, write:
\begin{align}
\nonumber \Prob&(J\cap \m{I}^\m{E}(X^{n+d})\neq \emptyset | X^n=x^n) \leq \sum_{y^d:J\cap \m{I}^\m{E}(x^ny^d)\neq \emptyset} P(y^d)
\\
\nonumber &\leq \sum_{y^d:J\cap \m{I}^\m{E}(x^ny^d)\neq \emptyset} Q(y^d)\cdot(\nu(P,Q))^d
\\
\nonumber &= (\nu(P,Q))^d\sum_{y^d:J\cap \m{I}^\m{E}(x^ny^d)\neq \emptyset} \mu^{\m{E}}_d(y^d|x^n)
\\
&\leq \left(\frac{|J|}{|\m{I}^\m{E}(x^n)|}+q_{\rm max}^d\right)(\nu(P,Q))^d
\end{align}
where we have used the fact that $\max_{y^d}\mu^{\m{E}}_d(y^d|x^n) = \qmax^d$.
\end{proof}

Write $S_\delta=S_\delta(\m{I}^\m{E}(x^n))$ for short. Note that $S_\delta\subseteq S_0$, and that $S_0\backslash S_\delta$ is contained in two intervals of length $\delta$ both sharing an edge with $\m{I}^\m{E}(x^n)$. For any $\delta>0$, the delay's tail probability is bounded as follows:
\begin{align}\label{eq_delay1}
\nonumber &\Prob(\Delta^\m{E}(x^n,P) > d)
\\
\nonumber &\quad\stackrel{({\rm a})}{\leq}
\Prob\left(\mbi{\m{I}^\m{E}(X^{n+d})} \not\subseteq \m{I}^\m{E}(x^n)
| X^n=x^n\right)
\\
\nonumber &\quad\stackrel{({\rm b})}{=} \Prob\left(S_0\cap
\m{I}^\m{E}(X^{n+d}) \neq \phi | X^n=x^n\right)
\\
\nonumber &\quad\stackrel{({\rm c})}{\leq} \Prob\left(\left
(S_0\backslash S_\delta\right)\cap \m{I}^\m{E}(X^{n+d}) \neq
\phi\,\big{|}\,X^n=x^n\right)
\\
\nonumber &\quad\qquad+\Prob\left(S_\delta\cap \m{I}^\m{E}(X^{n+d}) \neq
\phi | X^n=x^n\right)
\\
\nonumber &\quad\stackrel{({\rm d})}{\leq} 2\left(\frac{\delta}{|\m{I}^\m{E}(x^n)|} + q_{\rm max}^d\right)(\nu(P,Q))^d
\\
\nonumber &\quad\qquad +\pmax^d|S_\delta|
\\
\nonumber &\quad\stackrel{({\rm e})}{\leq} 2\left(\frac{\delta}{|\m{I}^\m{E}(x^n)|} + q_{\rm max}^d\right)(\nu(P,Q))^d
\\
&\quad\qquad+\pmax^d\left(1+2\log{\frac{|\m{I}^\m{E}(x^n)|}{\delta}}\right)
\end{align}
The transitions are justified as follows:
\begin{enumerate}[(a)]
\item Condition (\ref{eq:interval_cond}) is sufficient, see discussion in Subsection \ref{subsec:AC_outline}. In most cases this would be an equality, as condition (\ref{eq:interval_cond}) would be also necessary, see Remark \ref{rem:PQ}.

\item Lemma \ref{lem:fp}.

\item Union bound over $S_0 = S_\delta\cup \left(S_0\setminus S_\delta\right)$.

\item Lemma \ref{lem:inclusion_prob}, together with a union bound over the finite number of elements in $S_0\setminus S_\delta$.

\item Lemma \ref{lem:delta_set}.
\end{enumerate}
Taking the derivative of the right-hand-side of (\ref{eq_delay1}) w.r.t. $\delta$ we find that $\delta = \log{e}\left(\frac{\pmax}{\nu(P,Q)}\right)^d
|\m{I}^\m{E}(x^n)|$ minimizes the bound. Substituting into (\ref{eq_delay1}) and noting that the bound is independent of $x^n$, (\ref{eq:delay_prob}) is proved\footnote{Observe that (\ref{eq_delay1}) holds even if $\delta>|\m{I}^\m{E}(x^n)|$, in which case our bound becomes trivial.}.

\section{A Lower Bound for $\underline{E}(P)$}\label{sec:lower_bound}
In this section we use the delay's tail distribution mentioned in the previous section, to derive an upper bound for the redundancy--delay function, and hence a lower bound on the inf--redundancy--delay exponent, via a specific arithmetic coding scheme. We emphasize that unlike \cite{jelinek68}, the presented scheme is error free, hence there is zero probability of buffer overflow. Moreover, our figure of merit is the delay in source symbols vs. the redundancy in encoded bits per symbol.

\subsection{A Finite Delay Result}

\begin{theorem}\label{thrm:rate_loss}
The redundancy--delay function for a source $P$ is upper bounded by
\begin{equation}\label{eq:rd_func_bound}
\mf{R}(P,d) \leq
2\pmax^{d-c(\pmax)}\Big{(}(d-c(\pmax))\log{(2\slash\pmax)}+1+\kappa\Big{)}^2
\end{equation}
where
\begin{equation*}
c(x) = \left\{\begin{array}{cc} 0 & x<\frac{1}{16}
\\ 2\left\lfloor\frac{1}{\log{(2\slash x)}}\right\rfloor-1  & o.w. \end{array}\right.
\end{equation*}.
\end{theorem}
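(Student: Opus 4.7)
The plan is to exhibit an explicit $d$-delay-constrained encoder $\m{E}\in\mf{C}_d$ meeting the claimed redundancy bound, using the recipe advertised in the introduction: mismatched arithmetic coding on an enlarged alphabet, together with a fictitious-symbol insertion mechanism that enforces the hard delay constraint. Concretely, I would enlarge the alphabet to $\m{X}\cup\{\xi\}$ and work with the modified distribution $Q(\xi)=q$, $Q(x)=(1-q)P(x)$ for $x\in\m{X}$, where $q\in(0,1)$ is a parameter to be optimized. Letting $\m{E}_Q$ be the arithmetic encoder matched to $Q$, note that $\nu(P,Q)=1/(1-q)$ and $\qmax=(1-q)\pmax$ once $q$ is small, so Theorem \ref{thrm:memoryless} yields
\begin{equation*}
\Prob\bigl(\Delta^{\m{E}_Q}(P)>d'\bigr)\leq 2\pmax^{d'}\bigl(d'\log(2/\pmax)+\kappa+1\bigr)
\end{equation*}
after specializing, say, to $q\leq 1/2$ and absorbing the second term in (\ref{eq:delay_prob}) into the first.

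The strict-delay encoder $\m{E}$ is then obtained by running $\m{E}_Q$ on the actual source stream, but inserting a $\xi$ at the current position whenever the natural delay of $\m{E}_Q$ on the real symbols encoded so far would exceed $d':=d-c(\pmax)$. Because $\xi$ occupies a dedicated sub-interval of length $q$, each insertion forces the encoder to flush bits; the slack $c(\pmax)$ is tuned precisely so that at most that many additional positions of lookahead are needed after an insertion for the pending real symbols to become decodable, which certifies $\m{E}\in\mf{C}_d$. The two-case shape of $c(\pmax)$ comes from a worst-case analysis of how many consecutive $\xi$'s may be needed when $\pmax$ is large enough that a single insertion does not suffice; in the regime $\pmax<1/16$ one $\xi$ always works and $c=0$.

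For the redundancy I would apply Lemma \ref{lem:redundancy_bounds}(i) to the generalized interval-mapping representation of $\m{E}$, writing the per-symbol cost as the sum of (a) the mismatch overhead, of order $\log(1/(1-q))\approx q$ per real symbol, and (b) the insertion overhead, bounded by the probability of an insertion times the bits consumed per $\xi$ (at most $\log(1/q)$ plus the $c(\pmax)$ follow-up bits). Using the tail bound above, the per-symbol redundancy is at most of order
\begin{equation*}
q+2\pmax^{d'}\bigl(d'\log(2/\pmax)+\kappa+1\bigr)\log(1/q).
\end{equation*}
Optimizing by choosing $q$ of the same order of magnitude as the tail-probability factor produces the squared polynomial in the statement, since $\log(1/q)$ then contributes an extra factor of $d'\log(2/\pmax)$ matching the one already present.

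The main obstacle is verifying the insertion step: that the fictitious-symbol rule actually yields an encoder in $\mf{C}_d$ and that $c(\pmax)$ as defined really captures the worst-case follow-up length. This requires a careful interval-geometric argument, in the spirit of the forbidden-points analysis of Section \ref{sec:delay_AC}, showing that after inserting $\xi$ (plus up to $c(\pmax)$ additional $\xi$'s) the minimal binary interval containing the current AC state is contained in the sub-interval representing the last undecoded real prefix, so that all pending real symbols become decodable. Everything else---the tail bound of Theorem \ref{thrm:memoryless}, the mismatch decomposition via Lemma \ref{lem:redundancy_bounds}(i), and the parameter optimization---is a routine assembly once this structural fact is in place.
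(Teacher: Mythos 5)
Your high-level recipe (mismatched arithmetic coding on an extended alphabet, fictitious-symbol insertion whenever the emulated decoding delay would overshoot, tail bound from Theorem~\ref{thrm:memoryless}, redundancy via Lemma~\ref{lem:redundancy_bounds}(i), and the choice $q\sim\pmax^d$) is the right one and matches the paper in spirit. However, there is a genuine structural gap in the insertion step, and it is exactly the step you flag as ``the main obstacle.''

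With a \emph{single} fictitious symbol $\xi$ occupying one fixed sub-interval of the current encoder interval, there is no guarantee that inserting $\xi$ ever nullifies the delay. By Lemma~\ref{lem:fp}, decoding the pending prefix $x^m$ requires the post-insertion binary interval to avoid the forbidden points $S_0(\m{I}^\m{E}(x^m))$, and these cluster around the midpoint $m(\m{I}^\m{E}(x^m))$. Since that midpoint's fractional position inside the current interval is dictated by the history and can be adversarial, any fixed sub-interval can be made to straddle it; once it does, shrinking further inside $\xi$'s interval still straddles the same cluster, so neither the insertion nor any bounded number of follow-up symbols resolves the delay. The paper's fix is to introduce \emph{two} fictitious symbols $x_L,x_R$ located in the specific sub-intervals $I_L=(\varphi_I(3/8),\varphi_I(1/2))$ and $I_R=(\varphi_I(1/2),\varphi_I(5/8))$, and to prove (Lemma~\ref{lem:forb_free}) that for \emph{every} interval $I$ at least one of $I_L,I_R$ is free of forbidden points of $I$; the encoder then chooses the good one at each insertion. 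This dichotomy lemma is the load-bearing fact that certifies $\m{E}'\in\mf{C}_d$, and it has no analogue in your single-$\xi$ construction.

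A second, smaller misattribution: $c(\pmax)$ has nothing to do with ``consecutive $\xi$'s'' or follow-up lookahead after an insertion. It arises because the placement of $I_L,I_R$ (each of fractional length $1/8$) requires every symbol probability of the extended mismatched law to be below $1/16$; if $\pmax\geq 1/16$, the paper aggregates $k$ source letters into a super-symbol so that $\pmax^k<1/16$, and $c(\pmax)$ is the resulting loss in the effective delay budget $\widetilde d=\lfloor (d+1)/k-1\rfloor$. Correspondingly, the insertion overhead in the redundancy decomposition is just $w_k\log(1/\eps)$ (via the chain rule for divergence and $D(P^+\|P^+_\eps)=\log\frac{1}{1-2\eps}$); there is no extra ``$c(\pmax)$ follow-up bits'' term. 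Once you replace the single $\xi$ by the $(x_L,x_R)$ pair and invoke Lemma~\ref{lem:forb_free}, the rest of your outline assembles correctly into the paper's bound.
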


\begin{corollary}
The inf--redundancy--delay exponent for a source $P$ is lower bounded by
\begin{equation*}
\underline{E}(P) \geq \log(1\slash\pmax)
\end{equation*}
\end{corollary}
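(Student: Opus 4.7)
The plan is to derive the corollary directly from Theorem~\ref{thrm:rate_loss} by taking a logarithm and passing to the limit. Recall that
\[
\underline{E}(P) = \liminf_{d\rightarrow\infty} -\frac{1}{d}\log \mf{R}(P,d),
\]
so the task reduces to showing that the right-hand side of \eqref{eq:rd_func_bound} has exponential rate exactly $\log(1/\pmax)$, with all other factors being sub-exponential in $d$.

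First I would observe that $c(\pmax)$ is a fixed constant, independent of $d$ (it depends only on $\pmax$, a property of the source). Consequently the dominant factor $\pmax^{d-c(\pmax)}$ has exponential decay rate $\log(1/\pmax)$, while the polynomial factor $\bigl((d-c(\pmax))\log(2/\pmax)+1+\kappa\bigr)^2$ grows at most like $d^2$ and hence contributes only $\frac{2\log d}{d}+O(1/d)$, which vanishes as $d\to\infty$. Collecting these contributions inside $-\frac{1}{d}\log$ of \eqref{eq:rd_func_bound} and taking $\liminf$ yields
\[
\underline{E}(P) \;\geq\; \lim_{d\to\infty}\left(1-\tfrac{c(\pmax)}{d}\right)\log\tfrac{1}{\pmax} \;=\; \log\tfrac{1}{\pmax},
\]
as claimed.

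There is no genuine obstacle here: the corollary is a routine asymptotic consequence of the finite-delay bound. The only item deserving a moment of care is verifying that $c(\pmax)$ truly does not scale with $d$, so that the shift $d \mapsto d - c(\pmax)$ in the exponent contributes only a $d$-independent multiplicative factor $\pmax^{-c(\pmax)}$, which becomes negligible after dividing by $d$ inside the logarithm. Since this is immediate from the piecewise definition of $c(\cdot)$, the argument is complete.
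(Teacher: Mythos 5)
Your proposal is correct and takes the same route the paper intends: the corollary is stated as an immediate consequence of Theorem~\ref{thrm:rate_loss}, and the paper offers no separate argument precisely because the only work required is the routine asymptotics you carry out --- noting that $c(\pmax)$ is a $d$-independent constant and that the polynomial prefactor contributes $O((\log d)/d)\to 0$ inside $-\frac{1}{d}\log(\cdot)$, so the exponential rate $\log(1/\pmax)$ survives the $\liminf$.
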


\begin{proof}
Let us first describe the high-level idea behind the proof. We extend the source's alphabet by adding two \textit{fictitious symbols}, and then encode the source using a slightly mismatched arithmetic encoder. The encoder keeps track of the decoding delay, and whenever the delay reaches $d+1$, it inserts a fictitious symbol that nullifies the delay. There are three key points: 1) There exists a mapping such that there is always at least one fictitious symbol whose interval contains no forbidden points, 2) The length assigned to the fictitious symbols can be made very small, and 3) The probability of insertion, bounded via Theorem \ref{thrm:memoryless}, is also very small.

For any interval $I=[a,b)$, let
\begin{equation*}
\varphi_I(\lambda) \dfn (1-\lambda)a + \lambda b
\end{equation*}
and define define the two disjoint subintervals
\begin{equation*}
I_L \dfn
\left(\varphi_I\left(3\slash8\right)\,,\,\varphi_I\left(1\slash
2\right)\right)\,,\; I_R\dfn
\left(\varphi_I\left(1\slash2\right)\,,\,\varphi_I\left(5\slash
8\right)\right)
\end{equation*}
The first key point is established in the following Lemma.
\begin{lemma}\label{lem:forb_free}
For any interval $I\subseteq\uico$, either $I_L\cap S_0(I) = \emptyset$ or $I_R\cap S_0(I) = \emptyset$.
\end{lemma}
\begin{proof}[Proof of Lemma \ref{lem:forb_free}]
Write $m=m(\m{I}^\m{E}(x^n))$ for short. Without loss of generality, assume that $m\leq \varphi_I(1/2)$. There are two cases:
\begin{enumerate}[(1)]
\item $m\leq \varphi_I(3/8)$: It is easily verified that the right adjacent of $m$ satisfies $r(m)>\varphi_I(1/2)$, as otherwise
    \begin{equation*}
    m+2(r(m)-m)\in I
    \end{equation*}
contradicting the maximality in the definition of the right adjacent. Therefore in this case $I_L$ contains no forbidden points of $I$.

\item $m>\varphi_I(3/8)$: By our assumption $m<\varphi_n(1/2)$, hence
\begin{equation*}
r(m) - m \geq \frac{\varphi_I(1)-\varphi_I(1\slash 2)}{2}
\end{equation*}
Rewriting, we have
\begin{equation*}
r(m) \geq m+\frac{\varphi_I(1)-\varphi_I(1\slash 2)}{2} \geq \varphi_I(5/8)
\end{equation*}
and therefore $I_R$ contains no forbidden points.
\end{enumerate}
\end{proof}

Returning to the proof of Theorem \ref{thrm:rate_loss}, define an extended alphabet $\m{X}^+ = \m{X}\cup \{x_L,x_R\}$ where $x_L,x_R$ are two \textit{fictitious symbols}. Let $P^+\in\ms{P}(\m{X}^+)$ be the corresponding extension of the source $P$ to $\m{X}^+$, assigning zero probability to the fictitious symbols. For $0<\eps<\pmax$, let $P^+_\eps\in\ms{P}(\m{X}^+)$ be a source with the following symbol probabilities:
\begin{equation*}
P^+_\eps(x) = \left\{\begin{array}{ll}(1-2\eps)P(x) & x\in\m{X} \\ \eps & x\in\{x_L,x_R\} \end{array}\right.
\end{equation*}
Clearly, $\max P^+_\eps(x) = (1-2\eps)\pmax<\frac{1}{16}$ and $\nu(P^+,P^+_\eps) = \frac{1}{1-2\eps}$. Let $<$ be any order of $\m{X}$. Assuming $P^+_\eps(x)<\frac{1}{16}$ for all $x\in\m{X}^+$, and since $|I_L|=|I_R| = |I|\slash 8$, then it is easy to see there exists a order $<^+$ of $\m{X}^+$ that preserves $<$ over $\m{X}$, such that the arithmetic encoder $\m{E}$ w.r.t. $<^+$ matched to $P^+_\eps$ has the fictitious symbols $x_L,x_R$ mapped into intervals contained in $\m{I}^\m{E}(x^n)_L$ and $\m{I}^\m{E}(x^n)_R$, respectively. If the condition on $\pmax$ is not satisfied, then we can always aggregate a few symbols into a super-symbol, so that the maximal product probability satisfies the required condition (the effect of this aggregation on the delay is treated later on). To encode the source $P^+$, let us now use the arithmetic encoder for $P^+_\eps$ above together with the following fictitious symbol insertion algorithm: The encoder keeps track of the decoding delay by emulating the decoder. Whenever this delay reaches $d+1$, the encoder finds which one of $\m{I}^\m{E}(x^n)_L$ or $\m{I}^\m{E}(x^n)_R$ contains no forbidden point as guaranteed by Lemma \ref{lem:forb_free}, and inserts the corresponding fictitious symbol $x_L$ or $x_R$ respectively, hence nullifying the decoding delay. This way, the decoding delay never exceeds $d$ and no errors are incurred.

We now bound the redundancy incurred by the encoder $\m{E}'\in\mf{C}_d$ described above. There are two different sources of redundancy. The first is due to the mismatch between $P^+$ and $P^+_\eps$, and the second is due to the coding of the inserted fictitious symbol. At each time $k>d$, the probability $w_k$ for an insertion can be bounded via Theorem \ref{thrm:memoryless}:
\begin{align}\label{eq:insertion_prob}
\nonumber w_k &= \Prob(\Delta^{\m{E}'}(X^{k-d},P) > d) \leq \Prob(\Delta^{\m{E}'}(P) > d)
\\
\nonumber &\leq 2\pmax^d\left(d\log{\left(\frac{1}{(1-2\eps)\pmax}\right)}+\kappa\right )
\\
\nonumber &\quad+ 2(1-2\eps)^d\pmax^d(1-2\eps)^{-d}
\\
& = 2\pmax^d\left(d\log{\left(\frac{1}{(1-2\eps)\pmax}\right)}+\kappa+1\right )
\end{align}
Now, let $P^{+n}$ be the $n$-product of $P^+$, and write
\begin{align*}
\mf{R}_n^{\m{E}'}(P) & = \mf{R}_n^{\m{E}'}(P^+) \stackrel{({\rm a})}{\leq} R_n^{\m{E}'}(P^+) = \frac{1}{n}D(P^{+n}\|\mu_n^{\m{E}'})
\\
& \stackrel{({\rm b})}{=} \frac{1}{n}\sum_{k=1}^n \Expt\left(D(P^+\|\mu_1^{\m{E}'}(\cdot|X^{k-1}))\right)
\\
& \stackrel{({\rm c})}{=}  D(P^+\|P^+_\eps) + \frac{1}{n}\log\frac{1}{\eps}\sum_{k=1}^n w_k
\\
& \stackrel{({\rm d})}{\leq} 2\log\left(\frac{1}{\eps}\right)\pmax^d\left(d\log{\left(\frac{1}{(1-2\eps)\pmax}\right)}+\kappa+1\right )
\\
&\quad+\log\frac{1}{1-2\eps}
\\
& \stackrel{({\rm e})}{\leq} 2\log\left(\frac{1}{\eps}\right)\pmax^d\left(2d\log{\left(\frac{2}{\pmax}\right)}+\kappa+1\right) + 4\eps
\end{align*}
The transitions are justified as follows:
\begin{enumerate}[(a)]
\item Lemma \ref{lem:redundancy_bounds}.
\item The chain rule for the divergence, and the fact that $P^{+n}$ is a product (memoryless) distribution.
\item Given $X^{k-1}$, $\mu_1^{\m{E}}$ follows $P^+_\eps$ with an extra multiplication by $\eps$ if and only if $X^{k-1}$ is such that there is an insertion. Hence the the expected divergence given $X^{k-1}$ always yields the term $D(P^+\|P^+_\eps)$, and an extra $\log{1/\eps}$ multiplied by the probability of an insertion $w_k$.
\item The bound for $w_k$ given in (\ref{eq:insertion_prob}), and $D(P^+\|P^+_\eps) = \log\frac{1}{1-2\eps}$.
\item $\log\frac{1}{1-2\eps} \leq 4\eps$ for $0<\eps<\frac{1}{16}$.
\end{enumerate}

Setting $\eps=\pmax^d$, we get:
\begin{align}\label{eq:rate_lss_bound}
\nonumber \mf{R}_n^{\m{E}'}(P) &\leq  2\pmax^d\hspace{-2pt}\left(\hspace{-1pt}d\log{\left(\frac{2}{\pmax}\right)}\hspace{-2pt}+\kappa+1\hspace{-1pt}\right)d\log{\frac{1}{\pmax}} \hspace{-1pt}+ 4\pmax^d
\\
&\leq 2\pmax^d\hspace{-2pt}\left(\hspace{-1pt}d\log{\left(\frac{2}{\pmax}\right)}\hspace{-2pt}+\kappa+1\hspace{-1pt}\right)^2
\end{align}

Finally, we address the case where $\pmax>\frac{1}{16}$. As mentioned before, we aggregate a minimal number of source symbols $k$ into a super-symbol, such that $\pmax^k < \frac{1}{16}$. This
means that $1<k<\left\lfloor\frac{4}{\log{1\slash\pmax}}\right\rfloor$. We now carry out the above procedure for the $k$-product alphabet. However, since decoding is performed $k$ symbols at a time, we set our delay threshold to be $\widetilde{d}  = \left\lfloor\frac{d+1}{k}-1\right\rfloor$. Substituting the above into (\ref{eq:rate_lss_bound}) we get
\begin{align*}
\mf{R}_n^{\m{E}'}(P) &\leq 2\pmax^{k\widetilde{d}}\left (\widetilde{d} \log(2\slash \pmax^k)+\kappa+1\right)^2
\\
&\leq
2\pmax^{d-c(\pmax)}\left((d-c(\pmax))\log{(2\slash\pmax)}+\kappa+1\right)^2
\end{align*}
\end{proof}
\begin{remark}
The scheme described above also allows the encoder to change the delay constraint \textit{on the fly}, by inserting a suitable fictitious symbol in accordance to the modified constraint. Once the decoder is made aware of this change, both encoder and decoder need to simultaneously adjust the probability of the fictitious symbols.
\end{remark}

\subsection{An Asymptotic Result}
\begin{theorem}\label{thrm:asymp_exp}
The inf--redundancy--delay exponent for a source $P$ is lower bounded by the R\'enyi entropy of order $2$ of the source, i.e., 
\begin{equation*}
\underline{E}(P) \geq H_2(P)
\end{equation*}
\end{theorem}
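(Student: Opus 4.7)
The plan is to generalize the construction of Theorem \ref{thrm:rate_loss} to the super-alphabet $\m{X}^k$ for a suitably large block length $k=k(d)$, and to replace the near-matched arithmetic encoder used there by a carefully \emph{mismatched} one whose coding distribution $Q_k\in\ms{P}(\m{X}^k)$ is close to $P^k$ but tuned so as to push the effective delay exponent from $\log(1/\pmax)=H_\infty(P)$ up to $H_2(P)$. The guiding identity is the one underlying Lemma \ref{lem:renyi} for $\alpha=2$, namely $\sum_{y^k}P(y^k)^2=2^{-kH_2(P)}$, which identifies $H_2(P)$ as the natural ``collision'' exponent when the typical interval lengths under $P^k$ and under the coder both scale like the source probability.

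Concretely, I would proceed as follows: (i) view $P$ as a memoryless source on $\m{X}^k$; (ii) adjoin two fictitious super-symbols $x_L,x_R$ each of weight $\eps$ and build an arithmetic encoder matched to the smoothed coding distribution $(1-2\eps)Q_k+\eps\delta_{x_L}+\eps\delta_{x_R}$, inserting a fictitious super-symbol whenever the super-alphabet delay threatens to exceed some $d_s$, exactly as in Theorem \ref{thrm:rate_loss} and Lemma \ref{lem:forb_free}; (iii) bound the insertion probability by the super-alphabet version of Theorem \ref{thrm:memoryless}, whose two terms are driven by $\pmax^{(k)d_s}$ and $(\qmax^{(k)}\,\nu(P^k,Q_k))^{d_s}$; (iv) sum the mismatch contribution $D(P^k\|Q_k)/k$ and the insertion cost $\log(1/\eps)\cdot w_{d_s}$ via Lemma \ref{lem:redundancy_bounds}(i); (v) optimize $k$, $d_s$, $\eps$, and $Q_k$ against the constraint that the total original-symbol delay equal $d$. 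The right $Q_k$ is essentially $P^k$ with its unusually heavy atoms smoothed down, so that $\qmax^{(k)}\le 2^{-kH_2(P)}$ while $\nu(P^k,Q_k)=O(1)$ and $D(P^k\|Q_k)\to 0$; the variational form $H_2(P)=\min_Q\{2D(Q\|P)+H(Q)\}$ from Lemma \ref{lem:renyi} is what forces the correct tradeoff. Picking $k=\Theta(d)$ and $\eps$ polynomially small in $d$ should then give $\mf{R}(P,d)\lessapprox 2^{-dH_2(P)}$, hence $\underline{E}(P)\ge H_2(P)$.

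The main obstacle is the first term $2\pmax^{(k)d_s}\,\mathrm{poly}(d_s)$ in Theorem \ref{thrm:memoryless}: when translated to the original delay $D=kd_s$ it reads $2\pmax^{D}\,\mathrm{poly}(D/k)$, whose exponent in $D$ is only $H_\infty(P)\le H_2(P)$. Overcoming this requires replacing the crude per-forbidden-point inequality $\Prob(a\in\m{I}^\m{E}(X^{n+d})\mid X^n)\le \pmax^d$ from Lemma \ref{lem:inclusion_prob} by a second-moment bound in which each interior forbidden point is charged its actual hit probability $P(y^d(a))$ and the total is controlled by $\sum_{y^d}P(y^d)^2=2^{-dH_2(P)}$ rather than by $|S_\delta|\,\pmax^d$. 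Equivalently, and perhaps more cleanly, one may exploit the generalized interval-mapping freedom afforded by Lemma \ref{lem:gim_rep} to design an encoder whose super-alphabet intervals are unions of binary intervals, thereby annihilating the interior forbidden points and leaving only the edge term $(\qmax\nu)^{d_s}$, which the blocked, Rényi-tuned choice of $Q_k$ already pushes down to $2^{-dH_2(P)}$. Once this refined delay bound is combined with the fictitious-symbol scheme on $\m{X}^k$, the redundancy optimization above closes out the theorem.
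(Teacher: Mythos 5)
Your high‑level target is exactly right — you correctly identify that the bottleneck is the uniform per‑point bound $\pmax^d$ from Lemma \ref{lem:inclusion_prob}, and that $H_2(P)$ is the natural ``collision'' exponent governed by $\sum_{y^d}P(y^d)^2=2^{-dH_2(P)}$. You also correctly plan to reuse the fictitious‑symbol scheme of Theorem \ref{thrm:rate_loss} on the super‑alphabet $\m{X}^d$. What is missing is the single new idea that actually realizes the second‑moment bound: the paper \emph{randomizes the alphabet order} of the arithmetic encoder, redrawing at each $d$‑block an order $<_{Y_k^d}$ obtained by rotating a type‑class‑contiguous order so that a fresh $Y_k^d\sim P^d$ sits first. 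This random rotation makes the super‑symbol covering a fixed forbidden point $a$ essentially ``typical'': the probability (over $Y_k^d$) that the interval of type class $T_Q$ covers $a$ is at most $P^d(T_Q)+\pmax^d\approx 2^{-dD(Q\|P)}+\pmax^d$, and conditioned on that the unique covering super‑symbol in $T_Q$ has probability $2^{-d(D(Q\|P)+H(Q))}$, giving $\sum_Q(2^{-dD(Q\|P)}+\pmax^d)2^{-d(D(Q\|P)+H(Q))}$ and, after showing the optimizer satisfies $D(Q^*\|P)\le\log(1/\pmax)$, exponent $\min_Q\{2D(Q\|P)+H(Q)\}=H_2(P)$ by Lemma \ref{lem:renyi}. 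The expected (over the random orders) redundancy is then $O(2^{-dH_2(P)})$, and a deterministic encoder exists by the standard derandomization. Without some such randomization, the sum $\sum_{a\in S_0}P(y^d(a))$ is not controlled by $\sum_{y^d}P(y^d)^2$: for a fixed (adversarial) order, forbidden points can all be covered by near‑maximal‑probability super‑symbols, recovering the $|S_\delta|\pmax^d$ bound you are trying to beat.

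Both fallbacks you offer in the last paragraph fail. First, the ``Rényi‑tuned'' choice of $Q_k$ with $\qmax^{(k)}\le 2^{-kH_2(P)}$, $\nu(P^k,Q_k)=O(1)$, and $D(P^k\|Q_k)\to 0$ is impossible: one always has $\nu(P,Q)\,\qmax\ge\nu(P,Q)\,Q(x^*)\ge P(x^*)=\pmax$, so $(\qmax^{(k)}\nu(P^k,Q_k))^{d_s}\ge\pmax^{kd_s}$ no matter how $Q_k$ is chosen, and in particular $\qmax^{(k)}\le 2^{-kH_2(P)}$ forces $\nu(P^k,Q_k)\ge 2^{k(H_2(P)-H_\infty(P))}$, which is exponentially large unless $P$ is uniform. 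Second, the ``annihilate interior forbidden points via generalized interval mappings'' idea does not work: the forbidden points $S_0(I)$ are determined by the binary‑midpoint/adjacent structure of $\mbi{I}$ and exist for every interval, including binary intervals; shaping $\m{I}^\m{E}(s)$ as a union of binary intervals does not remove them, and in any case Lemma \ref{lem:gim_rep} ties $\m{C}_d$ to the generalized‑interval‑mapping form rather than licensing an arbitrary choice of one. So the proposal, as stated, does not close the theorem; you need the random‑rotation mechanism (or an equivalent averaging device) to make the intuitive $\sum P^2$ bound rigorous.
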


\begin{proof}[Proof of Theorem \ref{thrm:asymp_exp}]
We construct a unit delay encoder for the product source $P^d$ using fictitious symbols in a similar way as done in Theorem \ref{thrm:rate_loss}, with an additional random coding argument. Let $<$ be a order of $\m{X}^d$ such that all super-symbols in the same type class are adjacent (and otherwise arbitrary). Let $<_{y^d}$ be a new order which is obtained by a \textit{rotation} of the order $<$, making $y^d$ the smallest element, i.e., the unique order that respects $<$ for each of the sets $\{y^d\}\cup\{z^d: y^d < z^d\}$ and $\{z^d: z^d<y^d\}$, and where the maximal element in the latter set is the maximal element under $<_{y^d}$. Finally, let $<_{y^d}^+$ be the order of $\m{X}^{d+} \dfn \m{X}^d\cup \{x_L,x_R\}$ that respects $<_{y^d}$ over $\m{X}^d$, such that the arithmetic encoder $\m{E}$ w.r.t. $<_{y^d}^+$ matched to $P^{d+}_\eps$ has the fictitious symbols $x_L,x_R$ mapped into intervals contained in $\m{I}^\m{E}(x^n)_L$ and $\m{I}^\m{E}(x^n)_R$, respectively, and are (say) of the minimal order satisfying this.

Let us now draw an i.i.d. sequence $(Y_1^d,Y_2^d,\ldots)$ with a marginal $P^d$, independent of the source sequence. At time instance $k$ (where time is now w.r.t. the product source), we use an arithmetic encoder w.r.t. the random order $<_{Y_k^d}$, and matched to $P^d_\eps$. Denote the associated random interval--mapping encoder by $\ms{E}$. It is easy to see that for any point $a\in \m{I}^\m{E}(x^{nd})$, the probability that the interval corresponding to a type $Q$ will include $a$ is upper bounded $\pmax^d$ plus the probability of the type class $T_Q$ under $P^d$, where by Lemma \ref{lem:types} the latter is upper bounded by $2^{-dD(Q\|P)}$. By the same Lemma, the probability of any super-symbol within the type class $T_Q$ is $2^{-d(D(Q\|P)+H(Q))}$. Thus,
\begin{align}
\nonumber \Prob&\left(a\in \m{I}^\ms{E}(X^{n(d+1)}) | X^{nd} = x^{nd}\right)
\\
& \leq \sum_{Q\in\ms{P}^d(\m{X})}\left(2^{-dD(Q\|P)}+\pmax^d\right)2^{-d(D(Q\|P)+H(Q))}
\end{align}
Taking the limit as $d\rightarrow\infty$, and since there is only a polynomial number of types, we obtain
\begin{align}
\nonumber &\lim_{d\rightarrow\infty}-\frac{1}{d}\log\Prob\left(a\in \m{I}^\ms{E}(X^{n(d+1)}) | X^{nd} = x^{nd}\right)
\\
\nonumber & \geq \inf_{Q\in\ms{P}(\m{X})}\hspace{-3pt}\left\{\hspace{-2pt}D(Q\|P)+H(Q)+ \min\left(D(Q\|P),\log\frac{1}{\pmax}\right)\hspace{-2pt}\right\}
\end{align}
Let $V(Q)$ denote the function over which the infimum above is taken, and assume without loss of generality that $P$ is strictly nonzero over $\m{X}$. $V(Q)$ is continuous and the infimum is taken over a compact set, hence is attained for some $Q^*\in\ms{P}(\m{X})$. Suppose that $D(Q^*\|P) > \log{1\slash \pmax}$. Let $x\in\m{X}$ be such that $P(x) = \pmax$, and suppose there exists $y\in\m{X}$ such that $P(y)< \pmax$ and $Q^*(y) > 0$. Generate a perturbed distribution $Q^\dagger$ by increasing the probability assigned by $Q^*$ to $x$ by some $\beta>0$, and decreasing the probability assigned by $Q^*$ to $y$ by the same $\beta$, leaving the other probabilities unchanged. This implies that
\begin{equation*}
D(Q^\dagger\|P)+H(Q^\dagger) < D(Q^*\|P)+H(Q^*)\,,
\end{equation*}
since the above is equivalent (by direct calculation) to $\beta\log{\left(P(x)\slash P(y)\right)} > 0$, which holds true under the assumptions made. Now, by continuity, there exists $\beta$ small enough such that $D(Q^\dagger\|P) > \log{1\slash \pmax}$. Hence $V(Q^\dagger)<V(Q^*)$ for such $\beta$, contradicting the minimality of $Q^*$. If such $y$ does not exist, then $P(x)=\pmax$ over the entire support of $Q^*$. Therefore, $D(Q^*\|P) = \log{1\slash\pmax}-H(Q^*) \leq \log{1\slash\pmax}$, in contradiction to our assumption. We conclude that $D(Q^*\|P) \leq \log{1\slash\pmax}$. Hence,
\begin{align*}
\lim_{d\rightarrow\infty}&-\frac{1}{d}\log\Prob\left(a\in \m{I}^\ms{E}(X^{n(d+1)}) | X^{nd} = x^{nd}\right)
\\
& = \min_{Q\in\ms{P}(\m{X})}\left\{2D(Q\|P)+H(Q)\right\} = H_2(P)
\end{align*}
where Lemma \ref{lem:renyi} was invoked in the last equality. Continuing this line of argument, we can essentially replace $\pmax^d$ with $2^{-dH_2(P)}$ for $d$ large enough, throughout our proofs. Therefore, the redundancy averaged over the ensemble of random $d$-delay constrained encoders is bounded by
\begin{equation}
\Expt\left(\mf{R}^\ms{E}(P)\right) = O\left(2^{-dH_2(P)}\right)
\end{equation}
and thus there exists a deterministic encoder $\m{E}$ achieving at least that expected  performance, concluding the proof.
\end{proof}

\section{An Upper Bound for $\overline{E}(P)$}\label{sec:upper_bound}
In this section we prove an upper bound on the sup--redundancy--delay exponent, hence obtaining an asymptotic lower bound for the redundancy--delay function. This characterizes the best possible redundancy achievable by any delay-constrained encoder. Our bound holds for \textit{almost any} memoryless source, which is meant w.r.t. the Lebesgue measure over the probability simplex. 
\begin{theorem}\label{thrm:delay_redundancy}
For almost any memoryless source $P$, the sup--redundancy--delay exponent is upper bounded by
\begin{equation}\label{eq:E_LB}
\overline{E}(P) \leq 8\log\left(\frac{|\m{X}|}{\pmin}\right)
\end{equation}
\end{theorem}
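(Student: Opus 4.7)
The plan is to establish, via a Borel--Cantelli argument in $d$, the equivalent assertion that $\mf{R}(P,d)\geq 2^{-d(8\log(|\m{X}|/\pmin)+o(1))}$ eventually in $d$, for Lebesgue-almost every $P\in\ms{P}(\m{X})$. First I would invoke Lemma~\ref{lem:redundancy_bounds}(ii) to rewrite, for any $\m{E}\in\mf{C}_d$,
\[\underline{\mf{R}}^\m{E}(P)=\liminf_{n\to\infty}\frac{1}{nd}\sum_{k=1}^n\Expt\bigl(r_d(X^k)\bigr),\quad r_d(x^k)=D\bigl(P^d\,\big\|\,\mu^\m{E}_d(\cdot|x^k)\bigr).\]
Hence it suffices to show that, for almost every $P$, $D(P^d\|\mu)\gtrsim(\pmin/|\m{X}|)^{8d}$ \emph{uniformly} over $\mu\in\Pi_d$, the set of sub-probability vectors on $\m{X}^d$ realizable as $\mu^\m{E}_d(\cdot|x^k)$ for some $\m{E}\in\mf{C}_d$ and $x^k\in\m{X}^*$. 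By Lemma~\ref{lem:gim_rep}, each $\m{I}^\m{E}(s)$ is a finite union of at most $|\m{X}|^d$ binary intervals, so $|\m{I}^\m{E}(s)|$ is a dyadic rational; consequently $\Pi_d\subset\mb{Q}^{\m{X}^d}$ is countable, with a quantifiable dyadic grain size.

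I would next prune: any encoder whose binary intervals are finer than roughly $(\pmin/|\m{X}|)^{cd}$, for a suitable constant $c$, can be coarsened without worsening its redundancy by more than a factor one can absorb into the exponent. This yields an effective subset $\Pi_d^{\rm eff}\subseteq\Pi_d$ with $|\Pi_d^{\rm eff}|\leq(|\m{X}|/\pmin)^{O(d)}$. For each $\mu\in\Pi_d^{\rm eff}$, Pinsker's inequality converts the target divergence bound $D(P^d\|\mu)\geq(\pmin/|\m{X}|)^{8d}$ into the total-variation bound $\|P^d-\mu\|_1\gtrsim(\pmin/|\m{X}|)^{4d}$, and the smoothness of the product map $P\mapsto P^d$ (whose Jacobian at a source with minimal mass $\pmin$ is controlled by $\pmin^{-d}$) gives $|\{P:\|P^d-\mu\|_1<\eps\}|\leq C(\eps\cdot\pmin^{-d})^{|\m{X}|-1}$. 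A union bound over $\Pi_d^{\rm eff}$ then bounds the Lebesgue measure of $B_d\dfn\{P:\mf{R}(P,d)<(\pmin/|\m{X}|)^{8d}\}$ by a sequence that is summable in $d$; Borel--Cantelli yields $|\limsup_d B_d|=0$, proving $\overline{E}(P)\leq 8\log(|\m{X}|/\pmin)$ for almost every $P$.

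The main obstacle will be the coordinated bookkeeping of exponents in this last step. The constant $8$ should emerge from three combined contributions: (i) the quadratic loss in Pinsker's inequality (which converts an exponent of $8$ in the divergence to an exponent of $4$ in the total variation); (ii) the $\pmin^{-d}$ Jacobian of the product map $P\mapsto P^d$; and (iii) the $1/8$-granularity of the forbidden-point geometry from Lemma~\ref{lem:forb_free}, which fundamentally limits how finely any dyadic interval-mapping can simulate an arbitrary source distribution. Coordinating these contributions into a single clean summable bound, while keeping $|\Pi_d^{\rm eff}|$ small enough for the union bound to close, is the technical heart of the proof; in particular the pruning must be tight enough that the exponent $(|\m{X}|/\pmin)^{O(d)}$ does not eat into the margin provided by $(\pmin/|\m{X}|)^{4d(|\m{X}|-1)}$.
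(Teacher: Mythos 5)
Your proposal takes a genuinely different route from the paper (a Borel--Cantelli argument over the space of sources, with a union bound over a pruned set of achievable conditional measures), whereas the paper defines an arithmetic-theoretic regularity condition on $P$, shows it holds for almost every $P$, and then proves the bound \emph{deterministically} for each regular source via a two-step time-averaged redundancy analysis. Unfortunately the proposal has a fundamental gap that I do not see how to close.

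The crucial step ``it suffices to show $D(P^d\|\mu)\gtrsim(\pmin/|\m{X}|)^{8d}$ uniformly over $\mu\in\Pi_d$'' asks for something the paper carefully avoids asking for. The quantity $r_d(x^k)=D(P^d\|\mu^\m{E}_d(\cdot|x^k))$ at a single time step need \emph{not} obey any such uniform lower bound: the paper's entire argument in Section~\ref{subsec:proof} is built around the observation that the encoder can make $r_d(x^k)$ small at a given $k$ by choosing $\m{I}^\m{E}(x^k)$ favorably, but that doing so forces the offspring intervals $\m{I}^\m{E}(x^ky^d)$ to inherit an unfavorable forbidden-points structure for a $(\mu_0,\lambda)$-regular source, so that the \emph{sum} $r_d(X^k)+r_d(X^{k+d})$ is bounded below on average. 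The sets $B_k$, $C(x^k)$, the constrained regions $\m{C}_{m,\ell}$, and Lemma~\ref{lem:regular2} exist precisely to formalize this tension between consecutive states of the encoder; none of that recursive structure is used in your argument. Put differently, Lemma~\ref{lem:redundancy_bounds}(ii) gives you an expression for $\underline{\mf{R}}^\m{E}$ as a \emph{time average} of $r_d$, and you must exploit that averaging rather than try to lower-bound every term.

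Even granting the reduction to a pointwise statement over $\Pi_d$, the counting does not close. A conditional measure $\mu^\m{E}_d(\cdot|x^k)$ is a point in a space of dimension $|\m{X}|^d-1$, and the Jacobian estimate you propose controls the Lebesgue measure of $\{P:\|P^d-\mu\|_1<\eps\}$ for a \emph{fixed} $\mu$. To run a union bound you would need to cover the set of achievable $\mu$ lying within $\eps\approx(\pmin/|\m{X}|)^{4d}$ of the $(|\m{X}|-1)$-dimensional image of $P\mapsto P^d$; with no a priori sparsity bound on $\Pi_d$, the natural cover size is doubly exponential in $d$, so the asserted bound $|\Pi_d^{\rm eff}|\leq(|\m{X}|/\pmin)^{O(d)}$ is unjustified and appears to be too optimistic. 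Relatedly, the claim that one can ``coarsen'' a $d$-delay encoder to a finite grid without losing more than a constant in the exponent is unsupported: as $k\to\infty$ the state $\m{I}^\m{E}(x^k)$ necessarily shrinks, and the ratios $\mu^\m{E}_d(\cdot|x^k)$ can attain finer and finer dyadic values, so there is no fixed granularity to prune to. Finally, the constant $8$ in the paper arises from $2(\mu_0+1)\log(|\m{X}|/\pmin)$ with $\mu_0\to 3$, where $3$ is the threshold in Lemma~\ref{lem:regular} coming from the bound $|\m{D}^{(2)}_{m,\ell}|<2^{3m-\ell+6}$ and $2$ from Pinsker; your proposed decomposition of $8$ as Pinsker ($4$) plus Jacobian plus forbidden-point granularity does not match this accounting and is not supported by a concrete calculation.
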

\begin{remark}
Note that (\ref{eq:E_LB}) cannot hold for all sources, e.g. for 2-adic sources we can have zero redundancy with zero delay, hence an infinite exponent.
\end{remark}
\begin{remark}\label{rem:ime}
When restricted to interval--mapping encoders only, a tighter upper bound of $8\log\left(1\slash \pmin\right)$ holds.
\end{remark}

\subsection{Proof Outline}
Since the proof is somewhat tedious, we find it instructive to provide a rough outline under the assumption that the encoder admits an interval--mapping representation (rather than a generalized one). This assumption will be removed in the proof itself. Due to the strict delay constraint, at any time instance the encoder must map the next $d$ symbols into intervals that do not contain any forbidden points\footnote{As mentioned in Remark \ref{rem:PQ}, avoiding forbidden points is not always a necessary condition. However, in the next section we verify this is not a restriction.}. Typically (for almost every interval), we will find an infinite number of forbidden points concentrated near the edges, with a typical ``concentration region" whose size depends on the specific interval. Clearly, the distances between consecutive points diminishes exponentially to zero. Therefore, mapping symbols to the concentration region will result in a significant mismatch between the symbol probability and the interval length, and this phenomena incurs redundancy. This observation is made precise in Lemma
\ref{lem:redundancy_left}.

Now, loosely speaking, there are two opposing strategies the encoder may use when mapping symbols to intervals. The first is to think short-term, namely to be as faithful to the source as
possible by assigning interval lengths closely matching symbol probabilities (within the forbidden points constraint). This will likely cause the next source interval to have a relatively large
concentration region, resulting in an inevitable redundancy at the subsequent mapping. The second strategy is to think long-term, by mapping to intervals with a small concentration region. This in general cannot be done while still being faithful to the source's distribution, hence this strategy also incurs in an inevitable redundancy. The latter observation is made precise in Lemma
\ref{lem:regular2}. Our bound results from the tension between these two counterbalancing sources of redundancy.

\subsection{Proof of Theorem
\ref{thrm:delay_redundancy}}\label{subsec:proof}
In light of Lemma \ref{lem:gim_rep}, we can restrict our discussion to generalized interval--mapping encoders of the form (\ref{eq:int_map_for_d}). However, we will find it more convenient to consider a broader family of generalized interval--mapping encoders, satisfying the following conditions:
\begin{enumerate}[(i)]
\item For any $s\in\m{X}^*$, $\m{I}^\m{E}(s)$ is a union of at most $|\m{X}|^d$ intervals.\footnote{To disambiguate the statement, we clarify that any two intervals whose union is an interval are counted as a single interval.}
\item For any $s\in\m{X}^*,x^d\in\m{X}^d$, $\m{I}^{\m{E}}(sx^d)$ contains no forbidden points from any of the intervals comprising $\m{I}^\m{E}(s)$.\footnote{Note that this is satisfied by (\ref{eq:int_map_for_d}), since $\mbi{\m{I}^\m{E}(sx^d)}$ is always contained in one of the intervals comprising $\m{I}^\m{E}(s)$.}
\end{enumerate}

Let $I\subseteq\uico$ be a finite union of disjoint intervals $\{I_k\}_{k=1}^K$. Recall that $S_0(I_k)$ is the set of all forbidden points in the interval $I_k$. Define: 
\begin{equation*}
A(I) \dfn \bigcup_{k=1}^K\left\{\frac{|a-b|}{|I|}: a,b\in S_0(I_k), (a,b)\cap S_0(I_k)=\emptyset\right\}
\end{equation*}
and let
\begin{equation*}
\delta_{I} = \delta_{I}(P,d) \dfn \max \{a\in A(I) : a < \pmin^d\slash 4\}
\end{equation*}
Namely, $\delta_{I}$ is the maximal distance between two consecutive forbidden points in some $I_k$, normalized by the measure of $I$, that is smaller than $\pmin^d\slash 4$.
\begin{lemma}\label{lem:redundancy_left}
$r_d(x^n) > \delta_{\m{I}^\m{E}(x^n)}$
\end{lemma}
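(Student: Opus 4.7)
Let $I = \m{I}^\m{E}(x^n)$, $\mu = \mu^\m{E}_d(\cdot\,|\,x^n)$, and $\alpha = \sum_{y^d\in\m{X}^d}\mu(y^d)\le 1$. The plan is to start by picking, via the definition of $\delta \dfn \delta_{\m{I}^\m{E}(x^n)}$, a component $I_k$ of $I$ and two consecutive forbidden points $a<b$ in $S_0(I_k)$ with $b-a = \delta|I|$ and $\delta < \pmin^d/4$. Property (ii) of the generalized interval-mapping family under consideration forbids any connected component of any $\m{I}^\m{E}(x^n y^d)$ from containing $a$ or $b$, so each such component lies entirely inside the gap $(a,b)$ or is disjoint from it. Two consequences follow immediately: the total $\mu$-mass inside $(a,b)$ is at most $\delta$, and any $y^d$ with $\m{I}^\m{E}(x^n y^d)\subseteq(a,b)$ satisfies $\mu(y^d)\le\delta<\pmin^d/4\le P(y^d)/4$.

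From here, I would use the decomposition $r_d(x^n) = D(P^d\,\|\,\tilde\mu)+\log(1/\alpha)$ with $\tilde\mu = \mu/\alpha$, and split into two cases. First, whenever the encoder wastes enough mass that $1-\alpha \ge \delta\ln 2$, the bound $r_d\ge \log(1/\alpha)\ge (1-\alpha)\log e$ already yields $r_d>\delta$. Otherwise $\alpha$ is close to $1$; combined with the fact that the $\mu$-mass outside $(a,b)$ is at most $1-\delta$, this forces at least $\alpha-(1-\delta)\ge (1-\ln 2)\delta$ of mass into the gap. Applying the log-sum inequality with $S=\{y^d:\m{I}^\m{E}(x^n y^d)\subseteq(a,b)\}$ then gives $r_d\ge P(S)\log(P(S)/\mu(S)) + (1-P(S))\log((1-P(S))/(\alpha-\mu(S)))$. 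The ratio $P(S)/\mu(S)>4$ (from the geometric observation above) forces the first term to be at least $P(S)\log 4 \ge 2\pmin^d$, which after a convexity check dominates any negative contribution from the second term.

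The hard part is the sub-case in which $S$ is empty---no single $y^d$ sits entirely inside $(a,b)$---yet the gap is nevertheless filled by small fragments of multi-component intervals $\m{I}^\m{E}(x^n y^d)$. Here the log-sum bound based on $S$ collapses to the trivial $r_d\ge\log(1/\alpha)$, which is insufficient when $\alpha$ is close to $1$. To close the argument I expect to exploit the maximality of $\delta$ (the next-larger consecutive-forbidden-point gap in $I_k$ has width at least $\pmin^d|I|/4$) together with condition (i) capping the number of components of each $\m{I}^\m{E}(\cdot)$ at $|\m{X}|^d$, so as to show that filling the gap with fragments either makes some fragment much smaller than its $P$-probability (reintroducing a large $P/\mu$ ratio for some $y^d$) or forces an $\Omega(\delta)$ waste elsewhere, reducing to the first case.
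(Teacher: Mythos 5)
Your argument in the two cases you do handle is fine and is in the same spirit as the paper's (divergence lower bounds, once via the waste term $\log(1/\alpha)$, once via a two-point data-processing/Pinsker estimate using $P/\mu>4$), but, as you yourself flag, the decomposition leaves the sub-case $S=\emptyset$ with $\alpha$ close to $1$ open, and your sketch for closing it does not work: the maximality of $\delta$ and the component-count bound of condition (i) do not yield the estimate, because a fragment has no intrinsic $P$-probability of its own, so a fragment being small does not force any \emph{whole} $\mu^\m{E}_d(y^d\,|\,x^n)$ to be small. This is a genuine gap.

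The paper's proof uses a different, tighter pivot that sidesteps the issue entirely. Set $z^d\dfn\argmin_{y^d}\mu^\m{E}_d(y^d\,|\,x^n)$ and $\gamma\dfn\mu^\m{E}_d(z^d\,|\,x^n)$, and split on $\gamma<\delta_I$ versus $\gamma\ge\delta_I$. If $\gamma<\delta_I<\pmin^d/4\le P(z^d)/4$, then data processing (lump all of $\m{X}^d$ other than $z^d$ into one mass) and monotonicity of scalar divergence give $r_d(x^n)\ge D(\pmin^d\|\gamma)\ge\pmin^d\ge\delta_I$ -- essentially your computation, but applied to the single minimizer $z^d$, so it never needs any image to lie wholly inside the gap. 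If $\gamma\ge\delta_I$, every $\m{I}^\m{E}(x^n y^d)$ has normalized length at least $\delta_I$; combined with the structural property that each $\m{I}^\m{E}(x^n y^d)$ is confined to a single inter-forbidden-point gap (which holds for the representation of Lemma \ref{lem:gim_rep} and is what the proof implicitly leans on), none of them can meet the $\delta_I$-gap, so $\alpha\le 1-\delta_I$ and $r_d(x^n)\ge\log\frac{1}{1-\delta_I}\ge\delta_I\log e\ge\delta_I$. The ingredient your write-up is missing is precisely this confinement-to-a-single-gap property: with it, your $S=\emptyset$ sub-case forces the whole gap to be wasted and collapses into your first case, whereas working only with the literal ``components avoid forbidden points'' reading of condition (ii) leaves room for the fragment-filling scenario that blocks you.
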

\begin{proof}
See the Appendix.
\end{proof}

A number $a\in\uico$ is called $(m,\ell)$--\textit{constrained} if
\begin{equation*}
a =
0.\underbrace{00\ldots0}_{m'(a)}\,\underbrace{1\phi\ldots\phi}_{m}\,\underbrace{00\ldots
0}_{\ell}\,\phi\ldots
\end{equation*}
where $m'(a)$ is the length of the zeros prefix of $a$, and $\phi$
is the ``don't care'' symbol. The $(m,\ell)$--\textit{constrained
region} $\m{C}_{m,\ell}$ is the set of all such numbers. A number $a\in\uico$ is called $(m,\ell)$--\textit{violating} if
\begin{equation}\label{eq:ml_violating}
a =
0.\underbrace{00\ldots0}_{m'(a)}\,\underbrace{1\phi\ldots\phi}_{m}\,\underbrace{\phi\ldots\ldots\ldots\ldots\phi}_{\ell\;\text{ bits, not all '0' or all '1'}}\phi\ldots
\end{equation}
The $(m,\ell)$--\textit{violating region} $\m{V}_{m,\ell}$ is the
set of all such numbers. The complement $\overline{\m{V}}_{m,\ell}
= \uico\setminus\m{V}_{m,\ell}\,$ is called the $(m,\ell)$--\textit{permissible region}.
Define the regions\footnote{The $\log$ and $\frc{\cdot}$ operations are taken pointwise on the set elements.}
\begin{equation*}
L\m{C}_{m,\ell} \dfn \frc{-\log \m{C}_{m,\ell}}\,,\quad L\m{\overline{V}}_{m,\ell}\dfn \frc{-\log
\overline{\m{V}}_{m,\ell}}
\end{equation*}
and let
\begin{equation*}
\m{D}^{(1)}_{m,\ell} \dfn
\frc{L\m{\overline{V}}_{m,\ell}-L\m{C}_{m,\ell}} \,, \quad
\m{D}^{(2)}_{m,\ell} \dfn \frc{\m{D}^{(1)}_{m,\ell} -
\m{D}^{(1)}_{m,\ell}}
\end{equation*}
The following two lemmas are easily observed.
\begin{lemma}\label{lem:approx}
Let $\mu>0$. If $a\in\m{V}_{m,\ell}$ and $b\in\m{C}_{m,\ell'}$ where $\ell<\ell'$, then
\begin{equation*}
|a-b| \geq 2^{-m'(a)}\cdot 2^{-(m+\ell)} \geq
\frac{a}{2}\cdot 2^{-(m+\ell)}
\end{equation*}
\end{lemma}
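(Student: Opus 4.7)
The second inequality, $2^{-m'(a)} \geq a/2$, is immediate: by definition of $m'(a)$ the binary expansion of $a$ begins with $m'(a)$ zeros followed by a $1$, hence $a < 2^{-m'(a)}$. The substance of the lemma is the first inequality $|a-b| \geq 2^{-(m'(a)+m+\ell)}$, which I would prove by case analysis on the binary expansions of $a$ and $b$.

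Set $M\dfn m'(a)+m$. The key structural facts are these: $a\in\m{V}_{m,\ell}$ forces the fractional tail $\alpha\dfn 2^M a-\lfloor 2^M a\rfloor$ into $[2^{-\ell},\,1-2^{-\ell})$ (since its first $\ell$ bits are neither all $0$ nor all $1$), while $b\in\m{C}_{m,\ell'}$ with $\ell'>\ell$ forces the analogous tail $\beta$ of $b$ past position $m'(b)+m$ into $[0,2^{-\ell'}]$. I would then split based on the alignment of the leading-zero counts $m'(a),m'(b)$ and of the $m$-blocks of $a,b$. When $m'(a)=m'(b)$ and the $m$-blocks agree, $|a-b|=2^{-M}|\alpha-\beta|$, and the disjoint ranges for $\alpha,\beta$ yield the bound directly. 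When the $m$-blocks disagree, the first disagreement at some position $p\leq M$ contributes $2^{-p}\geq 2^{-M}$, which dominates any tail compensation (bounded by $2^{-p}$); a careful accounting using the ``not all same'' constraint on $a$ then shows the net difference is still at least $2^{-(M+\ell)}$. When $|m'(a)-m'(b)|\geq 2$, the two points lie in well-separated dyadic shells and the bound is immediate.

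The tight case is $|m'(a)-m'(b)|=1$; say $m'(b)=m'(a)-1$, so $a<2^{-m'(a)}\leq b$. Here I would invoke the ``not all $1$'' half of the violating condition: the supremum of $\m{V}_{m,\ell}$ restricted to the shell $[2^{-(m'(a)+1)},2^{-m'(a)})$ is exactly $2^{-m'(a)}-2^{-(M+\ell)}$, attained by the pattern $0.0^{m'(a)}1^{m+\ell-1}0\,1^{\infty}$, while the infimum of any $b\in\m{C}_{m,\ell'}$ with $m'(b)=m'(a)-1$ is $2^{-m'(a)}$ itself (attained by $0.0^{m'(a)-1}1\,0^{\infty}$). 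The gap is then exactly the claimed $2^{-(M+\ell)}$. The mirror case $m'(b)=m'(a)+1$ is symmetric, using the ``not all $0$'' half of the violating condition in $a$'s $\ell$-block.

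The main obstacle is this boundary case: it is what saturates the bound and pins down the exponent at $M+\ell$, so correctly identifying the extremal configurations of $\m{V}_{m,\ell}$ and $\m{C}_{m,\ell'}$ within adjacent dyadic shells is the crux. The other cases produce strictly looser separations and amount to bookkeeping with the dyadic structure.
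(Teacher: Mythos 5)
The case you dismissed as yielding the bound ``directly'' is in fact where the first displayed inequality fails, and your claim there is wrong. Take $m'(a)=m'(b)$ with agreeing $m$-blocks, so $|a-b|=2^{-M}|\alpha-\beta|$ with $M=m'(a)+m$. The constraints give $\alpha\geq 2^{-\ell}$ and $\beta<2^{-\ell'}\leq 2^{-(\ell+1)}$, hence only $|a-b|>2^{-M}\bigl(2^{-\ell}-2^{-(\ell+1)}\bigr)=2^{-(M+\ell+1)}$, which is a factor of $2$ short of the claimed $2^{-m'(a)}\cdot 2^{-(m+\ell)}$. A concrete counterexample to that middle term: with $m=1$, $\ell=2$, $\ell'=3$, take $a=5/8=0.101_2\in\m{V}_{1,2}$ and $b=17/32=0.10001_2\in\m{C}_{1,3}$; then $m'(a)=0$ and $|a-b|=3/32<1/8=2^{-(m'(a)+m+\ell)}$. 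So the lemma's chain, read literally, is not a theorem, and your identification of $|m'(a)-m'(b)|=1$ as ``the'' tight case is misplaced --- the aligned case is tighter.

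What \emph{is} true (and is all that the proof of Theorem~\ref{thrm:delay_redundancy} uses) is the end-to-end inequality $|a-b|\geq \frac{a}{2}\cdot 2^{-(m+\ell)}$. In the aligned case this follows because $a<2^{-m'(a)}$, so $\frac{a}{2}\cdot 2^{-(m+\ell)}<2^{-(m'(a)+m+\ell+1)}<|a-b|$: the slack you noted in the second inequality is exactly what compensates the factor-of-$2$ shortfall in the first. Your handling of the other cases (the well-separated shells $|m'(a)-m'(b)|\geq 2$, the adjacent-shell case using the not-all-$1$/not-all-$0$ constraint, and the disagreeing $m$-blocks) is sound and in those cases the first displayed inequality does hold. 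So the fix is simply to target $\frac{a}{2}\cdot2^{-(m+\ell)}$ rather than $2^{-m'(a)}\cdot 2^{-(m+\ell)}$ as the uniform lower bound, and to treat the aligned case via $a<2^{-m'(a)}$ rather than claiming it is immediate. (For what it is worth, the paper offers no proof of this lemma --- it is asserted as ``easily observed'' --- and the stated middle term appears to carry a typo; your proposal inherits that typo.)
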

\begin{lemma}\label{lem:conv}
If $I,J\subseteq\uico$ are each a union of at most $M$ intervals of size no larger than $r$ each, then $\frc{I-J}$ can be written as a union of at most $M^2+1$ intervals of size no larger than $2r$ each.
\end{lemma}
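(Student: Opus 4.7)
The plan is to decompose $I = \bigcup_{i=1}^{M} I_i$ and $J = \bigcup_{j=1}^{M} J_j$ into their constituent sub-intervals, so that
\[
\frc{I - J} \;=\; \bigcup_{i,j} \frc{I_i - J_j}
\]
exhibits the left-hand side as a union of $M^2$ pieces. Each $I_i - J_j = \{a - b : a \in I_i,\, b \in J_j\}$ is a real interval of length at most $|I_i| + |J_j| \leq 2r$, contained in $(-1,1)$. Consequently, its fractional-part image $\frc{I_i - J_j}$ is either (a) a single sub-interval of $\uico$ of length $\leq 2r$, when $I_i - J_j$ does not straddle an integer, or (b) a ``wraparound'' arc that appears in $\uico$ as $[0,\epsilon_{ij}) \cup [1-\delta_{ij},1)$ with $\epsilon_{ij},\delta_{ij} \geq 0$ satisfying $\epsilon_{ij} + \delta_{ij} \leq 2r$.

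Next I would split the $M^2$ pairs into the set $W$ of wrapping pairs (case (b)) and the rest. If $W$ is empty, we have at most $M^2 \leq M^2+1$ sub-intervals, each of length $\leq 2r$, and we are done. Otherwise, the crucial observation is that all the wrapping contributions coalesce into just two sub-intervals,
\[
\bigcup_{(i,j) \in W} \frc{I_i - J_j} \;=\; \Bigl[0,\; \max_{(i,j) \in W} \epsilon_{ij}\Bigr) \;\cup\; \Bigl[1 - \max_{(i,j) \in W} \delta_{ij},\; 1\Bigr),
\]
each of length $\leq 2r$ since $\epsilon_{ij}, \delta_{ij} \leq 2r$ individually. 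Together with the $M^2 - |W|$ non-wrapping pieces, this yields at most $(M^2 - |W|) + 2 \leq M^2 + 1$ sub-intervals of $\uico$, each of size at most $2r$, as required.

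Conceptually, the argument is a Minkowski-sum count on the circle $\RealF / \IntF$, on which each $\frc{I_i - J_j}$ is simply an arc of length $\leq 2r$; the lone ``extra'' interval in the bound $M^2 + 1$ accounts for the at-most-one splitting that occurs when we cut the circle at $0$ to return to $\uico$. The only minor subtlety is the edge case $2r \geq 1$, which is handled trivially by covering $\uico$ with a single interval of length $1 \leq 2r$, so one may assume $2r < 1$ before running the wrapping analysis. I do not expect any serious obstacle; the main care is to verify that multiple wrapping pairs never produce more than the two combined boundary intervals above, which is exactly the step that saves a factor of two in the count.
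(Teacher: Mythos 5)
The paper states Lemma~\ref{lem:conv} without proof, merely calling it ``easily observed,'' so there is no in-text argument to compare against; your proof is correct and is the natural one. The decomposition into the at most $M^2$ pairwise Minkowski differences $I_i - J_j$ (each an interval of length at most $2r$ contained in $(-1,1)$), the observation that each survives the $\frc{\cdot}$ map as either a single interval or a wrap-around pair of the form $[0,\epsilon_{ij})\cup(1-\delta_{ij},1)$ with $\epsilon_{ij}+\delta_{ij}\leq 2r$, and the coalescing of all wrap-around pairs into at most two intervals of length at most $2r$ via the maxima of the left and right overhangs, together yield exactly the $M^2+1$ bound.
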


The $(m,\ell)$--permissible region within the interval $[1/2,1)$ is comprised of $2^{m-1}+1$ subintervals. By definition, the size of each is upper-bounded by $2^{-(m'+m+\ell)+1}$. Applying
$\frc{-\log(\cdot)}$ to all such intervals in the $[1/2,1)$ interval (corresponding to $m'=0$) will stretch each of them by a factor of at most $2\log{e}<4$. All other permissible intervals (those with $m'>0$) coincide on the unit interval after applying the $\frc{-\log(\cdot)}$ operator. Hence $L\overline{\m{V}}_{m,\ell}$ can be written as a union of at most $2^{m-1}+1$ intervals, each of size at most $2^{-(m+\ell)+3}$. A similar argument shows that $L\overline{\m{V}}_{m,\ell}$ can also be written that way\footnote{It can in fact be written as a union of less and smaller intervals, but that adds nothing to our argument.}.
Appealing to Lemma \ref{lem:conv}, $\m{D}^{(1)}_{m,\ell}$ can be written as a union of at most $(2^{m-1}+1)^2+1$ intervals, each of size at most $2^{-(m+\ell)+4}$. Applying the Lemma again, we find that $\m{D}^{(2)}_{m,\ell}$ can be written as a union of at most $((2^{m-1}+1)^2+1)^2+1\leq 2^{4m+1}$ intervals each of size at most $2^{-(m+\ell)+5}$. Hence,
\begin{equation}\label{eq:conv_size}
|\m{D}^{(2)}_{m,\ell}| < 2^{4m+1} \cdot 2^{-(m+\ell)+5}  = 2^{3m-\ell+6}
\end{equation}

A source $P$ is called $(\mu_0,\lambda)$-\textit{regular} if there exists a pair of symbols $y,z\in\m{X}$ and $m_0\in\NaturalF$ such that for any $\mu\geq \mu_0$
\begin{equation}\label{eq:source_lambda}
\lambda = \left\langle\log\frac{P(y)}{P(z)}\right\rangle\not\in\bigcup_{m=m_0}^\infty\m{D}^{(2)}_{m,\lceil\mu
m\rceil}
\end{equation}
\begin{remark}
$0\in\m{D}^{(2)}_{m,\lceil\mu m\rceil}$ for any $m$ and $\mu$, hence no source can be $(\mu_0,0)$--regular. Since for a dyadic source $\lambda=0$ for any pair $y,z$, a dyadic source is never $(\mu_0,\lambda)$-regular.
\end{remark}

The following two lemmas establish some properties of $(\mu_0,\lambda)$-regularity.
\begin{lemma}\label{lem:regular}
Let $\mu_0>3$. Almost any source is $(\mu_0,\lambda)$-regular for some $\lambda>0$.
\end{lemma}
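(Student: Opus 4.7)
The plan is to show that the set of ``bad'' values of $\lambda\in[0,1)$---those for which no $m_0$ can satisfy (\ref{eq:source_lambda})---has Lebesgue measure zero, and then push this conclusion back to the simplex via the map $P\mapsto\langle\log(P(y)/P(z))\rangle$ for a suitably chosen pair $y,z\in\m{X}$.

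First I would reduce the uncountable union over $\mu\ge\mu_0$ to a countable one. For fixed $m$, as $\mu$ ranges over $[\mu_0,\infty)$, the integer $\lceil\mu m\rceil$ takes each value in $\{\lceil\mu_0 m\rceil,\lceil\mu_0 m\rceil+1,\dots\}$. Therefore
\begin{equation*}
B_{m_0}\;\dfn\;\bigcup_{\mu\ge\mu_0}\bigcup_{m\ge m_0}\m{D}^{(2)}_{m,\lceil\mu m\rceil}\;=\;\bigcup_{m\ge m_0}\;\bigcup_{\ell\ge\lceil\mu_0 m\rceil}\m{D}^{(2)}_{m,\ell},
\end{equation*}
which is a countable union. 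Applying the size bound (\ref{eq:conv_size}) twice and summing two geometric series (using $\mu_0>3$) gives
\begin{equation*}
|B_{m_0}|\;\le\;\sum_{m\ge m_0}\sum_{\ell\ge\lceil\mu_0 m\rceil}2^{3m-\ell+6}\;\le\;C\cdot 2^{(3-\mu_0)m_0}
\end{equation*}
for a constant $C$, so $|B_{m_0}|\to 0$ as $m_0\to\infty$. Since $B_{m_0}$ is decreasing in $m_0$, the set $B_\infty\dfn\bigcap_{m_0\ge 1}B_{m_0}$ of $\lambda$'s which are bad for \emph{every} choice of $m_0$ has Lebesgue measure zero. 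Equivalently, for almost every $\lambda\in[0,1)$ there exists an $m_0$ making (\ref{eq:source_lambda}) hold.

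To transfer this to sources, pick any two distinct symbols $y,z\in\m{X}$ (this exists since otherwise $|\m{X}|=1$ and the problem is trivial) and consider the function $\Phi_{y,z}:\ms{P}(\m{X})\to[0,1)$ defined by $\Phi_{y,z}(P)\dfn\langle\log(P(y)/P(z))\rangle$. On the open subset of the simplex where $P(y),P(z)>0$ (a full-measure subset), $\Phi_{y,z}$ is smooth with nowhere-vanishing gradient in the $P(y)$ direction, so its pushforward of Lebesgue measure on the simplex is absolutely continuous with respect to Lebesgue measure on $[0,1)$. Consequently $\Phi_{y,z}^{-1}(B_\infty)$ has Lebesgue measure zero in $\ms{P}(\m{X})$, and the complement consists of sources that are $(\mu_0,\lambda)$-regular with $\lambda=\Phi_{y,z}(P)>0$ (the value $\lambda=0$ sits inside every $\m{D}^{(2)}_{m,\ell}$, hence inside $B_\infty$, so is automatically excluded).

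The main obstacle I anticipate is precisely the first reduction, since $\mu$ is a continuous parameter and one must recognize that the relevant quantity $\lceil\mu m\rceil$ takes only countably many values---otherwise the union appears uncountable and the measure estimate fails. Everything else is routine: the size bound (\ref{eq:conv_size}) is already in hand, the double geometric sum converges thanks to $\mu_0>3$, and the absolute-continuity of $\Phi_{y,z}$ is immediate from the log being a diffeomorphism on $(0,1)$.
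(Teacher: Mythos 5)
Your proof is correct and follows essentially the same route as the paper's own argument: bound the Lebesgue measure of the set of $\lambda$ values for which (\ref{eq:source_lambda}) fails for every $m_0$, using the size estimate (\ref{eq:conv_size}); let $m_0\to\infty$ to conclude this set has measure zero; then transfer to the simplex. The one place you handle differently is the (a priori) uncountable union over $\mu\geq\mu_0$: you enumerate the values $\lceil\mu m\rceil$ to get a countable union and then pay with a double geometric sum, whereas the paper notes the monotonicity $\m{D}^{(2)}_{m,\ell+1}\subset\m{D}^{(2)}_{m,\ell}$ (a consequence of $\m{C}_{m,\ell+1}\subset\m{C}_{m,\ell}$ and $\m{V}_{m,\ell}\subset\m{V}_{m,\ell+1}$) and collapses $\bigcup_{\mu\geq\mu_0}\m{D}^{(2)}_{m,\lceil\mu m\rceil}$ directly to the single set $\m{D}^{(2)}_{m,\lceil\mu_0 m\rceil}$, leaving only a single geometric sum. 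Both arrive at the same conclusion; the paper's observation is a touch cleaner, yours is more elementary but equally valid, and the obstacle you flagged (recognizing the union is effectively countable) is real but is dispatched even more cheaply by the monotonicity. You also usefully spell out the transfer to the simplex via absolute continuity of the pushforward of $P\mapsto\langle\log(P(y)/P(z))\rangle$, a step the paper leaves as ``the statement follows easily.''
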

\begin{proof}
See the Appendix.
\end{proof}

Define the following set:
\begin{equation*}
A^d_{\alpha,\beta} \dfn  \left\{x^d\in\m{X}^d : \frc{-\log{P(x^d)}}\not\in \m{D}^{(1)}_{\lceil \alpha d\rceil,\lceil \beta d\rceil}\right\}
\end{equation*}
\begin{lemma}\label{lem:regular2}
Suppose $P$ is a $(\mu_0,\lambda)$-regular source. Then for any $\alpha,\beta>0$ with $\beta\slash \alpha>\mu_0$
\begin{equation*}
\liminf_{d\rightarrow\infty} P(A^d_{\alpha,\beta}) \geq \frac{1}{2}
\end{equation*}
\end{lemma}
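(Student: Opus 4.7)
The plan is to leverage the regularity condition through Lemma~\ref{lem:diff_set}(i) to establish a ``swap invariance'' for the bad set $B^d:=\mathcal{X}^d\setminus A^d_{\alpha,\beta}$, and then combine this with a fiber decomposition and a simple binomial concentration argument to bound $P(B^d)\le \tfrac12 + o(1)$.

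First I would observe that, with $m:=\lceil\alpha d\rceil$ and $\ell:=\lceil\beta d\rceil$, the hypothesis $\beta/\alpha>\mu_0$ together with the monotonicity $\mathcal{D}^{(2)}_{m,\ell}\subseteq \mathcal{D}^{(2)}_{m,\lceil\mu_0 m\rceil}$ and the regularity of $P$ yields $\lambda\notin \mathcal{D}^{(2)}_{m,\ell}$ for $d$ large enough; since $\mathcal{D}^{(2)}_{m,\ell}=\langle \mathcal{D}^{(1)}_{m,\ell}-\mathcal{D}^{(1)}_{m,\ell}\rangle$ is symmetric about $0$ modulo~$1$, also $-\lambda\notin\mathcal{D}^{(2)}_{m,\ell}$. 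Applying Lemma~\ref{lem:diff_set}(i) with $A=B=\mathcal{D}^{(1)}_{m,\ell}$ and $c=\pm\lambda$ gives the \emph{swap invariance}: if $\tilde x^d$ is obtained from $x^d$ by toggling a single position between $y$ and $z$, then $\langle -\log P(\tilde x^d)\rangle=\langle -\log P(x^d)\pm\lambda\rangle$, so $x^d\in B^d$ forces $\tilde x^d\in A^d_{\alpha,\beta}$.

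Next I would partition $\mathcal{X}^d$ into fibers indexed by pairs $(S,u)$, where $S\subseteq\{1,\ldots,d\}$ is the set of positions carrying a $y$ or a $z$ and $u\in(\mathcal{X}\setminus\{y,z\})^{d-|S|}$ gives the symbols at the remaining positions. With $c:=|S|$, a fiber contains $2^c$ sequences parameterized by the count $n_y\in\{0,\ldots,c\}$ and the placement of the $y$'s within $S$; a direct computation gives $\langle -\log P(x^d)\rangle=\langle C_{S,u}-n_y\lambda\rangle$ for a fiber-constant $C_{S,u}$, so membership in $B^d$ depends only on $n_y$ and is encoded by an indicator $b_k\in\{0,1\}$ satisfying $b_kb_{k+1}=0$ for all $k$, by the swap invariance. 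The conditional probability of $B^d$ in the fiber is then $\sum_k b_k f_k$ with $f_k:=\binom{c}{k}p^k(1-p)^{c-k}$ and $p:=P(y)/(P(y)+P(z))$; since $\{k:b_k=1\}$ and $\{k+1:b_k=1\}$ are disjoint, $\sum_k b_k(f_k+f_{k+1})\le 1$, and together with the unimodality of the binomial this yields
\[
\sum_k b_k f_k \le \tfrac12 + \tfrac12\sum_k|f_k-f_{k+1}| \le \tfrac12 + \max_k f_k = \tfrac12 + O(c^{-1/2}).
\]

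Finally, the fiber size $c$ is the total count of $y$'s and $z$'s in $X^d$, which by the law of large numbers concentrates around $d(P(y)+P(z))>0$, so $c\to\infty$ in probability and the $O(c^{-1/2})$ correction averages to $o(1)$ as $d\to\infty$; assembling the per-fiber bounds gives $P(B^d)\le \tfrac12+o(1)$, i.e. $\liminf_d P(A^d_{\alpha,\beta})\ge \tfrac12$. I expect the main obstacle to be the binomial concentration step—specifically, verifying that the $O(c^{-1/2})$ bound is uniform across fibers with differing shift constants $C_{S,u}$ and showing that atypical fibers (with small $c$) contribute negligibly to the overall probability.
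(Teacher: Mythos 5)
Your proposal is correct and reaches the same $1/2$ bound through a genuinely different route than the paper. The paper also exploits the swap invariance (Lemma~\ref{lem:diff_set} plus regularity) for the pair $(y,z)$, but it does so at the \emph{type-class} level: it defines a transformation $\sigma$ on types which decrements the $y$-count and increments the $z$-count, shows that $\sigma$ sends any $\eps$-typical type with $T_Q\cap A^d_{\alpha,\beta}=\emptyset$ to a type with $T_{\sigma(Q)}\subset A^d_{\alpha,\beta}$, verifies that $\sigma$ is injective on $\ms{P}^d_\eps(\m{X},P)$ and that $P(T_{\sigma(Q)})=P(T_Q)(1+O(\eps)+O(d^{-1}))$, and closes via the AEP. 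Your approach instead conditions on a \emph{fiber} (the positions occupied by $y$ or $z$ and the symbols elsewhere), under which the $y$-count is exactly binomial, and replaces the approximate type-class probability matching plus AEP with the clean inequality $\sum_k b_k f_k \le \frac12 + \max_k f_k$ valid for any unimodal $(f_k)$ and any $(b_k)$ with $b_kb_{k+1}=0$. This buys you an explicit, AEP-free quantitative bound per fiber; the only residual averaging is over the fiber size $c$, which concentrates by the law of large numbers so the $O(c^{-1/2})$ term vanishes. I would add two small remarks: (a) the concern you raise about uniformity over $C_{S,u}$ is actually moot, since the bound $\frac12+\max_k f_k$ depends only on $c$ and $p=P(y)/(P(y)+P(z))$, never on the offset constant; (b) for small $c$ the bound can exceed $1$, so you should truncate at $1$ and observe that $\Prob(c<\eps d)$ is exponentially small by Chernoff (not merely $o(1)$ by LLN) to control the contribution of atypical fibers cleanly before invoking $\Expt[O(c^{-1/2})]=o(1)$ on the bulk.
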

\begin{proof}
See the Appendix.
\end{proof}

From this point forward we assume $P$ is $(\mu_0,\lambda)$-regular with $\mu_0>3$. Let $\mu<\mu'$, and define the indexed sets
\begin{align*}
B_k &\dfn  \left\{x^k\in\m{X}^k : \delta_{\m{I}^\m{E}(x^k)} > \pmin^{\mu d} \right\}
\\
C(x^k) &\dfn \left\{y^d\in\m{X}^d : \delta_{\m{I}^\m{E}(x^ky^d)} > \pmin^{\mu' d} \right\}
\end{align*}
For $x^k\in B_k$, Lemma \ref{lem:redundancy_left} implies that
\begin{equation}\label{eq:Erd1}
r_d(x^k) > \pmin^{\mu d}
\end{equation}
On the other hand, $x^k\not\in B_k$ implies that the length of each interval comprising $\m{I}^\m{E}(x^k)$ must be in $\m{C}_{\lceil d\log(1\slash \pmin)\rceil,\lceil \mu d\log(1\slash \pmin)\rceil}$. Since there are at most $|\m{X}|^d$ such intervals, it must be that
\begin{equation}\label{eq:I_size1}
|\m{I}^\m{E}(x^k)|\in\m{C}_{\lceil \alpha d\rceil,\lceil \beta d\rceil}
\end{equation}
where
\begin{equation*}
\alpha\dfn \log(1\slash \pmin) + \log|\m{X}|\,,\quad \beta\dfn \mu \log(1\slash \pmin) - \log|\m{X}|
\end{equation*}
Similarly, if $y^d\not\in C(x^k)$ then
\begin{equation}\label{eq:I_size2}
|\m{I}^\m{E}(x^ky^d)|\in\m{C}_{\lceil \alpha d\rceil,\lceil \beta' d\rceil}
\end{equation}
where
\begin{equation*}
\beta'\dfn \mu' \log(1\slash \pmin) - \log|\m{X}|
\end{equation*}
For Lemma \ref{lem:regular2} to apply, we set $\mu,\mu'$ such that $\beta\slash\alpha > \mu_0$ and $\beta'\slash \alpha >\mu_0$. This yields the constraints:
\begin{equation*}
\mu' > \mu > \mu_0+ \frac{(\mu_0+1)\log{|\m{X}|}}{\log{(1\slash\pmin)}}
\end{equation*}
In what follows, we will think of $\mu'$ as arbitrarily close to $\mu$. For any $x^k\not\in B_k$ we have:
\begin{align}\label{eq:Erd2}
\nonumber &\Expt\left(r_d(X^k)+r_d(X^{k+d}) \mid  X^k = x^k\right)
\\
\nonumber &\stackrel{(\ref{item1})}{\geq} \left(\sum_{y^d\in A^d_{\alpha,\beta} \cap \overline{C(x^k)}} \left|P(y^d)-\mu^\m{E}_d(y^d|x^k)\right|\right)^2
\\
\nonumber &\quad\qquad\qquad + \pmin^{\mu' d} P(C(x^k))
\\
\nonumber &= \left(\sum_{y^d\in A^d_{\alpha,\beta} \cap \overline{C(x^k)}} \left|\frac{P(y^d)|\m{I}^\m{E}(x^k)|-
|\m{I}^\m{E}(x^ky^d)|}{|\m{I}^\m{E}(x^k)|}\right|\right)^2
\\
\nonumber &\quad\qquad\qquad + \pmin^{\mu' d} P(C(x^k))
\\
\nonumber &\stackrel{(\ref{item2})}{\geq}
\left(\frac{1}{|\m{I}^\m{E}(x^k)|}\sum_{y^d\in A^d_{\alpha,\beta} \cap \overline{C(x^k)}}
\hspace{-12pt}\frac{P(y^d)|\m{I}^\m{E}(x^k)|}{2}\,\pmin^{\lceil\alpha d\rceil +\lceil\beta d\rceil
}\right)^2
\\
\nonumber &\quad\qquad\qquad + \pmin^{\mu' d} P(C(x^k))
\\
\nonumber &= \left(\frac{P(A^d_{\alpha,\beta} \cap \overline{C(x^k))}}{2}\right)^2\cdot\pmin^{2(\alpha+\beta)d+4}
+ \pmin^{\mu' d} P(C(x^k))
\\
\nonumber &\stackrel{(\ref{item3})}{\geq} \frac{1}{4}\left[\left(P(A^d_{\alpha,\beta} \cap \overline{C(x^k))}\right)^2 \hspace{-4pt}+ P(C(x^k))\right]\pmin^{d\max(2(\alpha+\beta),\mu')+4}
\\
\nonumber &\stackrel{(\ref{item4})}{\geq} \frac{1}{4}\left[\left(P(A^d_{\alpha,\beta}) - P(A^d_{\alpha,\beta} \cap C(x^k))\right)^2 \hspace{-2pt} + P(A^d_{\alpha,\beta} \cap C(x^k))\right]
\\
\nonumber &\quad\qquad \times\pmin^{2d(\mu+1)\log{(1\slash\pmin)}+4}
\\
\nonumber &\stackrel{(\ref{item5})}{\geq} \frac{1}{4}\left(P(A^d_{\alpha,\beta})\right)^2\cdot \pmin^{2d(\mu+1)\log{(1\slash\pmin)}+4}
\\
&= \left(\frac{1}{16}+o(1)\right)\cdot \pmin^{2d(\mu+1)\log{(1\slash\pmin)}+4}
\end{align}
The inequalities are justified as follows:
\begin{enumerate}[(a)]
\item Pinsker's inequality for the divergence \cite{cover} was used, together with Lemma \ref{lem:redundancy_left} and the nonnegativity of $r_d(\cdot)$. \label{item1}
\item (\ref{eq:I_size1}) and (\ref{eq:I_size2}) hold for all the union-of-intervals lengths in the summation. Since $\frc{-\log{P(y^d)}}\not\in \m{D}^{(1)}_{\lceil \alpha d\rceil,\lceil \beta d\rceil}$ for each $y^d$ in the summation, then appealing to Lemma \ref{lem:diff_set}, we have that $P(y^d)|\m{I}^\m{E}(x^k)|\in\m{V}_{\lceil \alpha d\rceil,\lceil \beta' d\rceil}$. The inequality now follows by virtue of Lemma \ref{lem:approx}. \label{item2}
\item \label{item3} $P(A\cap \overline{C}) = P(A)-P(A\cap C)$ and $P(C)\geq P(A\cap C)$.
\item $\mu'$ can be taken to be arbitrarily close to $\mu$.  \label{item4}
\item Lemma \ref{lem:regular2} was used to lower bound the probability of the set $A^d_{\alpha,\beta}$. \label{item5}
\end{enumerate}

Combining (\ref{eq:Erd1}) and (\ref{eq:Erd2}), we get:
\begin{align*}
&\Expt(r_d(X^k)+r_d(X^{k+d}))
\\
&\;\geq \min\left(\pmin^{\mu d}, \left(\frac{1}{16}+o(1)\right)\cdot \pmin^{2d(\mu+1)\log{(1\slash\pmin)}+4}\right)
\\
&\;= \left(\frac{1}{16}+o(1)\right)\cdot \pmin^{2d(\mu+1)\log{(1\slash\pmin)}+4}
\end{align*}
This holds for any $d$-constrained encoder $\m{E}\in\mf{C}_d$, hence and plugging into Lemma \ref{lem:redundancy_bounds} we get
\begin{align*}
\underline{\mf{R}}^{\m{E}}(P) &=
\liminf_{n\rightarrow\infty}\frac{1}{2nd}\sum_{k=1}^{n}\Expt(r_d(X^k)+r_d(X^{k+d}))
\\
&\geq \left(\frac{1}{16}+o(1)\right)\cdot \frac{1}{2d}\cdot \pmin^{2d(\mu+1)\log{(1\slash\pmin)}+4}
\end{align*}
This lower bound holds for any $\mu>\mu_0+ \frac{(\mu_0+1)\log{|\m{X}|}}{\log{(1\slash\pmin)}}$. Moreover, by Lemma \ref{lem:regular} almost any source is $(\mu_0,\lambda)$-regular for any $\mu_0>3$. Therefore, we have that for almost any source
\begin{equation*}
\underline{\mf{R}}^{\m{E}}(P)  \geq  \left(\frac{1}{16}+o(1)\right)\cdot \frac{1}{2d}\cdot \pmin^{8d\log\left(\frac{|\m{X}|}{\pmin}\right)+o(d)}
\end{equation*}
and hence
\begin{equation*}
\overline{E}(P) \leq 8\log\left(\frac{|\m{X}|}{\pmin}\right)
\end{equation*}

As mentioned in Remark \ref{rem:ime}, if the encoder is restricted to be interval--mapping then a tighter upper bound $8\log(1\slash \pmin)$ holds. In this case $\m{I}^\m{E}(\cdot)$ is a single interval rather than a union of $|\m{X}|^d$ intervals, hence the proof remains the same up to the substitution $|\m{X}|\leftrightarrow 1$.

\section{Conclusions}\label{sec:conc}
The redundancy in lossless coding of a memoryless source incurred by imposing a strict end-to-end delay constraint was analyzed, and shown to decay exponentially with the delay. The associated delay-redundancy exponent was lower bounded by the R\'enyi entropy $H_2(P)$ for any source $P$, and upper bounded by $8\log{\left(|\m{X}|\slash \pmin\right)}$ for most sources. This exponential behavior should be juxtaposed against classical results in source coding, showing a polynomial decay of the redundancy with the delay. In the classical framework, the delay is identified with the block length or the maximal phrase length, which in our framework imposes a harsh restriction: The decoder is not allowed to start reproducing source symbols in the midst of a block/phrase, and the delay is repeatedly nullified at the end of each block/phrase. This means the encoder is reset at these instances, i.e., the prefix has no effect on its future behavior. Loosely speaking, the gain of exponential versus polynomial is reaped via a tighter control over the delay process, making such reset events rare. This superior performance comes however at a possible cost: in contrast to the block/phrase-constrained setup where the encoder can clear its memory and start-over in roughly constant intervals, the more general encoders discussed in this paper need to keep track of a state. The precision required for keeping the state is however finite, and can be easily derived from Lemma \ref{lem:redundancy_left}. 

In our framework, we have isolated the impact of the delay on the redundancy by letting the transmission time $n$ go to infinity. This also makes sense complexity-wise, since the per-symbol encoding complexity is determined primarily by the delay, and not by the length of the encoded sequence. In practice however, a finite transmission time forces the encoder to terminate the codeword, which in turn incurs an additional penalty of $O(n^{-1})$ in redundancy. Setting $d=O(\log{n})$ renders this additional redundancy term commensurate with the redundancy incurred by the delay constraint. Therefore, our results imply that the delay can be made logarithmic in the block length, while maintaining the same order of redundancy. Conversely, for almost all sources this is the best possible tradeoff between block length and delay. A similar  statement in the context of universal source coding was mentioned in \cite{marcelo92}, though for a somewhat different definition of the delay.

There is still a large gap between the lower and upper bounds on the redundancy--delay exponent, where the upper bound seems particularly loose. Furthermore, it remains to be seen whether the zero-measure set of sources for which the upper bound may fail to hold, can be reduced from the set of sources that do not satisfy our intricate regularity condition, to the set of dyadic sources only, which is the smallest possible.

\appendix

\begin{proof}[Proof of Lemma \ref{lem:gim_rep}]
Let us first show that $\m{I}^{\m{E}}$ satisfies the conditions for a generalized interval--mapping encoder. $\m{I}^\m{E}(sx)\subseteq \m{I}^\m{E}(s)$ is immediate from the consistency property. Let $y,z\in\m{X}$ be distinct, and assume that $\m{I}^\m{E}(sy)\cap\m{I}^\m{E}(sz)\neq \emptyset$. Then since any two binary intervals are either disjoint or one is contained in the other, then without loss of generality there exist $x^d,\wt{x}^d$ such that $\bi{\m{E}(syx^d)}\subseteq \bi{\m{E}(sz\wt{x}^d)}$, i.e., such that $\m{E}(sz\wt{x}^d)\preceq \m{E}(syx^d)$. Since $\delta^\m{E}(\cdot,\cdot) \leq d$, it must be that $sz\preceq syx^d$ , in contradiction. This verifies the disjoint nesting property.

By the consistency property, $\m{I}^\m{E}(s)\subseteq \bi{\m{E}(s)}$. Suppose that there exists a binary interval $\bi{b}$ such that $\m{I}^\m{E}(s)\subseteq\bi{b}\subset \bi{\m{E}(s)}$. Then $\m{E}(s)\prec b \preceq \m{E}(sx^d)$ for any $x^d\in\m{X}^d$, and hence by the integrity property it must be that $b\preceq \m{E}(s)$, in contradiction. Hence $\mbi{\m{I}^\m{E}(s)} = \bi{\m{E}(s)}$ for any $s\in\m{X}^*$, verifying the minimality property.
\end{proof}

\begin{proof}[Proof of Lemma \ref{lem:redundancy_bounds}]
\begin{enumerate}[(i)]
\item 
\begin{align*}
\mf{R}^{\m{E}}_n(P)  &= \bar{L}^\m{E}_n-H(P)
\\
&= \frac{1}{n}\Expt\left(-\log\left|\mbi{\m{I}^\m{E}(X^n)}\right|\right) - H(P)
\\
&\leq \frac{1}{n}\left(\Expt\left(-\log\mu^\m{E}(X^n)\right) - H(P^n)\right)
\\
& = \frac{1}{n}\sum_{x^n\in\m{X}^n}P(x^n)\log\left(\frac{P(x^n)}{\mu^\m{E}(x^n)}\right)
\\
&= R_n^\m{E}(P)
\end{align*}

\item Consider the generalized interval mapping representation of $\m{E}$ given in Lemma \ref{lem:gim_rep}. This representation satisfies $\m{I}^\m{E}(x^{n+d}) \subseteq \m{I}^\m{E}(x^n)$. Thus similarly to the above:
\begin{align*}
\mf{R}^{\m{E}}_n(P) &= \frac{1}{n}\Expt\left(-\log\left|\mbi{\m{I}^\m{E}(X^n)}\right|\right) - H(P)
\\
&\geq \frac{1}{n}\left(\Expt\left(-\log\mu^\m{E}(X^{n+d})\right) - \frac{n}{n+d}H(P^{n+d})\right)
\\
&= \left(\frac{n+d}{n}\right)R_{n+d}^\m{E}(P) + \frac{d}{n} H(P)
\end{align*}

\item For any fixed $d\in\NaturalF$,
\begin{align*}
&\frac{1}{nd}\sum_{k=1}^{n}\Expt r_d(X^k)
\\
&\quad= -H(P) + \Expt\left(\frac{1}{nd}\sum_{k=1}^n\log\frac{\mu^\m{E}_{k}(X^k)}{\mu^\m{E}_{k+d}(X^{k+d})}\right)
\\
&\quad= -H(P) + \frac{1}{nd}\sum_{k=1}^d\Expt
\log\mu^\m{E}_{k}(X^k)
\\
&\quad\;\;\qquad\qquad - \frac{1}{nd}\sum_{k=1}^d\Expt\log\mu^\m{E}_{n+k}(X^{n+k})
\\
&\quad\leq O(n^{-1})- H(P) - \frac{1}{n}\Expt\log\mu^\m{E}_{n+d}(X^{n+d})
\\
&\quad = O(n^{-1}) + \left(\frac{n+d}{n}\right)R^\m{E}_{n+d} + \frac{d}{n}\,H(P)
\\
&\quad \leq \mf{R}^\m{E}_n + O(n^{-1})
\end{align*}
Similarly,
\begin{align*}
\frac{1}{nd}\sum_{k=1}^{n}\Expt r_d(X^k) &\geq  O(n^{-1})- H(P) - \frac{1}{n}\Expt\log\mu^\m{E}_n(X^n)
\\
&=R^\m{E}_n + O(n^{-1}) \geq \mf{R}^\m{E}_n + O(n^{-1})
\end{align*}
\end{enumerate}
\end{proof}

\begin{proof}[Proof of Lemma \ref{lem:redundancy_bounds_known}]
We only need to prove (i). An arithmetic encoder matched to the source $P$ is well known to achieve zero asymptotic redundancy \cite{arithmetic_coding_jelinek}, and a bounded expected delay \cite{gallager_lecture_notes,arithmetic_coding_savari,ShayevitzDelay2006}. Therefore \begin{equation*}
\inf_{\m{E}\in\mf{L}(P)}\overline{\mf{R}}^\m{E}(P)\leq \inf_{\m{E}\in\mf{B}(P)}\overline{\mf{R}}^\m{E}(P) \leq 0
\end{equation*}

Let $\m{E}\in\mf{L}(P)$. Define $B_d$ to be the set of all suffixes that allow decoding of any prefix with delay at most $d$, i.e.,
\begin{equation*}
B_d \dfn \{y^\infty\in\m{X}^\infty : \delta^\m{E}(s,y^\infty) \leq d\;, \forall s\in\m{X}^*\}
\end{equation*}
The lossless property implies that for any $\eps>0$ there exists $d$ large enough such that
\begin{equation}
P(B_d) \geq 1-\eps
\end{equation}
Define $\bar{B}_d$ to be the set of all prefixes in $B_d$, i.e.,
\begin{equation*}
\bar{B}_d \dfn \{z^d\in\m{X}^d : z^d\prec y^\infty \in B_d\}
\end{equation*}
Note that by the very definition of $B_d$, each prefix in $\bar{B}_d$ must appear in $B_d$ with all possible suffixes. Therefore, $P(\bar{B}_d) = P(B_d) \geq 1-\eps$ for $d$ large enough. Furthermore the lossless property also implies that for any $z^d\in\bar{B}_d$, the BV codebook $C_{z^d}:\m{X}^n\mapsto\{0,1\}^*$ defined by
\begin{equation}
C_{z^d}(x^n) \dfn \m{E}(x^nz^d)
\end{equation}
is a prefix-free lossless codebook, and hence must satisfy $\Expt|C_{z^d}(X^n)|\geq nH(P)$. Write:
\begin{align*}
\bar{L}_{n+d}^\m{E}(P)  &= \frac{1}{n+d}\sum_{z^d\in\m{X}^d}P(z^d)\sum_{x^n\in\m{X}^n}P(x^n)|\m{E}(x^nz^d)|
\\
& \geq \frac{1}{n+d}\sum_{z^d\in\bar{B}_d}P(z^d)\sum_{x^n\in\m{X}^n}P(x^n)|\m{E}(x^nz^d)|
\\
&\geq \frac{1}{n+d}\sum_{z^d\in\bar{B}_d}P(z^d)\Expt|C_{z^d}(X^n)|
\\
&\geq \frac{1}{n+d}\cdot P(\bar{B}_d)\cdot nH(P) \geq \frac{(1-\eps)n}{n+d}\,H(P)
\end{align*}
Therefore,
\begin{align*}
\underline{\mf{R}}^\m{E} &= \liminf_{n\rightarrow\infty}\mf{R}^\m{E}_{n+d}(P) \geq \lim_{n\rightarrow\infty}\left(\frac{(1-\eps)n}{n+d}-1\right)H(P)
\\
& = -\eps H(P)
\end{align*}
This holds for any $\eps>0$, hence $\underline{\mf{R}}^\m{E}\geq 0$.
\end{proof}

\begin{proof}[Proof of Lemma \ref{lem:infsup_coincide}]
Let $\m{E}\in\mf{C}_d$, and set any $\eps>0$. We show that there exists another encoder $\m{E}'\in\mf{C}_d$ such that
\begin{equation*}
\overline{\mf{R}}^{\m{E}'}(P) \leq \underline{\mf{R}}^\m{E}(P)+\eps
\end{equation*}
which immediately establishes the Lemma. The encoder $\m{E}'$ will be constructed by properly terminating $\m{E}$. Set $n$ large enough such that both
\begin{equation}\label{eq:n_min}
n > d+\min\{d,\frac{2d\underline{\mf{R}}^\m{E}(P)}{\eps}\}
\end{equation}
and
\begin{equation}\label{eq:eps4}
\mf{R}_n^\m{E}(P) \leq \underline{\mf{R}}^\m{E}(P)+\eps\slash 4
\end{equation}
For any $x^{n-d}\in\m{X}^{n-d}$, define
\begin{equation*}
y^d(x^{n-d})\dfn \argmin_{z^d\in\m{X}^d}\{|\m{E}(x^{n-d}z^d)|\}
\end{equation*}
namely, $y^d(x^{n-d})$ is the suffix that results in the minimal codelength after having encoded $x^{n-d}$. Clearly,
\begin{equation}\label{eq:shorter_codelen}
n^{-1}\Expt|\m{E}(X^{n-d}y^d(X^{n-d}))| \leq \bar{L}_n^\m{E}(P)
\end{equation}

Construct the new encoder $\m{E}'$ as follows. For any $k<n-d$, let $\m{E}'(x^k) = \m{E}(x^k)$, and let $\m{E}'(x^{n-d}) = \m{E}(x^{n-d}y^d(x^{n-d}))$. For $k>n-d$, divide $x^k$ into blocks of equal size $n-d$ (with the last one possibly shorter), apply the rule above to each separately, and let $\m{E}'(x^k)$ be the concatenation thereof. Using (\ref{eq:shorter_codelen}), we have
\begin{align*}
\mf{R}_{n-d}^{\m{E}'}(P) &= (n-d)^{-1}\Expt|\m{E}'(X^{n-d})|-H(P)
\\
&\stackrel{({\rm a})}{\leq} \frac{n}{n-d}\bar{L}_n^\m{E}(P) - H(P) \leq  \frac{n}{n-d}\mf{R}_n^{\m{E}}(P)
\\
&\stackrel{({\rm b})}{\leq} \underline{\mf{R}}^\m{E}(P) + \left(\frac{d}{n-d}\underline{\mf{R}}^\m{E}(P)+\frac{n}{n-d}\cdot \eps\slash 4\right)
\\
&\stackrel{({\rm c})}{\leq} \underline{\mf{R}}^\m{E}(P) + \eps
\end{align*}
where (a) follows from (\ref{eq:shorter_codelen}), (b) follows from (\ref{eq:eps4}), and (c) follows from the assumption (\ref{eq:n_min}). Now, from the concatenated construction we have that for any $m>n-d$
\begin{align*}
\mf{R}_m^{\m{E}'}(P) &\leq \frac{\lceil m\slash (n-d)\rceil}{m}\cdot(n-d)\cdot\mf{R}_{n-d}^{\m{E}'}(P)
\\
&\leq \frac{m+n-d}{m}\left(\underline{\mf{R}}^\m{E}(P) + \eps\right)
\end{align*}
and hence
\begin{align*}
\overline{\mf{R}}^{\m{E}'}(P) &= \limsup_{m\rightarrow\infty}\mf{R}_m^{\m{E}'}(P) \leq \underline{\mf{R}}^\m{E}(P) + \eps
\end{align*}
as desired.
\end{proof}

\begin{proof}[Proof of Lemma \ref{lem:delta_set}]
It is easy to see that the number of t-left-adjacents of $p$ that are larger than $a+\delta$ is the number of ones in the binary expansion of $(p-a)$ up to resolution $\delta$. Similarly, the number of t-right-adjacents of $p$ that are smaller than $b-\delta$ is the number of ones in the binary expansion of $(b-p)$ up to resolution $\delta$. Defining $\lceil x \rceil^+ \dfn \max (\lceil x \rceil,0 ) $, we get:
\begin{eqnarray*}
\nonumber |S_\delta(I,p)| &\leq &
\lceil\log{\frac{p-a}{\delta}}\rceil^+ +
\lceil\log{\frac{b-p}{\delta}}\rceil^+ \\ \nonumber &\leq& \left
\{\begin{array}{ll}2+\log \frac{(p-a)(b-p)}{\delta^2} &
\,,\,\delta < p-a,b-p \\ 1+ \log{\frac{|b-a|}{\delta}} & \,,\,
o.w.
\end{array}\right.
\\ &\leq& 1+ 2\log{\frac{|b-a|}{\delta}}
\end{eqnarray*}
\end{proof}

\begin{proof}[Proof of Lemma \ref{lem:redundancy_left}]
Let $I=\m{I}^\m{E}(x^n)$ throughout the proof. Let
\begin{equation*}
z^d \dfn \argmin_{y^d\in\m{Y}^d} \mu^\m{E}_d(y^d|x^n)
\end{equation*}
and let $\gamma\dfn \mu^\m{E}_d(z^d|x^n)$. If $\gamma<\delta_I$, then $z^d$ has been assigned with a measure at least four times smaller than its probability $P(z^d)$. The $d$-instantaneous redundancy can be lower bounded as follows:
\begin{align*}
\nonumber r_d(x^n) &= D(P^d\|\mu_d(\cdot|x^n)) \stackrel{({\rm a})}{\geq} D(P(z^d)\|\gamma) \stackrel{({\rm b})}{\geq} D(\pmin^d\|\gamma)
\\
\nonumber&= \pmin^d\log\frac{\pmin^d}{\gamma} + (1-\pmin^d)\log\frac{1-\pmin^d}{1-\gamma}
\\
&\stackrel{({\rm c})}{\geq} 2\pmin^d - (1-\pmin^d)\frac{\pmin^d}{1-\pmin^d} = \pmin^d \geq \delta_I
\end{align*}
In (a) we have used the data processing inequality for the divergence\footnote{Recall that $\mu_d(\cdot|x^n)$ sums to at most unity, hence can be complemented to a probability distribution by adding an auxiliary symbol $\omega$ to $\m{X}^d$ and defining $P^d(\omega) = 0$.}. In (b) we have used the fact that $\gamma<\pmin^d\leq P(z^d)$ together with the monotonicity of the scalar relative entropy. In (c) we have used $\log(1-p)\geq -\frac{p}{1-p}$ for $0<p<1$.

If on the other hand $\gamma \geq \delta_I$, then all of the $d$-fold alphabet has been assigned to a measure at most $1-\delta_I$ which results in a $d$-instantaneous redundancy lower bounded by
\begin{align*}
r_d(x^n) \geq \log\frac{1}{1-\delta_{I}} \geq \delta_I\log{e} \geq \delta_I
\end{align*}
\vspace{-0.0cm}
\end{proof}

\begin{proof}[Proof of Lemma \ref{lem:regular}]
Note that $\m{C}_{m,\ell+1}\subset \m{C}_{m,\ell}$ and $\m{V}_{m,\ell+1}\supset \m{V}_{m,\ell}$, hence $\m{D}^{(2)}_{m,\ell+} \subset \m{D}^{(2)}_{m,\ell}$.  By (\ref{eq:conv_size}), we have that for any $\mu_0>3$
\begin{align*}
\lim_{m_0\rightarrow\infty}\left|\bigcup_{\mu\geq \mu_0}\bigcup_{m=m_0}^\infty\m{D}^{(2)}_{m,\lceil\mu m\rceil}\right| &= \lim_{m_0\rightarrow\infty}\left|\bigcup_{m=m_0}^\infty\m{D}^{(2)}_{m,\lceil\mu_0 m\rceil}\right|
\\
&\leq \lim_{m_0\rightarrow\infty}\sum_{m=m_0}^\infty 2^{m(3-\mu_0)+6}
\\
&= \lim_{m_0\rightarrow\infty}\frac{2^{m_0(3-\mu_0)+6}}{1-2^{3-\mu_0}} = 0
\end{align*}
The statement of the lemma follows easily.
\end{proof}

\begin{proof}[Proof of Lemma \ref{lem:regular2}]
We will assume hereinafter that $\eps<\frac{1}{2}\pmin$. Let $y,z$ be the symbols attaining $\lambda$, and define a transformation $\sigma:\ms{P}^d(\m{X})\mapsto\ms{P}^d(\m{X})$ on types:
\begin{equation}
\sigma(Q)(x) = \left\{\begin{array}{lrcl}Q(x) & x\not\in\{y,z\} & \vee & Q(y)=0\\ Q(x)-d^{-1} & x=y & \wedge & Q(y) > 0\\ Q(x)+d^{-1} & x=z & \wedge & Q(y) > 0\end{array}\right.
\end{equation}
Namely, $\sigma$ exchanges one appearance of $y$ with the appearance of $z$ as long as this is possible, i.e., as long as $Q(y)>0$. Now, suppose $d> \frac{m_0}{\log(1\slash\pmin)}$ so that (\ref{eq:source_lambda}) is satisfied. Noting that the set $A^d_{\alpha,\beta}$ is a union of type classes, let $Q\in\ms{P}^d_\eps(\m{X},P)$ be a type such that $T_Q\cap A^d_{\alpha,\beta} = \emptyset$. Clearly $\sigma(Q)\neq Q$, and for any $x^d\in T_Q$ and $\wt{x}^d\in T_{\sigma(Q)}$,
\begin{equation*}
\frc{-\log{P(\wt{x}^d)}} = \frc{-\log{P(x^d)}+ \lambda}
\end{equation*}
Now since $\lambda\not\in\m{D}^{(2)}_{m,\lceil\mu m\rceil}$ for any $m\geq m_0$ and $\mu>\mu_0$, and since $\beta\slash \alpha >\mu_0$, then $\lambda\not\in\m{D}^{(2)}_{\lceil\alpha d\rceil,\lceil\beta d\rceil}$. Recalling the definition of $\m{D}^{(2)}_{\lceil\alpha d\rceil,\lceil\beta d\rceil}$ and appealing to Lemma \ref{lem:diff_set}, we have that $\frc{-\log{P(\wt{x}^d)}}\not\in\m{D}^{(1)}_{\lceil \alpha d\rceil,\lceil\beta d\rceil}$, hence we conclude that $\sigma(Q)\in A^d_{\alpha,\beta}$. Therefore, since $\sigma$ is one-to-one when restricted to $\ms{P}^d_\eps(\m{X},P)$, then $\sigma$ uniquely matches any type in $\ms{P}^d_\eps(\m{X},P)$ that is outside $A^d_{\alpha,\beta}$, to a type that is inside $A^d_{\alpha,\beta}$.

Let us now get a handle on the variation in the probability of a type class incurred by applying $\sigma$. It is easy to check that for any $Q\in\ms{P}^d_\eps(\m{X},P)$, and $n$ large enough,
\begin{align*}
P(T_{\sigma(Q)}) &\geq P(T_Q)\left(\frac{(P(y)-\eps)d}{(P(z)+\eps)d+1}\right)\left(\frac{P(z)}{P(y)}\right)
\\
& \geq  P(T_Q)\left(1-\frac{\eps}{P(y)}\right)\left(1-\frac{\eps+d^{-1}}{P(z)}\right)
\\
& = P(T_Q)\left(1+O(\eps) + O(d^{-1})\right)
\end{align*}
Namely, the probability of a type class for a type $Q\in\ms{P}^d_\eps(\m{X},P)$ under $P$, remains almost the same after applying $\sigma$. Therefore:
\begin{align*}
&1-P(A^d_{\alpha,\beta})
\\
&\quad\leq P\left(\bigcup_{Q\not\in\ms{P}^d_\eps(\m{X},P)}T_Q\right) + \sum_{Q\in\ms{P}^d_\eps(\m{X},P) : T_Q\cap A^d_{\alpha,\beta} = \emptyset} P(T_Q)
\\
&\quad\leq o(1) + \sum_{Q\in\ms{P}^d_\eps(\m{X},P), T_Q\cap A^d_{\alpha,\beta} = \emptyset} \frac{P(T_{\sigma(Q)})}{1+O(\eps)+O(d^{-1})}
\\
&\quad\leq o(1) + \sum_{Q : T_Q\subset A^d_{\alpha,\beta}} \frac{P(T_Q)}{1+O(\eps)+O(d^{-1})}
\\
&\quad = o(1) + \frac{P(A^d_{\alpha,\beta})}{1+O(\eps)+O(d^{-1})}
\end{align*}
Where we have used the AEP (Lemma \ref{lem:types}) in the second inequality. The result now follows by rearranging the terms above, taking the limit as $d\rightarrow\infty$, and noting that $\eps>0$ can be taken to be arbitrarily small.
\end{proof}

\section*{Acknowledgments}
We would like to thank Yuriy Reznik for pointing out Khodak's paper. We are also grateful to the anonymous reviewers for their insightful comments and suggestions that have helped improve the presentation of the paper.

\bibliographystyle{IEEEbib}
\bibliography{C:/Work/latex/ofer_refs_master}

\end{document}